\documentclass[10pt,a4paper]{amsart}
\linespread{1.1}
\usepackage[margin=3.3cm]{geometry}
\usepackage{amssymb}
\usepackage[mathscr]{euscript}
\usepackage{enumerate}
\usepackage{xspace}
\usepackage{color}
\usepackage{enumerate}
\usepackage{enumitem}
\usepackage{amsthm}

\usepackage{hyperref}
\usepackage{natbib}
\DeclareMathAlphabet{\mathpzc}{OT1}{pzc}{m}{it}

\begin{document}

\theoremstyle{plain}
\newtheorem{theorem}{Theorem}[section]
\newtheorem{condition}{Condition}[section]
\newtheorem{lemma}{Lemma}[section]
\newtheorem{proposition}{Proposition}[section]
\newtheorem{corollary}{Corollary}[section]
\newtheorem{definition}{Definition}[section]
\newtheorem{Ass}{Assumption}[section]
\newcommand{\q}{Q}
\theoremstyle{definition}
\newtheorem{remark}{Remark}[section]
\newtheorem{example}{Example}[section]
\newtheorem{assumption}{Assumption}[section]
\newcommand{\notiz}{\textup} %{\relax}
\newtheorem{Case}{Case}[section]
\newcommand{\lebesgue}{\ensuremath{\lambda\!\!\lambda}}
\renewenvironment{proof}{{\parindent 0pt \it{ Proof:}}}{\mbox{}\hfill\mbox{$\Box\hspace{-0.5mm}$}\vskip 16pt}
\newenvironment{proofthm}[1]{{\parindent 0pt \it Proof of Theorem #1:}}{\mbox{}\hfill\mbox{$\Box\hspace{-0.5mm}$}\vskip 16pt}
\newenvironment{prooflemma}[1]{{\parindent 0pt \it Proof of Lemma #1:}}{\mbox{}\hfill\mbox{$\Box\hspace{-0.5mm}$}\vskip 16pt}
\newenvironment{proofcor}[1]{{\parindent 0pt \it Proof of Corollary #1:}}{\mbox{}\hfill\mbox{$\Box\hspace{-0.5mm}$}\vskip 16pt}
\newenvironment{proofprop}[1]{{\parindent 0pt \it Proof of Proposition #1:}}{\mbox{}\hfill\mbox{$\Box\hspace{-0.5mm}$}\vskip 16pt}

\newcommand{\pK}{\p^{\mathcal{K}}}
\newcommand{\Law}{\ensuremath{\mathop{\mathrm{Law}}}}
\newcommand{\loc}{{\mathrm{loc}}}
\newcommand{\Log}{\ensuremath{\mathop{\mathcal{L}\mathrm{og}}}}
\newcommand{\Meixner}{\ensuremath{\mathop{\mathrm{Meixner}}}}
\newcommand{\of}{[\hspace{-0.06cm}[}
\newcommand{\gs}{]\hspace{-0.06cm}]}
\newcommand{\A}{\mathbf{A}}
\newcommand{\B}{\mathbf{B}}

\let\MID\mid
\renewcommand{\mid}{|}

\let\SETMINUS\setminus
\renewcommand{\setminus}{\backslash}

\def\stackrelboth#1#2#3{\mathrel{\mathop{#2}\limits^{#1}_{#3}}}

\renewcommand{\theequation}{\thesection.\arabic{equation}}
\numberwithin{equation}{section}

\newcommand\llambda{{\mathchoice
      {\lambda\mkern-4.5mu{\raisebox{.4ex}{\scriptsize$\backslash$}}}
      {\lambda\mkern-4.83mu{\raisebox{.4ex}{\scriptsize$\backslash$}}}
      {\lambda\mkern-4.5mu{\raisebox{.2ex}{\footnotesize$\scriptscriptstyle\backslash$}}}
      {\lambda\mkern-5.0mu{\raisebox{.2ex}{\tiny$\scriptscriptstyle\backslash$}}}}}

\newcommand{\prozess}[1][L]{{\ensuremath{#1=(#1_t)_{0\le t\le T}}}\xspace}
\newcommand{\prazess}[1][L]{{\ensuremath{#1=(#1_t)_{0\le t\le T^*}}}\xspace}
\newcommand{\pH}{\p^{\mathcal{H}}}
\newcommand{\tr}{\operatorname{tr}}
\newcommand{\lijepoa}{{\mathcal{A}}}
\newcommand{\lijepob}{{\mathcal{B}}}
\newcommand{\lijepoc}{{\mathcal{C}}}
\newcommand{\lijepod}{{\mathcal{D}}}
\newcommand{\lijepoe}{{\mathcal{E}}}
\newcommand{\lijepof}{{\mathcal{F}}}
\newcommand{\lijepog}{{\mathcal{G}}}
\newcommand{\lijepok}{{\mathcal{K}}}
\newcommand{\lijepoo}{{\mathcal{O}}}
\newcommand{\lijepop}{{\mathcal{P}}}
\newcommand{\lijepoh}{{\mathcal{H}}}
\newcommand{\lijepom}{{\mathcal{M}}}
\newcommand{\lijepou}{{\mathcal{U}}}
\newcommand{\lijepov}{{\mathcal{V}}}
\newcommand{\lijepoy}{{\mathcal{Y}}}
\newcommand{\cF}{{\mathcal{F}}}
\newcommand{\cG}{{\mathcal{G}}}
\newcommand{\cH}{{\mathcal{H}}}
\newcommand{\cM}{{\mathcal{M}}}
\newcommand{\cD}{{\mathcal{D}}}
\newcommand{\bD}{{\mathbb{D}}}
\newcommand{\bF}{{\mathbb{F}}}
\newcommand{\bG}{{\mathbb{G}}}
\newcommand{\bH}{{\mathbb{H}}}
\newcommand{\dd}{\operatorname{d}\hspace{-0.055cm}}
\newcommand{\ddd}{\operatorname{d}}
\newcommand{\er}{{\mathbb{R}}}
\newcommand{\ce}{{\mathbb{C}}}
\newcommand{\erd}{{\mathbb{R}^{d}}}
\newcommand{\en}{{\mathbb{N}}}
\newcommand{\de}{{\mathrm{d}}}
\newcommand{\im}{{\mathrm{i}}}
\newcommand{\set}[1]{\ensuremath{\left\{#1\right\}}}
\newcommand{\indik}{{\mathbf{1}}}
\newcommand{\D}{{\mathbb{D}}}
\newcommand{\E}{{\mathbf{E}}}
\newcommand{\N}{{\mathbb{N}}}
\newcommand{\Q}{{\mathbb{Q}}}
\renewcommand{\P}{{\mathbb{P}}}
\newcommand{\ud}{\operatorname{d}\!}
\newcommand{\ii}{\operatorname{i}\kern -0.8pt}
\newcommand{\Var}{\operatorname{Var }\,}
\newcommand{\dt}{\operatorname{d}\!t}   %\newcommand{\dt}{\mbox{d$t$}}
\newcommand{\ds}{\operatorname{d}\!s}   %{\mbox{d$s$}}
\newcommand{\dy}{\operatorname{d}\!y }    %{\mbox{d$y$}}
\newcommand{\du}{\operatorname{d}\!u}  %{\mbox{d$u$}
\newcommand{\dv}{\operatorname{d}\!v}   %{\mbox{d$v$}}
\newcommand{\dx}{\operatorname{d}\!x}   %{\mbox{d$x$}}
\newcommand{\dq}{\operatorname{d}\!q}   %{\mbox{d$q$}}
\newcommand{\cadlag}{c\`adl\`ag }
\newcommand{\p}%{{\operatorname{P}}} %
{P}
\newcommand{\F}{\mathbf{F}}
\newcommand{\1}{\mathbf{1}}
\newcommand{\f}{\mathcal{F}^{\hspace{0.03cm}0}}
\newcommand{\BF}{B^{X, \F}(h)}
\newcommand{\CF}{C^{X, \F}}
\newcommand{\nuF}{\nu^{X, \F}}
\newcommand{\BG}{B^{X, \G}(h)}
\newcommand{\CG}{C^{X, \G}}
\newcommand{\nuG}{\nu^{X, \G}}
\newcommand{\BK}{B^{X, \mathbf{K}}(h)}
\newcommand{\CK}{C^{X, \mathbf{K}}}
\newcommand{\nuK}{\nu^{X, \mathbf{K}}}
\newcommand{\cf}{c^{X, \F}}
\newcommand{\G}{\mathfrak{G}}
\newcommand{\M}{\mathcal{M}^{\textup{sp}}}
\newcommand{\K}{\mathbb{K}}
\def\EM{\ensuremath{(\mathbb{EM})}\xspace}
\newcommand{\Cm}{\mathcal{C}_{b, i}^2(\mathbb{R})}
\newcommand{\Cb}{\mathcal{C}_b^2(\mathbb{R})}
\newcommand{\la}{\langle}
\newcommand{\ra}{\rangle}
\newcommand{\uX}{\underline{Z}}
\newcommand{\oX}{\overline{Z}}
\newcommand{\oRd}{|\mathbb{R}^d}
\newcommand{\Norml}[1]{%
{|}\kern-.25ex{|}\kern-.25ex{|}#1{|}\kern-.25ex{|}\kern-.25ex{|}}

\newcommand{\lle}{\langle\hspace{-0.085cm}\langle}
\newcommand{\rre}{\rangle\hspace{-0.085cm}\rangle}

\title[Deterministic Criteria for the Absence and Existence of Arbitrage]{Deterministic Criteria for the Absence and Existence of Arbitrage in Multi-Dimensional Diffusion Markets}

\author[D. Criens]{David Criens}
\address{D. Criens - Technical University of Munich, Department of Mathematics, Germany}
\email{david.criens@tum.de}
\thanks{D. Criens - Technical University of Munich, Department of Mathematics, Germany,  \texttt{david.criens@tum.de}.}
\keywords{Arbitrage, Equivalent (Local) Martingale Measure, Multi-Dimensional Time-Inhomogeneous Diffusion, Martingale Problem}

\subjclass[2010]{60G44, 60H10, 60J60, 91B70}

\date{\today}
\maketitle

\frenchspacing
\pagestyle{myheadings}

\begin{abstract}
	We derive deterministic criteria for the existence and non-existence of equivalent (local) martingale measures for financial markets driven by multi-dimensional time-inhomogeneous diffusions.
Our conditions can be used to construct financial markets in which the \emph{no unbounded profit with bounded risk} condition holds, while the classical \emph{no free lunch with vanishing risk} condition fails. 
\end{abstract}

\section{Introduction}
The question when a financial model is free of arbitrage is typically asked for each financial model individually. Our goal is to provide a systematic discussion for exponential models driven by multi-dimensional time-inhomogeneous diffusions.

We explain our setting in more detail. For a fixed finite time horizon \(T > 0\), let
\(X = (X_t)_{t \in [0, T]}\) be the coordinate process on the path-space \(\Omega\) of continuous functions \([0, T] \to \mathbb{R}^d\) and let \(\mathcal{F}\) be the \(\sigma\)-field generated by \(X\).
The real-world measure \(P\) of our financial market is a probability measure on \((\Omega, \mathcal{F})\) such that \(X\) is a diffusion parameterized by a drift coefficient \(b\) and a diffusion coefficient \(a\). Below, we will formally introduce \(P\) as a solution to a martingale problem (MP). 
Our financial market consists of \(m \leq d\) risky assets. Each of them is modeled as a discounted asset process \((S^i_t)_{t \in [0, T]}\) with dynamics
\begin{align}\label{eq: DPP}
\dd S^i_t = S^i_t  \langle e_i, \dd X_t\rangle
\end{align}
and deterministic initial value \(S_0^i > 0\).
Here, \(e_i\) is the \(i\)-th unit vector and \(\la \cdot, \cdot \ra\) is the Euclidean scalar product. Since \(m<d\) is possible, our setting includes incomplete markets and stochastic volatility models. 

For this financial market the classical concepts of no-arbitrage are the notion of \emph{no free lunch with vanishing risk (NFLVR)} as defined by \cite{DelbaenSchachermayer94, DelbaenSchachermayer98} and the notion \emph{no generalized arbitrage (NGA)} as defined by \cite{Cherny2007} and \cite{Yan97}.
The difference between (NFLVR) and (NGA) is captured by the concept of a \emph{financial bubble} in the sense of \cite{Cox2005}.
More precisely, a financial bubble exists if (NFLVR) holds while (NGA) fails. 
In diffusion markets it is well-known that (NFLVR) is equivalent to the existence of an \emph{equivalent local martingale measure (ELMM)}, see \cite{DelbaenSchachermayer94, DelbaenSchachermayer98}, and that (NGA) is equivalent to the existence of an \emph{equivalent martingale measure (EMM)}, see \cite{Cherny2007}. Consequently, a financial bubble exits if there is an ELMM but no EMM.

The no arbitrage condition used in the stochastic portfolio theory of \cite{fernholz2002stochastic} is \emph{no relative arbitrage (NRA)}.
In complete settings, (NRA) is equivalent to the existence of a \emph{strict martingale density (SMD)}, see \cite{fernholz2010}.

More recently, various types of weaker notions of no arbitrage are introduced and studied, see the article of \cite{doi:10.1142/S0219024915500053} for an overview.
One of these weak notions is the \emph{no unbounded profit with bounded risk (NUPBR)} condition. It is known that (NUPBR) is equivalent to the existence of a \emph{strict local martingale density (SLMD)}, see \cite{Choulli1996}.

In our diffusion market, the existence of a SLMD is equivalent to a mild drift condition.
On the foundation of this drift condition and a local Novikov condition, the contributions of this article are deterministic criteria for the existence and non-existence of E(L)MMs and SMDs. Our criteria are typically weaker than classical Novikov-type conditions and easy to verify.
In particular, our results can be used to construct financial markets which include financial bubbles and financial markets in which (NUPBR) holds, while (NFLVR) fails.
Latter receive recently more and more attention, see, for an example in a diffusion setting, the discussion of hedging under (NUPBR) given by \cite{MAFI:MAFI502}.

Let us comment on existing literature.
The absence of arbitrage for a class of diffusion models was for instance studied by \cite{MAFI:MAFI530}, \cite{Delbaen2002} and \cite{Mijatovic2012}.
\cite{MAFI:MAFI530} works in a market driven by a continuous It\^o process. He shows that (NFLVR) is determined by the equivalence of a probability measure to the Wiener measure. 
The spirit of the work of \cite{Delbaen2002} and \cite{Mijatovic2012} is very similar and close to ours. 
Their goal is to derive deterministic criteria which are suitable for applications.
However, they work in a one-dimensional setting, while we are particularly interested in multi-dimensional cases, which differ significantly from the one-dimensional one. We illustrate this by constructing a type of financial market which allows for arbitrage opportunities only if it features more than three risky assets. 
This is particularly surprising when noting that the sources of risk in this market have the same dimension. 
Moreover, our setting includes cases where the MP corresponding to the real-world measure may have more than one solution and cases with more than one candidate for an E(L)MM. In the setting of \cite{Delbaen2002}, respectively the setting of \cite{Mijatovic2012}, the real-world measure is unique and a single candidate for an E(L)MM exists. We give an example where not all candidates are E(L)MMs and an example where the existence of an E(L)MM may depend on the choice of the real-world measure, see Examples \ref{exp: Gir} and \ref{example: dependence P} below.

This article is structured as follows. In Section \ref{section 2} we recall the definitions of MPs and E(L)MMs. In Section \ref{sec: ex} respectively Section \ref{sec: nonex} we give deterministic conditions for the existence respectively non-existence of E(L)MMs and SMDs. In particular, in Section \ref{sec: SLMD} we discuss a drift condition which is equivalent to the existence of a SLMD.
We present an application in Section \ref{section 4}.
Finally, in Section \ref{sec:LMU} we comment on the results of \cite{Mijatovic2012} and \cite{MAFI:MAFI530}. For a comparison of the results of \cite{Mijatovic2012} and \cite{Delbaen2002} we refer to Remark 3.2 in \cite{Mijatovic2012}.

To keep this article self-contained, we have attached appendices which include almost all classical results from stochastic analysis which are used in our proofs. 

\section{The Financial Market}\label{section 2}
We fix a finite time horizon \(T > 0\).
Let \(\Omega\) be the space of continuous function \([0,T] \to \mathbb{R}^d\) for a fixed dimension \(d \in \mathbb{N}\).
The coordinate process \(X = (X_t)_{t \in [0, T]}\) on \(\Omega\) is given by \(X_t(\omega) = \omega(t)\). 
Moreover, we define by \(\mathcal{F}\triangleq \sigma(X_t, t \in [0, T])\) a \(\sigma\)-field on \(\Omega\) and equip \(\Omega\) with the uniform topology. It is well-known that \(\mathcal{F}\) is the Borel \(\sigma\)-field on \(\Omega\).

We set \(\mathcal{F}_t \triangleq \sigma(X_s, s \in [0, t])\) for \(t \in [0, T]\) and \(\F \triangleq (\mathcal{F}_{t+})_{t \in [0, T]}\), where \(\mathcal{F}_{t+} \triangleq \bigcap_{s \in (t, T]} \mathcal{F}_s\). %In this article we do not work under the usual hypothesis. We comment on this in Appendix \ref{App: comments mb}.
Our financial model will be parameterized by a triplet \((b, a, x_0)\), where 
\begin{enumerate}
	\item[(i)] \(b \colon [0, T] \times \mathbb{R}^d \to \mathbb{R}^d\) is a Borel function.
	\item[(ii)] \(a \colon [0, T] \times \mathbb{R}^d \to \mathbb{S}^d\) is a Borel function and \(\mathbb{S}^d\) is the set of all symmetric non-negative definite \(d\times d\) matrices.
	\item[(iii)] \(x_0 \in \mathbb{R}^d\).
\end{enumerate}
The parameter \(b\) corresponds to the drift, the parameter \(a\) corresponds to the diffusion part and \(x_0\) is the initial value.

The probabilistic concept underlying our model is a local version of the martingale problem as introduced by \cite{SV}, see also \cite{J79}. Let \(C^2(\mathbb{R}^d)\) be the set of all twice continuously differentiable functions \(\mathbb{R}^d \to \mathbb{R}\). For \(f \in C^2(\mathbb{R}^d)\) we set 
\begin{align}
\mathcal{K}^{b, a}_t f (x) \triangleq \la b(t, x), \nabla f (x)\ra + \frac{1}{2} \textup{trace } (a(t, x) \nabla^2 f(x)), 
\end{align}
where \(\nabla f\) denotes the gradient of \(f\) and \(\nabla^2 f\) denotes the Hessian matrix of \(f\).
\begin{definition}\label{def: MP}
	We call a probability measure \(P\) on \((\Omega, \mathcal{F})\) a solution to the martingale problem (MP) \((b, a, x_0)\), if 
	\(
	P  (X_0 = x_0) = 1
	\)
	and for all \(f \in C^2(\mathbb{R}^d)\) the process
	\begin{equation}\label{Mf}
	\begin{split}
	M^f_{\cdot, x_0} \triangleq f(&X_\cdot) - f(x_0) - \int_0^\cdot \mathcal{K}^{b, a}_s f(X_s) \dd s
%	\\&- \int_0^{\cdot} \left(\la b(s, X_s), \nabla f (X_s)\ra + \frac{1}{2} \textup{ trace } (a(s, X_s) \nabla^2 f(X_s)) \right)\dd s
	\end{split}
	\end{equation}
	is a local \((\F, P)\)-martingale. 
\end{definition}

From now on let \(P\) be a solution to the MP \((b, a, x_0)\). 
Our market is supposed to support \(\mathbb{N} \ni m \leq d\) risky assets. Each of them will be represented by a discounted asset price processes \(S^i = (S^i_t)_{t \in [0, T]}\).
Formally, for \(i = 1,..., m\), we set
\begin{align}
S^i_t \triangleq S_0^i\exp \left( \la e_i, X_t\ra - \la e_i, x_0\ra - \frac{1}{2} \int_0^{t} \la a(s, X_s) e_i, e_i\ra \dd s\right),\quad t \in [0, T],
\end{align}
for a deterministic initial value \(S_0^i > 0\).
This definition coincides with \eqref{eq: DPP}.

\begin{definition}
	We call a probability measure \(\q\) on \((\Omega, \mathcal{F})\) an \emph{equivalent (local) martingale measure (E(L)MM)}, if \(\q\sim\p\) and \(S \triangleq (S^1, ..., S^m)\) is a (local) \((\mathbf{F}, \q)\)-martingale.
\end{definition}
Thanks to the seminal work of \cite{DelbaenSchachermayer94}, the existence of an ELMM is equivalent to the (NFLVR) condition. Moreover, \cite{Cherny2007} showed that (NGA) is equivalent to the existence of an EMM.
\begin{definition}
	We call a strictly positive local \((\F, P)\)-martingale \(Z = (Z_t)_{t \in [0, T]}\) a \emph{strict (local) martingale density (S(L)MD)}, if the process \(Z S\) is a (local) \((\F, P)\)-martingale.
\end{definition}

%The \emph{strict} in this terminology refers to the positivity, by which we always mean strictly non-negative.
We stress that we use the convention that all local martingales have integrable initial values. 
\begin{remark}
	The existence of an E(L)MM implies the existence of a S(L)MD. Indeed, the density process of the E(L)MM is a S(L)MD, see Proposition III.3.4 in \cite{JS}. 
\end{remark}
The existence of a SLMD is equivalent to (NUPBR), see \cite{Choulli1996} and, in complete settings, the existence of a SMD is equivalent to (NRA), see \cite{fernholz2010}. For a profound discussion of weak notions of no arbitrage we refer to the article of \cite{doi:10.1142/S0219024915500053}.

\section{Conditions for the Absence of Arbitrage}\label{sec: ex}
The goal of this section is to derive deterministic conditions for the existence of E(L)MMs and SMDs. The section is divided into three parts. First, we give a recap of a classical drift condition which is equivalent to the existence of a SLMD, i.e. in particular necessary for the existence of an ELMM and a SMD, see Section \ref{sec: SLMD} below. Second, on the basis of the drift condition, 
we use non-explosion criteria to achieve our main goal and derive deterministic conditions for the existence of E(L)MMs and SMDs, see Section \ref{sec: det cond existence} below. In Section \ref{sec: comments existence} below we comment on the main results of this section and discuss examples. The proof of our main result is given in Section \ref{sec: pf main 1}.
\subsection{A Drift Condition}\label{sec: SLMD}

The following drift condition is well-known to be necessary for the existence of an ELMM, see, for instance, the article of \cite{protter}. 
To keep the article self-contained, we give a full proof.
As we will motivate below, it can be considered as a \emph{minimal} condition in the sense that it is equivalent to the existence of a SLMD.
\begin{lemma}\label{lem: MPR nece}
	If an ELMM exists, then there exists an \(\mathbb{R}^d\)-valued  \(\F\)-predictable process \(c = (c_t)_{t \in [0, T]}\) 
	such that \(P\)-a.s.
	\begin{align}\label{eq: c inte cond}
	\int_0^T \la a(s, X_s) c_s, c_s\ra \dd s < \infty,
	\end{align}
	and for all \(i = 1, ..., m\) and \(P\)-a.s. for all \(t \in [0,T]\)
	\begin{align}\label{eq: MPRE}
	\int_0^{t} \left \la e_i, b(s, X_s) + a(s, X_s) c_s\right \ra  \dd s= 0.
	\end{align}
\end{lemma}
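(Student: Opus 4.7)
The plan is to use Girsanov's theorem to track how the drift of $X$ transforms when we pass from $P$ to an ELMM $Q$, and then read off the drift-killing equation from the fact that the exponential processes $S^i$ must be $Q$-local martingales. The process $c$ in the statement will arise as the predictable integrand in the Kunita--Watanabe representation of the stochastic logarithm of the density process $dQ/dP$ with respect to the continuous local martingale part of $X$.

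First, by taking $f(x)=\langle e_i,x\rangle$ and $f(x)=x^ix^j$ in the definition of the MP, I obtain that $M_t \triangleq X_t - x_0 - \int_0^t b(s,X_s)\,ds$ is a continuous $\mathbb{R}^d$-valued local $(\mathbf{F},P)$-martingale with $\langle M^i,M^j\rangle_t=\int_0^t a^{ij}(s,X_s)\,ds$. Now let $Q$ be an ELMM with density process $Z$. Since $Z$ is a strictly positive $P$-local martingale with $Z_0=1$, it has a stochastic logarithm $N$, i.e.\ $Z=\mathcal{E}(N)$ for some local $(\mathbf{F},P)$-martingale $N$ with $N_0=0$. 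The Kunita--Watanabe decomposition of $N$ relative to the continuous vector martingale $M$ then yields a predictable process $c$ satisfying $\int_0^T \langle a(s,X_s)c_s,c_s\rangle\,ds<\infty$ $P$-a.s.\ together with a local martingale $L$ orthogonal to each $M^i$ such that $N = c\bullet M + L$. Consequently
\begin{equation*}
\langle M^i,N\rangle_t = \sum_{j=1}^d \int_0^t c^j_s\,d\langle M^i,M^j\rangle_s = \int_0^t \langle e_i, a(s,X_s)c_s\rangle\,ds.
\end{equation*}
This is exactly \eqref{eq: c inte cond}, so it remains to derive \eqref{eq: MPRE}.

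By Girsanov's theorem, $\widetilde M^i \triangleq M^i - \langle M^i,N\rangle$ is a local $(\mathbf{F},Q)$-martingale for each $i$, so under $Q$ we have the semimartingale decomposition
\begin{equation*}
X^i_t = x_0^i + \int_0^t \bigl(b^i(s,X_s) + \langle e_i, a(s,X_s)c_s\rangle\bigr)\,ds + \widetilde M^i_t.
\end{equation*}
On the other hand, a short Itô calculation on the definition of $S^i$ gives $S^i_t = S^i_0 + \int_0^t S^i_s\,dX^i_s$, because the quadratic variation term coming from the exponential exactly cancels the compensator $-\tfrac12\int_0^\cdot a^{ii}\,ds$ in the exponent. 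Combining these two facts, the continuous finite-variation part of $S^i$ under $Q$ equals $\int_0^t S^i_s\bigl(b^i(s,X_s)+\langle e_i,a(s,X_s)c_s\rangle\bigr)\,ds$. Since $S^i$ is a $Q$-local martingale and is continuous, this finite-variation piece must be indistinguishable from zero. Because $S^i>0$ and $Q\sim P$, this forces the integrand to vanish Lebesgue-a.e.\ $P$-a.s., and integrating produces \eqref{eq: MPRE}.

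The main technical obstacle I anticipate is the Kunita--Watanabe step: one needs to know that the integrand $c$ can indeed be chosen predictable with the exact integrability \eqref{eq: c inte cond}, which is slightly more than the classical $L^2(d\langle N,N\rangle)$ statement and requires working with the matrix-valued covariation $a(s,X_s)\,ds$ and inverting it on its range. All other steps (Girsanov, the Itô identity for $S^i$, and the vanishing of the finite-variation part of a continuous local martingale) are standard, and they are the kind of ``classical results from stochastic analysis'' the author has announced will be collected in the appendix.
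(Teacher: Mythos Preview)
Your proposal is correct and follows essentially the same route as the paper: Girsanov produces the predictable process \(c\) and the new drift \(b+ac\) under \(Q\), and then the finite-variation part of the (stochastic logarithm of the) local \(Q\)-martingale \(S^i\) is forced to vanish. The only cosmetic differences are that the paper packages the Kunita--Watanabe representation inside its statement of Girsanov's theorem (so the existence of \(c\) with \eqref{eq: c inte cond} comes out in one step) and works with \(\mathcal{L}(S^i)\) rather than \(S^i\), which removes the factor \(S^i_s\) from the drift and lets one read off \eqref{eq: MPRE} directly without the extra division by a positive process.
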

\begin{proof}
	Let \(i \in \{1, ..., m\}\) be fixed and take an ELMM \(Q\).
	We denote by \(\mathcal{L}(\cdot)\) the stochastic logarithm.
	By Girsanov's theorem, see Theorem \ref{theo: Gir} in Appendix \ref{rem: SMP}, there exists an \(\mathbb{R}^d\)-valued \(\F\)-predictable process \(c = (c_t)_{t \in [0, T]}\) which satisfies \(Q\)-a.s. (and thus, by the equivalence of \(P\) and \(Q\), also \(P\)-a.s.) \eqref{eq: c inte cond} and \(Q\)-a.s.
	\begin{align}
	\mathcal{L} (S^i) = \int_0^{\cdot} \left\la e_i, b(s, X_s) + a(s, X_s) c_s\right\ra \dd s + \textup{local \((\F, Q)\)-martingale}.
	\end{align}
	Recall that the stochastic logarithm of a local martingale is also a local martingale. Hence, since the process \(S^i\) is a local \((\F, Q)\)-martingale, also \(\mathcal{L}(S^i)\) is a local \((\F, Q)\)-martingale. 
	Now, the claim of our lemma follows from the fact that continuous local martingales of finite variation are constant up to a null set, together with the equivalence of \(P\) and \(Q\).
\end{proof}

The equation \eqref{eq: MPRE} is called \emph{market price of risk equation (MPRE)} and a process \(c = (c_t)_{t \in [0, T]}\) as in Lemma \ref{lem: MPR nece} is called \emph{market price of risk (MPR)}.
\begin{remark}\label{rem: Ruf}
	As pointed out by \cite{MAFI:MAFI502}, if in a Markovian setting a MPR \(c = (c_t)_{t \in [0, T]}\) exists, it can be chosen with a Markovian structure, i.e. \(c_s (\omega) = \bar{c}(s, \omega(s))\) for a Borel function \(\bar{c} \colon [0,T] \times \mathbb{R}^d \to \mathbb{R}^d\). In particular, the Markovian version of the MPR is minimal in the sense that for any MPR \(c^* = (c^*_t)_{t \in [0, T]}\) we have \(P\)-a.s. \begin{align}\int_0^T \la a (s, X_s) \bar{c}(s, X_s), \bar{c}(s, X_s)\ra \dd s \leq \int_0^T\la a(s, X_s) c^*_s, c^*_s\ra \dd s.\end{align}
\end{remark}

The existence of a MPR is related to the so-called \emph{structure condition} as defined by \cite{Choulli1996}, which itself is 
equivalent to the existence of a SLMD, see Theorem 2.9 in \cite{Choulli1996}. Indeed, as we state in the following theorem, the existence of a MPR is also equivalent to the existence of a SLMD.
\newpage
\begin{theorem}\label{prop: NUPBR}
	The following is equivalent:
	\begin{enumerate}
		\item[\textup{(i)}]
		There exists a MPR.
		\item[\textup{(ii)}]
		There exists a SLMD.
	\end{enumerate}
\end{theorem}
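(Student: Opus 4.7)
My plan is to exploit the canonical decomposition of $X$ under $P$ together with Yor's multiplication formula for stochastic logarithms; both directions will then reduce to a drift computation for $ZS^i$. Applying Definition \ref{def: MP} to the coordinate maps $f(x) = x_j$ and to $f(x) = x_j x_k$ yields the canonical decomposition
\begin{equation*}
X_t = x_0 + \int_0^t b(s, X_s)\,\dd s + X^c_t, \qquad \la X^{c,j}, X^{c,k}\ra_t = \int_0^t a^{jk}(s, X_s)\,\dd s,
\end{equation*}
with $X^c$ a continuous local $(\F, P)$-martingale. Since $S^i > 0$, I may form $\mathcal{L}(S^i) = \int_0^\cdot \la e_i, b(s, X_s)\ra\,\dd s + \int_0^\cdot \la e_i, \dd X^c_s\ra$.

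For (i) $\Rightarrow$ (ii) I would set $Z \triangleq \mathcal{E}\bigl(\int_0^\cdot \la c_s, \dd X^c_s\ra\bigr)$, which is a strictly positive continuous local $(\F, P)$-martingale thanks to \eqref{eq: c inte cond}. Yor's multiplication formula then gives
\begin{equation*}
\mathcal{L}(ZS^i) = \mathcal{L}(Z) + \mathcal{L}(S^i) + [\mathcal{L}(Z), \mathcal{L}(S^i)] = \textup{(local mart.)} + \int_0^\cdot \la e_i, b(s, X_s) + a(s, X_s) c_s\ra\,\dd s,
\end{equation*}
and the MPRE \eqref{eq: MPRE} erases the finite-variation part, so $ZS^i$ is a local martingale and $Z$ is an SLMD.

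For the converse (ii) $\Rightarrow$ (i) I would reverse the computation. A given SLMD $Z$ may be written as $Z = Z_0 \mathcal{E}(N)$ with $N = \mathcal{L}(Z)$ a continuous local $(\F, P)$-martingale, and I would apply the Kunita--Watanabe decomposition of $N$ relative to $X^c$: there exist an $\F$-predictable $\er^d$-valued $c$ with $\int_0^T \la a(s, X_s) c_s, c_s\ra\,\dd s < \infty$ and a continuous local martingale $L$ strongly orthogonal to every $X^{c,j}$ such that $N = \int_0^\cdot \la c_s, \dd X^c_s\ra + L$. Orthogonality yields $[\mathcal{L}(Z), \mathcal{L}(S^i)] = \int_0^\cdot \la e_i, a(s, X_s) c_s\ra\,\dd s$, so the drift part of $\mathcal{L}(ZS^i)$ equals $\int_0^\cdot \la e_i, b + ac\ra(s, X_s)\,\dd s$. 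Since $ZS^i$ is a local martingale, so is $\mathcal{L}(ZS^i)$; continuous local martingales of finite variation being constant then forces \eqref{eq: MPRE} and exhibits $c$ as a MPR.

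The principal hurdle lies in the Kunita--Watanabe step in (ii) $\Rightarrow$ (i): I must produce a \emph{single} $\F$-predictable vector $c$ such that $a(s, X_s) c_s$ recovers the Lebesgue-densities of the cross-variations $\la N, X^{c,k}\ra$. Those densities exist by the Kunita--Watanabe inequality together with $\dd \la X^{c,k}\ra_t = a^{kk}(t, X_t)\,\dd t$, orthogonality confines them almost everywhere to the range of $a(s, X_s)$, and I would extract $c$ via a Borel pseudoinverse of $a$; the integrability $\int_0^T \la ac, c\ra\,\dd s \leq \la N\ra_T < \infty$ is then automatic from the orthogonal decomposition.
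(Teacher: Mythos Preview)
Your proposal is correct and follows essentially the same route as the paper: both directions hinge on the stochastic-logarithm multiplication formula and the Kunita--Watanabe decomposition, exactly as you outline. Your closing concern about extracting a single predictable vector $c$ is already handled by the version of the Kunita--Watanabe theorem the paper invokes (Theorem~\ref{theo: deco KW}, i.e.\ Theorem~III.4.11 in \cite{JS}), so no separate pseudoinverse argument is needed.
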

\begin{proof}
	Assume that \(c = (c_t)_{t \in [0, T]}\) is a MPR and define 
	\begin{align}\label{eq: Z}
	Z_t \triangleq \exp \left( \int_0^t \la c_s, \dd X^c_s\ra - \frac{1}{2} \int_0^t \la a (s, X_s) c_s, c_s\ra \dd s\right), \quad t \in [0, T],
	\end{align}
	where \(X^c_t \triangleq X_t - x_0 - \int_0^t b(s, X_s)\dd s\), which is a continuous local \((\F, P)\)-martingale, see Proposition \ref{prop: eq MP} in Appendix \ref{rem: SMP}.
	This process is well-defined by \eqref{eq: c inte cond}. 
	Moreover, \(Z = (Z_t)_{t \in [0, T]}\) is a local \((\F, P)\)-martingale as it is the stochastic exponential of the local \((\F, P)\)-martingale \(\big(\int_0^t \la c_s, \dd X^c_s\ra\big)_{t \in [0, T]}\).
	Using the MPRE \eqref{eq: MPRE}, we obtain that
	\begin{align}\label{eq: SMD candi}
	S^i Z &= S^i_0 \exp \left(\int_0^\cdot \la c_s + e_i, \dd X_s^c\ra - \frac{1}{2} \int_0^\cdot \la a(s, X_s) (c_s + e_i), c_s + e_i\ra \dd s\right)
	\end{align}
	up to a \(P\)-null set. 
	In other words, \(S^i Z\) is the stochastic exponential of the local \((\F, P)\)-martingale \(\big(\int_0^t \la c_s + e_i, \dd X^c_s\ra\big)_{t \in [0, T]}\) and hence itself a local \((\F, P)\)-martingale.
	Thus, the implication (i) \(\Longrightarrow\) (ii) is proven.
	
	We prove the converse implication. Let \(Z = (Z_t)_{t \in [0, T]}\) be a SLMD. By the decomposition theorem of Kunita and Watanabe, see Theorem \ref{theo: deco KW} in Appendix \ref{rem: SMP}, we find an \(\mathbb{R}^d\)-valued \(\F\)-predictable process \(c = (c_t)_{t \in [0, T]}\) such that \eqref{eq: c inte cond} holds \(P\)-a.s., and a local \((\F, P)\)-martingale \(N = (N_t)_{t \in [0, T]}\) with \(\lle N, \la e_i,X^c\ra\rre = 0\) for all \(i = 1, ..., d\) such that 
	\begin{align}
	\mathcal{L} (Z)= \int_0^\cdot \la c_s, \dd X^c_s\ra + N.
	\end{align}
	Thus, by Equation II.8.20 in \cite{JS}, we obtain that
	\begin{equation}\begin{split}
	\mathcal{L} (S^i Z) &= \mathcal{L} (S^i) + \mathcal{L}(Z) + \lle \mathcal{L}(S^i), \mathcal{L}(Z)\rre
	\\&= \int_0^\cdot \la e_i, b(s, X_s)+ a(s, X_s) c_s\ra \dd s
	%\\&\hspace{3.25cm}
	+ \int_0^\cdot \la c_s + e_i, \dd X^c_s\ra + N.
	\end{split}
	\end{equation}
	Now, we can argue along the lines of the proof of Lemma \ref{lem: MPR nece} that the MPRE \eqref{eq: MPRE} has to hold up to \(P\)-null set. This completes the proof.
\end{proof}

\subsection{Deterministic Conditions for the Existence of SMDs and E(L)MMs}\label{sec: det cond existence}
In this section we assume the existence of a \emph{good version} of a MPR. We now explain what we mean by this. Define the \(\F\)-stopping time
\begin{align}\label{eq: tau n}
\tau_n \triangleq \inf(t \in [0,T] \colon \|X_t\| \geq n).
\end{align}
\begin{definition}
	We say that a MPR \(c = (c_t)_{t \in [0, T]}\) is \emph{good}, if for all \(n \in \mathbb{N}\)
	\begin{align}\label{eq: loc nov}
	E^P \left[ \exp \left(\frac{1}{2} \int_0^{T \wedge \tau_n} \la a(s, X_s) c_s, c_s \ra \dd s \right)\right] < \infty.
	\end{align}
\end{definition}
For \(x \in \mathbb{R}^d\) we set \(\|x\| \triangleq \sqrt{\la x, x\ra}\) and for \(A \in \mathbb{R}^d\otimes \mathbb{R}^d\) we set \(\|A\| \triangleq \sqrt{\textup{trace } (A A^*)}\), where \(A^*\) is the adjoint of \(A\).	
For a function \(c \colon [0, T] \times \mathbb{R}^d \to \mathbb{R}^{d-m}\) we write \((0,c)\) for the function \([0, T] \times \mathbb{R}^d \to \mathbb{R}^d\) with \(\la (0,c)(t, x), e_i\ra = 0\) if \(i \in \{1, ..., m\}\) and \(\la (0, c)(t, x), e_i\ra = \la c(t, x), e_{i - m}\ra\) for \(i \in \{m +1, ..., d\}\).
\begin{remark}\label{rem: M1}
	If \(a\) is invertible and \(b, a\) and \(a^{-1} b\) are locally bounded, then for any locally bounded Borel function~\(\mu \colon [0, T] \times \mathbb{R}^d \to \mathbb{R}^{d - m}\) the process 
	\begin{align}
	c_t (\omega) \triangleq - a^{-1} (t, \omega(t))b(t, \omega(t)) + (0, \mu (t, \omega(t))),\quad t \in [0, T], \omega \in \Omega,
	\end{align} 
	is a good MPR. 
\end{remark}
In this section we impose the
\begin{assumption}\label{Stand Ass}
	The function \((0, \mu)\colon [0, T] \times \mathbb{R}^d \to \mathbb{R}^d\) is such that \(\mu \colon [0, T] \times\mathbb{R}^{d} \to \mathbb{R}^{d - m}\) is a Borel function for which there is a good MPR \(c = (c_t)_{t \in [0, T]}\) such that \(P\)-a.s. 
	\begin{align}
	\int_0^T \|\mu(s, X_s)\| \dd s < \infty 
	\end{align}
	and for all \(t \in [0, T]\) and \(i = m + 1, ..., d\)
	\begin{align}\label{eq: set M}
	\int_0^{t} \la e_i, b (s, X_s) + a(s, X_s) c_s\ra \dd s = \int_0^{t} \la e_{i - m}, \mu(s, X_s) \ra \dd s.
	\end{align}
\end{assumption}
Clearly, this assumption holds in the setting of Remark \ref{rem: M1}. 
We now define several conditions which will imply the existence of SMDs and E(L)MMs and their uniqueness. 
The first two conditions can be seen as a (partial) multi-dimensional Feller test for explosion. It goes back to \cite{doi:10.1137/1105016} who stated it without proof.
Providing an intuition, the condition is based on a radial comparison with a one-dimensional diffusion.
\begin{condition}\label{cond: EL1}
	There exists an \(r > 0\) and a Borel function \(\zeta \colon [0, T] \to [0, \infty)\) such that \(\int_0^T \zeta (s)\dd s < \infty\) and for \(\dd t\)-a.a. \(t \in [0, T]\) and all \(x \in \{z \in \mathbb{R}^d \colon \|z\| \leq \sqrt{2 r}\}\)
	\begin{align}\label{loc bdd. gen}
	\|a(t, x)\| + \|\mu(t, x)\| \leq \zeta (t).
	\end{align}	
	Moreover, there are continuous functions \(A \colon [r, \infty) \to (0, \infty)\) and \(B \colon [r, \infty) \to (0, \infty)\) such that for \(\dd t\)-a.a. \(t \in [0, T]\) and all \(\rho \geq \sqrt{2 r}\) and \(x \in \{z \in \mathbb{R}^d\colon \|z\| = \rho\}\), 
	\begin{align}
	\zeta (t) A \left(\frac{\rho^2}{2}\right) &\geq \la a(t, x)x, x\ra,\label{cond bound 1}\\
	\zeta (t) \la a(t, x)x, x\ra B \left(\frac{\rho^2}{2}\right)&\geq \textup{ trace } a(t, x) + 2 \la x, (0, \mu)(t, x)\ra,\label{cond bound 2}\\
	&\hspace{-1.5cm}\int_r^\infty \frac{\int_r^z \frac{C(\sigma)}{A(\sigma)} \dd \sigma}{C(z)} \dd z = \infty,\label{cond: eq H}
	\end{align}
	where
	\begin{align}\label{C}
	C (z) \triangleq \exp \left(\int_r^z B(\sigma) \dd \sigma\right).
	\end{align}
\end{condition}

\begin{condition}\label{cond: E1}
	For all \(i = 1, ..., m\) the following holds:
	There exists an \(r_i > 0\) and a Borel function \(\zeta_i \colon [0, T] \to [0, \infty)\) such that \(\int_0^T \zeta_i (s)\dd s < \infty\) and \eqref{loc bdd. gen}, with \(\zeta\) replaced by \(\zeta_i\), holds for \(\dd t\)-a.a. \(t \in [0, T]\) and all \(x \in \{z \in \mathbb{R}^d \colon\|z\| \leq \sqrt{2 r_i}\}\).
	Moreover, there exist continuous functions \(A_i \colon [r_i, \infty) \to (0, \infty)\) and \(B_i \colon [r_i, \infty) \to (0, \infty)\) such that for \(\dd t\)-a.a. \(t \in [0, T]\) and all \(\rho \geq \sqrt{2 r_i}\) and \(x \in \{z \in \mathbb{R}^d \colon \|z\| = \rho\}\), 
	\begin{align}
	\zeta (t) A_i \left(\frac{\rho^2}{2}\right) &\geq \la a(t, x)x, x\ra,\\
	\zeta (t) \la a(t, x)x, x\ra  B_i \left(\frac{\rho^2}{2}\right)&\geq \textup{ trace } a(t,x) + 2 \la x, (0, \mu)(t, x) + a(t, x) e_i\ra,
	\end{align}
	and \eqref{cond: eq H} holds for \(A\) replaced by \(A_i\), \(B\) replaced by \(B_i\) and \(r\) replaced by \(r_i\).
\end{condition}

The following conditions are typically easy to verify. 
\begin{condition}\label{cond: EL3}
	There exists a Borel function \(\zeta \colon [0, T] \to [0, \infty)\) such that \(\int_0^T \zeta(s)\dd s <\infty\) and for \(\dd t\)-a.a. \(t \in [0, T]\) and all \(x \in \mathbb{R}^d\)
	\begin{align}
	\textup{trace } a(t, x) + 2 \la x, (0,\mu)(t, x)\ra \leq \zeta (t) (1 + \|x\|^2). 
	\end{align}
\end{condition}

\begin{condition}\label{cond: E3}
	There exists a Borel function \(\zeta \colon [0, T] \to [0, \infty)\) such that \(\int_0^T \zeta(s)\dd s <\infty\) and for all \(i = 1, ..., m\), \(\dd t\)-a.a. \(t \in [0, T]\) and all \(x \in \mathbb{R}^d\)
	\begin{align}
	\textup{trace } a(t, x)
	+ 2 \la x, (0, \mu)(t, x) + a (t, x) e_i\ra &\leq \zeta (t)(1 + \|x\|^2).
	\end{align}
\end{condition}
For the existence of an ELMM we also give an eigenvalue condition in the spirit of \cite{mckean1969stochastic}. We illustrate an application of this condition in Section \ref{section 4} below.
\begin{condition}\label{cond: EL MCKEAN}
	We have \(\mu \equiv 0\). Let \(\lambda_+ (a(t, x))\) be the largest eigenvalue of \(a(t, x)\) and set
	\begin{align}
	\gamma (z) \triangleq  \sup_{\|x\| \leq z}\sup_{t \in [0, T]} \lambda_+ (a(t, x)).
	\end{align}
	There exists a \(z_0 \geq \|x_0\|\) such that \(0 < \gamma (z) < \infty\) for all \(z \geq z_0\) and either
	\begin{align}
	\limsup_{n\to \infty} \frac{n^2}{\gamma(n)} = \infty
	\end{align}
	or there exists a Borel function \(\xi \colon [z_0, \infty) \to (0, \infty)\) such that \(\gamma(z) \leq \xi(z)\) for all \(z \geq z_0\), the map \(x \mapsto \frac{x}{\xi(x)}\) is Riemann integrable on \([z_0, n]\) for all \(\mathbb{N} \ni n > z_0\), and 
	\begin{align}\label{eq: cond MCKEAN}
	\limsup_{n \to \infty} \int_{z_0}^n \frac{x}{\xi (x)} \dd x = \infty.
	\end{align}
\end{condition}
We stress that \(\lambda_+\) is a Borel function, see \cite{10.2307/2039503}. 
Finally, we formulate two conditions which imply the uniqueness of the E(L)MM. %This is reflected by the acronym \emph{U}.
\begin{condition}\label{cond: U1}
	For all  \(x \in \mathbb{R}^d\) it holds that
	\begin{equation}\begin{split}\label{eq: uniq a cond}
	&\hspace{0.3cm}\inf_{s \in [0, T]} \inf_{\|\theta\| = 1} \la \theta, a(s, x) \theta\ra > 0,\\
	&\lim_{y \to x} \sup_{s \in [0, T]} \|a (s, y) - a(s, x)\| = 0.
	\end{split}
	\end{equation}
\end{condition}
\begin{condition}\label{cond: U2}
	The coefficient \(a\) has a root \(a^\frac{1}{2}\) such that for all \(n \in \mathbb{N}\) there exists a Borel function \(\zeta_n \colon [0, T] \to [0, \infty)\) such that \(\int_0^T \zeta_n(s)\dd s < \infty\) and for \(\dd t\)-a.a. \(t \in [0, T]\) and all \(x, y \in \{z \in \mathbb{R}^d\colon \|z\| \leq n\}\)
	\begin{align}
	\|a^\frac{1}{2}(t, x) - a^\frac{1}{2}(t, y)\|^2 \leq \zeta_n(t) \|x - y\|^2.
	\end{align}
\end{condition}

We are in the position to state the main theorem of this section. We prove the result in Section \ref{sec: pf main 1} below.
\begin{theorem}\label{theo: LG}
	\begin{enumerate}
		\item[\textup{(i)}]
		If one of the Conditions \ref{cond: EL1}, \ref{cond: EL3} and \ref{cond: EL MCKEAN} holds, then there exists a solution to the MP \(((0, \mu), a, x_0)\) which is also an ELMM. 
		Moreover, if in addition \(d = m\) and either Condition \ref{cond: U1} or Condition \ref{cond: U2} holds, then the ELMM is unique.
		\item[\textup{(ii)}]
		Suppose that there exists a Borel function \(\zeta\colon [0, T] \to [0, \infty)\) such that \(\int_0^T \zeta(s)\dd s < \infty\) and a locally bounded Borel function \(\hat{a} \colon [0, T] \times \mathbb{R}^d \to [0, \infty)\) such that for all \(\dd t\)-a.a. \(t \in [0, T]\) and all \(x \in \mathbb{R}^d\)
		\begin{align}\label{eq: a almost locally bounded}
		\max_{i = 1, ..., m} \langle a(t, x)e_i, e_i\rangle \leq \zeta (t) \hat{a}(t, x).
		\end{align}
		If one of the Conditions \ref{cond: E1} and \ref{cond: E3} is satisfied, then there exists a SMD.
		If in addition one of the Conditions \ref{cond: EL1}, \ref{cond: EL3} and \ref{cond: EL MCKEAN} holds, then there exists a solution to the MP \(((0, \mu), a, x_0)\) which is an EMM.
		Moreover, if in addition \(d = m\) and either Condition \ref{cond: U1} or Condition \ref{cond: U2} holds, then the EMM is unique.
	\end{enumerate}
\end{theorem}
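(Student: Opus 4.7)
The plan is to realize the candidate ELMM/EMM by a measure change driven by the stochastic exponential attached to the good MPR, and to reduce the E(L)MM property to the non-explosion of the post-change-of-measure diffusion, which is exactly what each of Conditions \ref{cond: EL1}, \ref{cond: EL3}, \ref{cond: EL MCKEAN} is designed to supply.

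\textbf{Step 1 (candidate density).} Fix a good MPR $c$ supplied by Assumption \ref{Stand Ass} and form
\begin{align}
Z_t = \exp\Bigl(\int_0^t \la c_s, \dd X^c_s\ra - \tfrac{1}{2}\int_0^t \la a(s,X_s) c_s, c_s\ra \dd s\Bigr),
\end{align}
as in the proof of Theorem \ref{prop: NUPBR}; the identity \eqref{eq: SMD candi} shows that $Z$ and each $S^i Z$ are strictly positive local $(\F,P)$-martingales.

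\textbf{Step 2 (localised measure change).} For the stopping times $\tau_n$ in \eqref{eq: tau n}, the goodness condition \eqref{eq: loc nov} together with Novikov's criterion implies that $Z^{T\wedge\tau_n}$ is a uniformly integrable martingale; setting $\dd Q_n = Z_{T\wedge\tau_n}\dd P$ produces a probability $Q_n\sim P$ on $\mathcal{F}_{T\wedge\tau_n}$. Girsanov's theorem, combined with the MPRE \eqref{eq: MPRE} and Assumption \ref{Stand Ass}, identifies the $(\F,Q_n)$-semimartingale characteristics of $X^{\cdot\wedge\tau_n}$ as those of the stopped MP $((0,\mu), a, x_0)$. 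One then assembles a consistent family as $n\to\infty$ via the Ionescu-Tulcea / Parthasarathy extension of the $Q_n$'s, yielding a probability measure $Q$ on $(\Omega,\mathcal{F})$ such that, under $Q$, $X$ solves the MP $((0,\mu),a,x_0)$ up to $T\wedge\tau_n$ for every $n$.

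\textbf{Step 3 (non-explosion and true-martingale property).} The core analytic step is to show that $\tau_n\to\infty$, equivalently $Q(\sup_{t\le T}\|X_t\|=\infty)=0$. This is what each explosion-test condition provides when applied to the characteristics $((0,\mu),a)$: for Condition \ref{cond: EL1} one builds the radial Lyapunov function $V(x)=h(\|x\|^2/2)$, where $h$ is chosen in terms of $A$, $B$ and $C$ so that $V(\sigma)\to\infty$ as $\sigma\to\infty$ is implied precisely by \eqref{cond: eq H}, while \eqref{cond bound 1}--\eqref{cond bound 2} make $\mathcal{K}^{(0,\mu),a}_t V \le \zeta(t) V$; for Condition \ref{cond: EL3} the classical Lyapunov function $V(x)=1+\|x\|^2$ does the job via Gronwall; for Condition \ref{cond: EL MCKEAN} McKean's eigenvalue argument supplies a Lyapunov function built from $\xi$ using \eqref{eq: cond MCKEAN}. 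In each case the Lyapunov inequality combined with optional stopping gives $Q(\tau_n\le T)\to 0$. Non-explosion is exactly the martingale test: it yields $E^P[Z_T]=\lim_n E^P[Z_{T\wedge\tau_n}]\cdot Q(\tau_n>T)^{-1}$-type identities that force $E^P[Z_T]=1$, so $Z$ is a true $(\F,P)$-martingale, $Q\sim P$ on $\mathcal{F}_T$, and $X$ solves the MP $((0,\mu),a,x_0)$ globally under $Q$. Because $S^i Z$ is a local $(\F,P)$-martingale, $S^i$ is a local $(\F,Q)$-martingale, so $Q$ is an ELMM.

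\textbf{Step 4 (uniqueness and part (ii)).} For uniqueness in the case $d=m$, Condition \ref{cond: U1} gives uniqueness of the MP $((0,\mu),a,x_0)$ by the Stroock--Varadhan theory, while Condition \ref{cond: U2} gives it via Yamada-Watanabe-type pathwise uniqueness; when $d=m$ there is no free parameter $\mu$ and the MPR is determined by $c=-a^{-1}b$ $P$-a.s., so any ELMM has density process equal to the above $Z$ (Theorem \ref{prop: NUPBR}), forcing uniqueness. For the SMD in part (ii) we repeat Steps 1--3 with $c$ replaced by $c+e_i$, whose stochastic exponential is $S^i Z/S^i_0$ by \eqref{eq: SMD candi}; Conditions \ref{cond: E1} and \ref{cond: E3} are the explosion tests for the characteristics shifted by the extra drift $ae_i$, so by the same Lyapunov argument each $S^i Z$ is a true $(\F,P)$-martingale, i.e.\ $Z$ is an SMD. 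For the EMM, the local-martingale property of $S^i$ under $Q$ obtained in Step 3 is upgraded to a true-martingale property: by \eqref{eq: a almost locally bounded} the quadratic variation of $\log S^i$ under $Q$ is dominated by $\zeta(t)\hat{a}(t,X_t)$, which is locally bounded along the Lyapunov sublevel sets; combining with the SMD identity $E^P[S^i_T Z_T]=S^i_0$ yields $E^Q[S^i_T]=S^i_0$, hence $S^i$ is a true martingale under $Q$. The main obstacle throughout is step~3, specifically constructing the correct Lyapunov function for Condition \ref{cond: EL1} so that the abstract Feller-type integral test \eqref{cond: eq H} is recognisable as the growth rate needed to rule out explosion under the shifted characteristics.
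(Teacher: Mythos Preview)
Your overall strategy matches the paper's: build $Z$ from the good MPR, localize via $\tau_n$ and Novikov, and reduce the true-martingale property of $Z$ to the non-explosion statement $Q_n(\tau_n>T)\to 1$ via Lyapunov/supermartingale arguments tailored to each of the three conditions. The SMD/EMM parts of (ii) also align with the paper.

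Two points deserve correction.

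First, the Ionescu--Tulcea/Parthasarathy extension in Step~2 is an unnecessary detour on the finite horizon (the paper reserves such an extension for the infinite-horizon remark). Once the Lyapunov argument gives $Q_n(\tau_n>T)\to 1$, the one-line computation
\[
E^P[Z_T]=\lim_{n\to\infty}E^P\bigl[Z_{T\wedge\tau_n}\mathbf{1}_{\{\tau_n>T\}}\bigr]=\lim_{n\to\infty}Q_n(\tau_n>T)=1
\]
already shows that $Z$ is a true $(\F,P)$-martingale, so $Q\triangleq Z_T\cdot P$ is directly a probability equivalent to $P$, and Girsanov identifies it as a solution of the MP $((0,\mu),a,x_0)$. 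Your displayed ``identity'' $E^P[Z_T]=\lim_n E^P[Z_{T\wedge\tau_n}]\cdot Q(\tau_n>T)^{-1}$ is not the correct mechanism, and the Lyapunov argument should be run under $Q_n$ (where $X^{\cdot\wedge\tau_n}$ has the shifted characteristics), not under an a-priori extension $Q$.

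Second, and more seriously, your uniqueness argument has a genuine gap. Theorem~\ref{prop: NUPBR} does \emph{not} assert that every ELMM has density process equal to $Z$: the Kunita--Watanabe decomposition used in its proof yields $\mathcal{L}(\text{density})=\int\la c_s,\dd X^c_s\ra+N$ with $N$ orthogonal to $X^c$, and nothing in that theorem forces $N=0$ (that would require a martingale representation property for $P$, which is not assumed). Moreover, writing $c=-a^{-1}b$ presupposes invertibility of $a$, which Condition~\ref{cond: U2} does not guarantee. The paper argues differently and more directly: when $d=m$, any ELMM $Q$ makes $X-x_0$ a continuous local $(\F,Q)$-martingale with the same quadratic variation as under $P$, so $Q$ itself solves the MP $(0,a,x_0)$ (Lemma~\ref{lem: uniqueness}); Conditions~\ref{cond: U1} or~\ref{cond: U2} then give uniqueness of \emph{that} martingale problem (Lemma~\ref{lem: techn uni}), hence of the ELMM. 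No statement about uniqueness of the density process is needed.
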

Let us stress that no uniqueness argument is needed and that the result also applies in degenerated cases, see Example \ref{exp: Gir} below. 
\begin{remark}
	\begin{enumerate}
		\item[\textup{(i)}]
		Part (i) of Theorem \ref{theo: LG} stays true if the discounted asset process \(S\) is defined by \(S^i \triangleq \langle e_i, X\rangle\). 
		In this case, \(S^i\) may have non-positive paths. For our proof of Theorem \ref{theo: LG} (ii) the non-negativity of \(S^i\) is crucial,
		since we use a local change of measure with \(ZS^i\) and \(S^i\) as local densities. Here, \(Z\) is defined as in \eqref{eq: Z}. 
		\item[\textup{(ii)}] A version of Theorem \ref{theo: LG} also holds for the infinite time horizon. We discuss the adjustments of the conditions and the proof in Remark \ref{rem: infinite time horizon} below.
	\end{enumerate}
\end{remark}
An immediate consequence of Theorem \ref{theo: LG} is the following
\begin{corollary}\label{coro: EMM}
	If \(\mu\) and \(a\) are bounded uniformly on \([0, T] \times \mathbb{R}^d\), then there exists an EMM.
\end{corollary}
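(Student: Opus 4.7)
The plan is to invoke part (ii) of Theorem \ref{theo: LG} directly, and the only work is to verify that uniform boundedness of $\mu$ and $a$ makes all hypotheses of that statement trivial. There is no substantive obstacle here; each assumption reduces to Cauchy--Schwarz combined with the elementary inequality $2ab \le a^2+b^2$, and the dominating functions can all be taken to be constants.

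First, I would check the almost-local-boundedness assumption \eqref{eq: a almost locally bounded}. Since $\langle a(t,x) e_i, e_i\rangle \le \|a(t,x)\| \le \|a\|_\infty$, taking $\zeta \equiv \|a\|_\infty$ (constant, hence integrable over $[0,T]$) and $\hat{a} \equiv 1$ (trivially locally bounded) satisfies the requirement.

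Next, I would verify Condition \ref{cond: EL3}. The trace term satisfies $\textup{trace}\,a(t,x) \le d\|a\|_\infty$, and by Cauchy--Schwarz together with $2ab \le a^2+b^2$,
\[
2\langle x, (0,\mu)(t,x)\rangle \le 2\|\mu\|_\infty \|x\| \le \|\mu\|_\infty^2 + \|x\|^2 \le (\|\mu\|_\infty^2 + 1)(1 + \|x\|^2).
\]
Summing these gives an estimate $\textup{trace}\,a(t,x) + 2\langle x, (0,\mu)(t,x)\rangle \le \zeta(1+\|x\|^2)$ with $\zeta$ a suitable constant, so Condition \ref{cond: EL3} holds. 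Condition \ref{cond: E3} follows by the same argument after absorbing the additional term $2\langle x, a(t,x)e_i\rangle$, which is bounded by $2\|a\|_\infty\|x\| \le (\|a\|_\infty^2+1)(1+\|x\|^2)$ in exactly the same way.

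Having verified \eqref{eq: a almost locally bounded}, Condition \ref{cond: E3}, and Condition \ref{cond: EL3}, Theorem \ref{theo: LG}(ii) supplies an EMM and the corollary follows. Nothing else is needed; in particular, no uniqueness or non-degeneracy of $a$ is required since Theorem \ref{theo: LG}(ii) delivers existence without such hypotheses.
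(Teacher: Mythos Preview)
Your proposal is correct and matches the paper's approach: the paper simply states that the corollary is ``an immediate consequence of Theorem \ref{theo: LG}'' without further detail, and you have spelled out precisely the verification of \eqref{eq: a almost locally bounded}, Condition \ref{cond: EL3}, and Condition \ref{cond: E3} that makes this immediate. The elementary bounds you use (Cauchy--Schwarz and $2ab\le a^2+b^2$) are exactly what is needed, and no more.
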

In the one-dimensional case Condition \ref{cond: U2} can be improved to local H\"older continuity. We state this special case formally in the following
\begin{proposition}\label{prop: uni 1d}
	Suppose that \(d = m = 1\). Moreover, assume that for all \(n \in \mathbb{N}\) there exists a strictly increasing function \(h_n \colon [0, \infty) \to [0, \infty)\) with \(h_n(0) = 0\) and a Borel function \(\zeta_n \colon [0, T] \to [0, \infty)\) such that for \(\dd t\)-a.a. \(t \in [0, T]\) and all \(\epsilon > 0\) and \(x, y \in [-n, n]\)
	\begin{align}
	\int_{0}^\epsilon \frac{1}{h_n(z)} \dd z &= \infty, \\
	\int_0^T \zeta_n (s)\dd s &< \infty,\\
	\|a^\frac{1}{2} (t, x) - a^\frac{1}{2} (t, y)\|^2 &\leq \zeta_n(t) h_n(\|x - y\|),	
	\end{align}
	then there exits at most one ELMM. In particular, there exists at most one EMM.
\end{proposition}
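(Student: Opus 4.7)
The plan is to reduce uniqueness of the ELMM to uniqueness of the martingale problem $(0, a, x_0)$, and then to establish the latter by a time-dependent Yamada--Watanabe argument.

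For the reduction, let $Q$ be any ELMM. Lemma \ref{lem: MPR nece} provides an $\mathbf{F}$-predictable MPR $c$, and since $d = m = 1$ the MPRE reads
\begin{align}
\int_0^t (b(s, X_s) + a(s, X_s) c_s) \, \ds = 0, \quad t \in [0, T],
\end{align}
$P$-a.s.\ (hence $Q$-a.s.). By Girsanov's theorem (Theorem \ref{theo: Gir}), under $Q$ the process $X - x_0 - \int_0^\cdot (b + a c)(s, X_s) \, \ds$ is a continuous local $Q$-martingale with quadratic variation $\int_0^\cdot a(s, X_s) \, \ds$. Combined with the MPRE this shows that, under $Q$, $X$ is itself a continuous local martingale starting at $x_0$ with quadratic variation $\int_0^\cdot a(s, X_s) \, \ds$; equivalently, $Q$ solves the martingale problem $(0, a, x_0)$.

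For uniqueness of this martingale problem, I would invoke the classical equivalence in dimension one between solutions of MP$(0, a, x_0)$ and weak solutions of $\dd Y_t = a^{\frac{1}{2}}(t, Y_t) \, \dd W_t$, $Y_0 = x_0$. The hypothesis on $a^{\frac{1}{2}}$ is a localized, time-dependent Yamada--Watanabe modulus: the standard concave-function argument applied to $\phi_k(Y - \tilde Y)$ for two solutions on a common probability space driven by the same Brownian motion, where $\phi_k \nearrow |\cdot|$ satisfies $\phi_k''(z) = 2/(k\, h_n(z))$ near the origin, combined with localization at $\tau_n = \inf\{t : |Y_t| \vee |\tilde Y_t| \geq n\}$ and Gronwall's lemma (with the $L^1$-weight $\zeta_n$ in place of a constant Lipschitz bound), yields pathwise uniqueness up to $\tau_n$. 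Since paths are continuous on $[0, T]$, $\tau_n > T$ for $n$ large, so pathwise uniqueness holds on all of $[0, T]$. Yamada--Watanabe then upgrades pathwise uniqueness together with weak existence to uniqueness in law, hence to uniqueness of the MP solution.

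Since $\mathcal{F} = \sigma(X_t, t \in [0, T])$, any probability measure on $(\Omega, \mathcal{F})$ is determined by the law of $X$, so the two steps combine to give at most one ELMM; every EMM being an ELMM, uniqueness of the EMM follows immediately. The main obstacle is the time-dependence of the modulus: the Yamada--Watanabe argument must be localized so that the integrable weight $\zeta_n(t)$ enters the Gronwall step in place of a constant Lipschitz bound. This adaptation is classical but must be executed with care because $\zeta_n$ is only assumed to lie in $L^1([0,T])$, not to be bounded.
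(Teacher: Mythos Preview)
Your reduction step is essentially the paper's Lemma~\ref{lem: uniqueness}, and your overall strategy---reduce to uniqueness of the MP $(0,a,x_0)$, then establish pathwise uniqueness for the associated SDE and invoke Yamada--Watanabe---matches the paper's. The difference is in the pathwise uniqueness argument itself: the paper uses Le~Gall's local-time lemma (Lemma~\ref{lem: Le Gall}) together with Tanaka's formula, showing that the local time of $Y-U$ at the origin vanishes, hence $|Y_{\cdot\wedge\xi_n}-U_{\cdot\wedge\xi_n}|$ is a true martingale starting at zero, whence it is identically zero. You instead propose the classical Yamada--Watanabe approximation by smooth $\phi_k\nearrow|\cdot|$ with $\phi_k''\leq 2/(k\,h_n)$ on a shrinking support.

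Both approaches are valid under the stated hypothesis $\int_0^\epsilon h_n^{-1}=\infty$; the Le~Gall route is slightly shorter because it avoids constructing the $\phi_k$ sequence and passing to the limit, while your route is perhaps more elementary in that it does not invoke local-time theory. One remark: since the SDE in question is driftless, Gronwall's lemma is not actually needed in your argument. After applying It\^o's formula to $\phi_k(Y-\tilde Y)$, localizing at $\tau_n$, and taking expectations, the diffusion term alone gives
\[
E\big[\phi_k\big(Y_{t\wedge\tau_n}-\tilde Y_{t\wedge\tau_n}\big)\big]\ \leq\ \frac{1}{k}\int_0^T\zeta_n(s)\,\dd s,
\]
which tends to zero as $k\to\infty$ directly, with no iterative inequality to close. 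So the ``main obstacle'' you flag---handling the $L^1$ weight $\zeta_n$ in a Gronwall step---does not in fact arise here.
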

A proof is given in Appendix \ref{App: pf 1}. Before we comment on Theorem \ref{theo: LG}, let us state a version of Theorem \ref{theo: LG} for homogeneous coefficients.
\begin{corollary}
	Suppose that \(\mu\) and \(a\) are independent of time and locally bounded. Under each of the following two conditions a solution to the MP \(((0, \mu), a, x_0)\) exists which is also an ELMM.
	\begin{enumerate}
		\item[\textup{(i)}] There exist an \(r > 0\) and continuous functions \(A \colon [r, \infty) \to (0, \infty)\) and \(B \colon [r, \infty) \to (0, \infty)\) such that for all \(\rho \geq \sqrt{2 r}\) and \(x \in \{z \in \mathbb{R}^d \colon \|z\| = \rho\}\)
		\begin{align}
		A \left(\frac{\rho^2}{2}\right) &\geq \la a(x)x, x\ra,\\
		\la a(x)x, x\ra B \left(\frac{\rho^2}{2}\right)&\geq \textup{ trace } a(x) + 2 \la x, (0, \mu)(x)\ra,
		\end{align}
		and \eqref{cond: eq H} holds.
		\item[\textup{(ii)}]
		For all \(x \in \mathbb{R}^d\) we have 
		\begin{align}
		\textup{trace } a(x) + 2 \langle x, (0, \mu)(x) \rangle \leq \textup{const. } (1 + \|x\|^2).
		\end{align}
	\end{enumerate}
	Furthermore, under each of the following two conditions a SMD exists. 
	\begin{enumerate}
		\item[\textup{(iii)}] For all \(i = 1, ..., m\) there exist an \(r_i > 0\) and continuous functions \(A_i \colon [r_i, \infty) \to (0, \infty)\) and \(B_i \colon [r_i, \infty) \to (0, \infty)\) such that for all \(\rho \geq \sqrt{2 r_i}\) and \(x \in \{z \in \mathbb{R}^d \colon \|z\| = \rho\}\)
		\begin{align}
		A_i \left(\frac{\rho^2}{2}\right) &\geq \la a(x)x, x\ra,\\
		\la a(x)x, x\ra B_i \left(\frac{\rho^2}{2}\right)&\geq \textup{ trace } a(x) + 2 \la x, (0, \mu)(x) + a(x) e_i\ra,
		\end{align}
		and \eqref{cond: eq H} holds for \(A\) replaced by \(A_i, B\) replaced by \(B_i\) and \(r\) replaced by \(r_i\).
		\item[\textup{(iv)}]
		For all \(i = 1,..., m\) and \(x \in \mathbb{R}^d\) we have 
		\begin{align}
		\textup{trace } a(x) + 2 \langle x, (0, \mu)(x) + a(x) e_i \rangle \leq \textup{const. } (1 + \|x\|^2).
		\end{align}
	\end{enumerate}
	Moreover, if one of the conditions \textup{(i)} and \textup{(ii)} and one of the conditions \textup{(iii)} and \textup{(iv)} holds, then the solution to the MP \(((0, \mu), a, x_0)\) is an EMM.  
	If in addition \(d=m\) and \(a\) is either continuous and satisfies \(\inf_{\|\theta\| = 1}\langle \theta, a(x) \theta\rangle > 0\) for all \(x \in \mathbb{R}^d\), or has a locally Lipschitz continuous root, then the ELMM, respectively the EMM, is unique.
\end{corollary}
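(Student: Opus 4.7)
The plan is to prove the corollary as a direct specialisation of Theorem \ref{theo: LG} to time-homogeneous coefficients, using the local boundedness of \(\mu\) and \(a\) to supply a constant integrable envelope. Specifically, I would verify that (i)–(iv) imply the corresponding Conditions \ref{cond: EL1}, \ref{cond: EL3}, \ref{cond: E1}, \ref{cond: E3}, then apply Theorem \ref{theo: LG} to obtain the ELMM/SMD/EMM statements, and finally check that the stated non-degeneracy and smoothness hypotheses imply Condition \ref{cond: U1} or Condition \ref{cond: U2} for uniqueness.

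For Step 1, local boundedness of \(\mu\) and \(a\) yields, for every \(r > 0\), a constant \(M_r > 0\) with \(\|a(x)\| + \|\mu(x)\| \leq M_r\) on \(\{\|x\| \leq \sqrt{2r}\}\). Choosing \(\zeta(t) \equiv \max(M_r, 1)\) gives \(\int_0^T \zeta(s)\ds = T \max(M_r,1) < \infty\) and the local bound \eqref{loc bdd. gen}. With this \(\zeta\), hypothesis (i) of the corollary implies \eqref{cond bound 1} and \eqref{cond bound 2} (because \(\zeta \geq 1\) and \(A, B\) are positive), while \eqref{cond: eq H} is literally the stated assumption; hence Condition \ref{cond: EL1} holds. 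Taking \(\zeta\) to be the linear-growth constant from (ii) yields Condition \ref{cond: EL3}. The same two reductions applied to hypothesis (iii) (resp.\ (iv)) produce Condition \ref{cond: E1} (resp.\ Condition \ref{cond: E3}). Moreover, since each \(\langle a(x)e_i, e_i\rangle\) is locally bounded in \(x\), the auxiliary inequality \eqref{eq: a almost locally bounded} required in Theorem \ref{theo: LG}(ii) holds with, e.g., \(\zeta \equiv 1\) and \(\hat a(t,x) \triangleq \max_i \langle a(x)e_i, e_i\rangle\) locally bounded.

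For Step 2, Theorem \ref{theo: LG}(i) applied under Condition \ref{cond: EL1} or Condition \ref{cond: EL3} yields a solution to the MP \(((0,\mu), a, x_0)\) that is an ELMM; Theorem \ref{theo: LG}(ii) applied under Condition \ref{cond: E1} or Condition \ref{cond: E3} produces a SMD. If both types of hypotheses are assumed, the second part of Theorem \ref{theo: LG}(ii) gives the EMM claim. For Step 3, observe that the time-independence of \(a\) trivialises the \(\sup_{s \in [0,T]}\) operators in \eqref{eq: uniq a cond}; thus continuity of \(a\) together with \(\inf_{\|\theta\|=1}\langle \theta, a(x)\theta\rangle > 0\) for all \(x\) is exactly Condition \ref{cond: U1}. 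If instead \(a^{1/2}\) is locally Lipschitz with constants \(L_n\) on \(\{\|x\| \leq n\}\), then setting \(\zeta_n(t) \equiv L_n^2\) verifies Condition \ref{cond: U2}. In either case, the uniqueness assertions of Theorem \ref{theo: LG} deliver the final sentence of the corollary.

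The argument is essentially a bookkeeping exercise, so there is no serious obstacle; the only point requiring minor care is to check that replacing \(\zeta(t)\) by a constant \(\geq 1\) preserves the inequalities in the radial comparison conditions, which it does because the constant enters only as a positive prefactor on the right-hand sides while the divergence integral \eqref{cond: eq H} depends solely on \(A, B\) (and \(r\)).
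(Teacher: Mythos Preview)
Your proposal is correct and is exactly the intended derivation: the paper states this corollary without proof as an immediate specialisation of Theorem \ref{theo: LG} to time-homogeneous, locally bounded coefficients, and your bookkeeping (constant \(\zeta \geq 1\) to recover Conditions \ref{cond: EL1}--\ref{cond: E3}, the obvious choice \(\hat a(t,x) = \max_i \langle a(x)e_i,e_i\rangle\) for \eqref{eq: a almost locally bounded}, and the trivialisation of the time-suprema in Conditions \ref{cond: U1}--\ref{cond: U2}) is precisely what is needed. Note only that Assumption \ref{Stand Ass} is a standing hypothesis of the section and therefore implicitly in force for the corollary as well; you do not need to verify it.
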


\subsection{Comments and Examples}\label{sec: comments existence}
Let us comment on consequences of Theorem \ref{theo: LG}.
We start with an example for a financial market in which not all solutions to the MP \((0, a, x_0)\) are ELMMs. Our result, however, yields that we can find a solution which is an ELMM. 
We stress that, as the example also illustrates, solutions can behave very differently.
\begin{example}\label{exp: Gir}
	Suppose that \(d = m = 1\) and that 
	\begin{align}\label{eq: Gir a}
	a(t, x) \triangleq \|x\|^\alpha \wedge 1, \quad \alpha \in (0, 1).
	\end{align}
	This diffusion coefficient corresponds to \emph{Girsanov's SDE} and it is well-known that the MP \((0, a, 0)\) has more than one solution.
	In fact, the Dirac measure on \(\{\omega \in \Omega \colon \omega(t) = 0 \text{ for all } t \in [0, T]\}\) is a solution and a non-Dirac solution is given by the law of \((B_{\tau_t})_{t \in [0, T]}\), where \((B_t)_{t \in [0, \infty)}\) is a one-dimensional Brownian motion and \((\tau_t)_{t \in [0, T]}\) is a time-change, see, for instance, \cite{RW2} pp. 175 for more details. Consequently, not all solutions to the MP \((0, a, 0)\) are equivalent. Since all ELMMs are equivalent, this implies that not all solutions can be ELMMs. However, if a good MPR exists for the market, by Corollary \ref{coro: EMM} we can always find an EMM in the set of solutions to the MP \((0, a, 0)\). 
\end{example}
If the MP \((b, a, x_0)\) has more than one solution, then the existence of an E(L)MM depends on the choice of \(P\).
To illustrate this, we give the following
\begin{example}\label{example: dependence P}
	Suppose that \(d = m = 1\) and that 
	\begin{align}
	b(t, x) \triangleq 3 x^{\frac{1}{3}},\quad a(t, x) \triangleq 9 \|x\|^{\frac{4}{3}}.
	\end{align}
	Then, it is clear that the Dirac measure on the set \(\{\omega \in \Omega \colon \omega(t) = 0 \text{ for all } t \in [0, T]\}\) is a solution to the MP \((b, a, 0)\). In fact, a non-Dirac solution is the law of \((W^3_t)_{t \in [0, T]}\), where \((W_t)_{t \in [0, T]}\) is a one-dimensional Brownian motion. This can easily be verified by It\^o's formula. Now, in the Dirac case the unique EMM is given by the real-world measure itself. If \(P\) is the law of \((W^3_t)_{t \in [0, T]}\), then there is no ELMM. To see this, note that, due to Lemma \ref{lem: uniqueness} below, any ELMM has to be a solution to the MP \((0, a, 0)\), which unique solution is the Dirac measure on the set \(\{\omega \in \Omega \colon \omega(t) = 0 \text{ for all } t \in [0, T]\}\). This measure, however, is not equivalent to the law of \((W^3_t)_{t \in [0, T]}\).
	Let us stress that this is no contradiction to Theorem \ref{theo: LG} since no MPR exists. This can be shown using the Engelbert-Schmidt zero-one law, see, for instance, Proposition 3.6.27 in \cite{KaraShre}. Therefore, not even a SLMD exists.
\end{example}

Theorem \ref{theo: LG} has interesting consequences for the incomplete case, which we discuss in the following

\begin{remark}
	Let us comment on the case \(m < d\). 
	Suppose that \(a\) is invertible, that \(b, a\) and \(a^{-1} b\) is locally bounded and that \begin{align}\textup{trace } a(t, x) \leq \textup{const. } (1 + \|x\|^2)\end{align} for \(\dd t\)-a.a. \(t \in [0, T]\) and all \(x \in \mathbb{R}^d\). In this case, for any locally bounded Borel function \(\mu \colon [0, T] \times \mathbb{R}^d \to \mathbb{R}^{d-m}\) our Assumption \ref{Stand Ass} holds, see Remark \ref{rem: M1}.
	If \(\mu\) is even bounded, then Condition \ref{cond: EL3} holds and Theorem \ref{theo: LG} yields that the MP \(((0, \mu), a, x_0)\) has a solution which is also an ELMM. Since there are infinitely many bounded functions \(\mu\), there are also infinitely many ELMMs.
\end{remark}

We also stress that, except of Assumption \ref{Stand Ass}, our conditions for the existence of an E(L)MM are independent of the coefficient \(b\). This is interesting when comparing our result to classical Novikov-type conditions which are typically imposed to ensure the existence of an E(L)MM. 
\begin{example}
	Suppose that \(m = d\), that \(b\) is locally bounded and that \(a \triangleq \textup{Id}\), where \(\textup{Id}\) denotes the identity matrix. 
	Then, \(c_s (\omega) \triangleq -b(s, \omega(s))\) is a good MPR and, by Corollary \ref{coro: EMM}, an EMM exists.
	In this setup, the classical Novikov condition for the existence of an ELMM is given by
	\begin{align}\label{NC}
	E^P \left[\exp\left( \frac{1}{2} \int_0^T \| b(s, X_s)\|^2 \dd s \right)\right] < \infty,
	\end{align}
	which is a condition particularly depending on \(b\). 
	For an example where \eqref{NC} is violated consider
	\begin{align}
	b(t, x_1, ..., x_d) \triangleq (C x_2, 0, ..., 0)^\text{tr},
	\end{align}
	for some large enough constant \(C > 0\).
	Then, it holds that
	\begin{align}
	E^P \left[ \exp \left(\frac{1}{2} \int_0^T \|b (s, X_s)\|^2 \dd s \right) \right] = E \left[ \exp \left(\frac{C^2}{2} \int_0^T W_s^2 \dd s \right)\right] = \infty,
	\end{align}
	where \((W_t)_{t \in [0, T]}\) is a one-dimensional Brownian motion.
	In this particular setting, Novikov's condition can be generalized, see the Corollaries 3.5.14 and 3.5.16 in \cite{KaraShre}.
	However, if the diffusion coefficient \(a\) is not constant, generalized Novikov-type conditions are not available, while Theorem \ref{theo: LG} may still apply.
\end{example}

\subsection{Proof of Theorem \ref{theo: LG}}\label{sec: pf main 1}
We first prove (i). 
%We set 
%\(
%X^c_t \triangleq X_t - x_0 -\int_0^{t} b(s, X_s) \dd s
%\) for \(t \in [0, T]\).
Let \(c = (c_t)_{t \in [0, T]}\) be a good MPR corresponding to \(\mu\), see Assumption \ref{Stand Ass}.
Then, we can define the local \((\F, P)\)-martingale \(Z\) as in \eqref{eq: Z}.
Since \(c\) is good, it follows from Novikov's condition that the stopped process \((Z_{t \wedge \tau_n})_{t \in [0, T]}\) is an \((\F, P)\)-martingale. 
We define a probability measure \(Q_n\) on \((\Omega, \mathcal{F})\) by the Radon-Nikodym derivative \(\dd Q_n = Z_{T \wedge \tau_n} \dd P\).
The following lemma is proven in Appendix \ref{App: pf 2}.
\begin{lemma}\label{lem: cm no explosion}
	If one of the Conditions \ref{cond: EL1}, \ref{cond: EL3} and \ref{cond: EL MCKEAN} holds, then \begin{align}\label{eq: Tul}\lim_{n \to \infty} Q_n(\tau_n > T) = 1.\end{align}
\end{lemma}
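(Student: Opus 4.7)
My plan is to combine a Girsanov change of measure with a Lyapunov-type non-explosion test, separately handled for each of the three conditions. As a preliminary, since \(c\) is good, Novikov's criterion ensures that \((Z_{t \wedge \tau_n})_{t \in [0,T]}\) is a true \((\F, P)\)-martingale, so Girsanov's theorem (Theorem \ref{theo: Gir}) applies. Under \(Q_n\) the continuous local \((\F, P)\)-martingale \(X^c\) acquires the drift \(\int_0^\cdot a(s, X_s)\, c_s\, ds\) on \([0, T \wedge \tau_n]\); adding back the drift \(\int_0^\cdot b(s, X_s)\, ds\) and invoking the MPRE \eqref{eq: MPRE} on the first \(m\) coordinates and Assumption \ref{Stand Ass} on the remaining \(d - m\) coordinates yields that under \(Q_n\) the process \(X - x_0 - \int_0^\cdot (0,\mu)(s, X_s)\, ds\) is, on \([0, T \wedge \tau_n]\), a local \((\F, Q_n)\)-martingale with quadratic variation \(\int_0^\cdot a(s, X_s)\, ds\). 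Equivalently, \((X_{t \wedge \tau_n})_{t \in [0,T]}\) is under \(Q_n\) a stopped solution to the MP \(((0,\mu), a, x_0)\), and the lemma reduces to non-explosion of this diffusion.

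Next, I would apply It\^o's formula to the radial process \(R_t = \tfrac{1}{2}\|X_t\|^2\) on \([0, T \wedge \tau_n]\) under \(Q_n\), writing \(R_t = R_0 + N_t + \tfrac{1}{2}\int_0^t [\textup{trace}\,a(s, X_s) + 2\langle X_s, (0,\mu)(s, X_s)\rangle]\, ds\) with \(\langle N\rangle_\cdot = \int_0^\cdot \langle a(s, X_s) X_s, X_s\rangle\, ds\). Since \(\{\tau_n \leq T\} = \{R_{T \wedge \tau_n} \geq n^2/2\}\), Markov's inequality reduces the lemma to producing a nondecreasing \(\phi : [0, \infty) \to [0, \infty)\) with \(\phi(z)\to \infty\) as \(z \to \infty\) and \(\sup_n E^{Q_n}[\phi(R_{T \wedge \tau_n})] < \infty\).

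The three cases then differ in the choice of \(\phi\). Under Condition \ref{cond: EL3}, \(\phi(z) = z\) works immediately: the drift of \(R\) is bounded by \(\tfrac{1}{2}\zeta(t)(1 + 2R_t)\), and Gronwall's lemma yields a uniform bound on \(E^{Q_n}[R_{T \wedge \tau_n}]\). Under Condition \ref{cond: EL1}, I would take the Feller-type function \(\psi(z) = \int_r^{z \vee r} \frac{1}{C(u)} \int_r^u \frac{C(\sigma)}{A(\sigma)}\, d\sigma\, du\), which is nondecreasing and convex on \([r, \infty)\), diverges at infinity by \eqref{cond: eq H}, and solves the ODE \(A(\psi'' + B \psi') = 1\); applying It\^o to a \(C^2\)-smoothing of \(\psi\) at \(z = r\) and inserting the bounds \eqref{cond bound 1}--\eqref{cond bound 2} on \(\{R_t \geq r\}\) together with \eqref{loc bdd. gen} on \(\{R_t < r\}\) then dominates the drift of \(\psi(R_t)\) by a deterministic integrable function of \(t\). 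Under Condition \ref{cond: EL MCKEAN}, \(\mu \equiv 0\) so the drift of \(R\) is controlled by \(\tfrac{d}{2}\gamma(\|X_t\|)\); the two subcases are handled respectively by \(\phi(z) = z\) along a subsequence on which \(n^2/\gamma(n) \to \infty\), and by a McKean-type Lyapunov function built from \(\xi\) analogous to the Feller case.

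The main obstacle is the bookkeeping under Condition \ref{cond: EL1}: one has to combine the time-dependent bounds \(\langle a(t,x)x, x\rangle \leq \zeta(t) A\) and \(\textup{trace}\,a + 2\langle x, (0,\mu)\rangle \leq \zeta(t)\langle a x, x\rangle B\) with the time-\emph{independent} ODE satisfied by \(\psi\). Grouping terms gives \(L\psi(R) \leq \tfrac{1}{2}\langle a X, X\rangle[\zeta(t) \psi'(R) B(R) + \psi''(R)]\), and the identity \(\psi'(R) B(R) + \psi''(R) = 1/A(R)\) produces a residual factor of \(\max(\zeta(t), 1)\), which is absorbed only because \(\max(\zeta(\cdot), 1) \in L^1([0,T])\) on a finite horizon. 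An analogous reconciliation between a radial ODE and the \(t\)-dependent eigenvalue bound is required for the McKean subcase.
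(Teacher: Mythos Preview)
Your preliminary reduction is correct and matches the paper: under \(Q_n\) the stopped coordinate process solves the MP \(((0,\mu), a, x_0; \tau_n)\), and the lemma becomes a non-explosion estimate for this diffusion. Your treatment of Condition \ref{cond: EL3} via \(\phi(z) = z\) and Gronwall is essentially the paper's argument (which uses \(\psi(x) = 1 + \|x\|^2\)).

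For Condition \ref{cond: EL1}, however, your choice of Lyapunov function creates a genuine gap. Your \(\psi\) solves the \emph{inhomogeneous} ODE \(A(\psi'' + B\psi') = 1\), and you aim for an additive bound on the drift of \(\psi(R_t)\). But \(\zeta\) appears in \emph{both} \eqref{cond bound 1} and \eqref{cond bound 2}: after bounding the trace term you get \(\tfrac{1}{2}\langle aX,X\rangle[\psi'' + \zeta(t) B\psi']\), and then invoking \(\langle aX,X\rangle \leq \zeta(t) A\) gives
\[
\tfrac{1}{2}\zeta(t) + \tfrac{1}{2}\zeta(t)(\zeta(t) - 1)\, A(R)B(R)\psi'(R).
\]
The second summand involves \(\zeta^2\) (only \(\int_0^T \zeta < \infty\) is assumed) and the spatial factor \(AB\psi'\) need not be bounded. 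Your claimed ``residual factor \(\max(\zeta, 1)\)'' is therefore not what actually appears; moreover the inequality \(\zeta\psi'B + \psi'' \leq \max(\zeta,1)(\psi'B + \psi'')\) fails whenever \(\psi'' < 0\) and \(\zeta > 1\). The paper circumvents this by building instead a \emph{multiplicative} Lyapunov function \(u = \sum_{k\geq 0} u_k\) solving the homogeneous ODE \(A(u'' + Bu') = 2u\); this yields \(\mathcal{K}^{((0,\mu),a)}_t \phi \leq \zeta(t)\,\phi\) with a single clean factor of \(\zeta\), so that \(\exp\bigl(-\textup{const.}\int_0^{\cdot} \zeta\bigr)\,\phi(X_{\cdot\wedge\tau_n})\) is dominated by a \(Q_n\)-supermartingale and Markov's inequality applies.

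For Condition \ref{cond: EL MCKEAN}, your first subcase via \(\phi(z)=z\) along a subsequence is fine (the sequence \(Q_n(\tau_n > T)\) is monotone in \(n\), so a subsequence suffices). The second subcase, however, is not handled in the paper by a Lyapunov function: the paper uses the exponential local \(Q_n\)-martingale \(\widehat{Z}_t = \exp\bigl(\alpha_n \int_0^{t\wedge\tau_n} \langle X, dX\rangle - \tfrac{1}{2}\alpha_n^2 \int_0^{t\wedge\tau_n} \langle aX,X\rangle\,ds\bigr)\) with \(\alpha_n = 1/\gamma(n)\), and then a telescoping Riemann-sum estimate over the intermediate levels \(\tau_{z_0} < \dots < \tau_n\) which, in the limit, produces the integral \(\int_{z_0}^n z/\xi(z)\,dz\) from \eqref{eq: cond MCKEAN}. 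A single Lyapunov function ``built from \(\xi\)'' in the generality of \eqref{eq: cond MCKEAN} is not supplied, and it is not evident that one exists.
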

By the fact that \(\tau_n (\omega) \nearrow \infty\) as \(n \to \infty\) for all \(\omega \in \Omega\) and Lemma \ref{lem: cm no explosion}, we have 
\begin{align*}
E^P[Z_T] = \lim_{n \to \infty} E^P\left[Z_{T \wedge \tau_n} \1_{\{\tau_n> T\}}\right] = \lim_{n \to \infty} Q_n (\tau_n > T) = 1.
\end{align*}
Thus, \(Z = (Z_t)_{t \in [0, T]}\) is an \((\F, P)\)-martingale and \(\dd Q = Z_T \dd P\) defines a probability measure on \((\Omega, \mathcal{F})\) which is equivalent to \(P\).
By Girsanov's theorem, see Theorem \ref{theo: Gir} in Appendix \ref{rem: SMP}, \(Q\) solves the MP \(((0, \mu), a, x_0)\).
In particular, for all \(i = 1, ..., m\) the discounted asset price process \(S^i\) is a local \((\F, Q)\)-martingale and it follows that \(Q\) is an ELMM.

Let us now prove that in the case \(d = m\) either of the Conditions \ref{cond: U1} and \ref{cond: U2} implies that we have found the unique ELMM. 
The following lemma is also useful in the next section where we derive conditions for the non-existence of ELMMs.
\begin{lemma}\label{lem: uniqueness}
	If \(d = m\), then each ELMM solves the MP \((0, a, x_0)\).
\end{lemma}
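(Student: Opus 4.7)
The plan is to invoke the equivalent semimartingale characterization of the MP (Proposition \ref{prop: eq MP}) so as to reduce the target statement to verifying that, under $Q$, the process $X - x_0$ is a continuous local $(\F, Q)$-martingale with quadratic covariation matrix $\int_0^\cdot a(s, X_s) \dd s$.

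First, I would record that since $Q \sim P$ and $P$ makes $X$ into a continuous semimartingale, $X$ is also a continuous $Q$-semimartingale. Writing its canonical decomposition as $X = x_0 + M^Q + A^Q$ with $M^Q$ a continuous local $(\F, Q)$-martingale and $A^Q$ a continuous $\F$-adapted process of finite variation starting at $0$, I would then use the invariance of the quadratic covariation of continuous semimartingales under an equivalent change of measure: Proposition \ref{prop: eq MP} applied under $P$ gives $\lle \la e_i, M^Q\ra, \la e_j, M^Q\ra\rre_t = \int_0^t \la a(s, X_s) e_i, e_j\ra \dd s$ for all $i, j = 1, \ldots, d$.

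Next I would kill the drift $A^Q$ using the ELMM property. The defining relation \eqref{eq: DPP} and the continuity of $X$ give $\mathcal{L}(S^i) = \int_0^\cdot \la e_i, \dd X_s\ra = \la e_i, X - x_0\ra = \la e_i, M^Q\ra + \la e_i, A^Q\ra$. Since $Q$ is an ELMM and $S^i > 0$, $\mathcal{L}(S^i)$ is a local $(\F, Q)$-martingale, so subtracting the local $(\F, Q)$-martingale $\la e_i, M^Q\ra$ leaves $\la e_i, A^Q\ra$ as a continuous local $(\F, Q)$-martingale of finite variation starting at $0$, hence identically zero. Running over $i = 1, \ldots, m = d$ and using the linear independence of $\{e_1, \ldots, e_d\}$ forces $A^Q \equiv 0$, whence $X - x_0 = M^Q$ has exactly the properties required, and Proposition \ref{prop: eq MP} applied in the reverse direction yields that $Q$ solves the MP $(0, a, x_0)$.

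I do not anticipate a real obstacle: the equality $d = m$ is used in exactly one place (to pass from ``$\la e_i, A^Q\ra = 0$ for $i \le m$'' to ``$A^Q = 0$''), and the only other step requiring any care is the invariance of the quadratic covariation under the equivalent change of measure, which is classical for continuous semimartingales and is used implicitly elsewhere in the paper.
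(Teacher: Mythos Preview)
Your proposal is correct and follows essentially the same route as the paper's proof: both use that \(\mathcal{L}(S^i)=\langle e_i, X-x_0\rangle\) is a local \((\F,Q)\)-martingale (so the drift of \(X\) under \(Q\) vanishes coordinate-by-coordinate, using \(d=m\)), and both use invariance of the quadratic variation under the equivalent change of measure together with Proposition~\ref{prop: eq MP} to conclude. The only cosmetic difference is that the paper phrases the first step as ``the stochastic logarithm of a local martingale is a local martingale'' and cites Theorem~\ref{theo: Gir} for the quadratic-variation invariance, whereas you write out the semimartingale decomposition explicitly.
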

\begin{proof}
	Suppose that \(Q\) is an ELMM. Then, \(S_0^{-1}S\), and thus also its stochastic logarithm \(X - x_0\), is a continuous local \((\F, Q)\)-martingale. 
	Since, by Girsanov's theorem, see Theorem \ref{theo: Gir} in Appendix \ref{rem: SMP}, the quadratic variation process of \(X\) is the same for \(P\) and \(Q\), it follows from Proposition \ref{prop: eq MP} in Appendix \ref{rem: SMP} that \(Q\) solves the MP \((0, a, x_0)\). 
\end{proof}
Due to this observation, the uniqueness of the ELMM follows immediately from the following lemma. A proof is given in Appendix \ref{App: pf 3}.
\begin{lemma}\label{lem: techn uni}
	If one of the Conditions \ref{cond: U1} and \ref{cond: U2} holds, then the set of solutions to the MP \((0, a, x_0)\) is a singleton. 
\end{lemma}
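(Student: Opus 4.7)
The plan is to handle the two conditions separately. Under Condition \ref{cond: U2} I would proceed via pathwise uniqueness and the Yamada--Watanabe theorem, while under Condition \ref{cond: U1} I would invoke the Stroock--Varadhan type uniqueness for continuous, locally uniformly elliptic diffusion coefficients.

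For Condition \ref{cond: U2}: Given any solution $P$ to the MP $(0, a, x_0)$, the coordinate process $X - x_0$ is a continuous local $(\F, P)$-martingale with quadratic (co)variation $\int_0^\cdot a(s, X_s)\,\dd s$. By a standard martingale representation (on a possibly enlarged probability space) there exists a $d$-dimensional Brownian motion $W$ such that
\begin{equation*}
X_t = x_0 + \int_0^t a^{1/2}(s, X_s)\, \dd W_s, \quad t \in [0, T].
\end{equation*}
Thus every MP solution arises as the law of a weak solution to this SDE. By the Yamada--Watanabe theorem, uniqueness in law will follow from pathwise uniqueness. To establish the latter, take two solutions $X, Y$ driven by the same $W$ with $X_0 = Y_0 = x_0$, and localize with $\tau_n$ as in \eqref{eq: tau n}. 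Using It\^o's isometry, the Lipschitz estimate in Condition \ref{cond: U2}, and elementary bounds,
\begin{equation*}
E\bigl[\|X_{t \wedge \tau_n} - Y_{t \wedge \tau_n}\|^2\bigr] \leq d \int_0^t \zeta_n(s)\, E\bigl[\|X_{s \wedge \tau_n} - Y_{s \wedge \tau_n}\|^2\bigr]\, \dd s,
\end{equation*}
and Gronwall's lemma (applicable since $\zeta_n \in L^1([0, T])$) gives $X_{\cdot \wedge \tau_n} = Y_{\cdot \wedge \tau_n}$ almost surely. Letting $n \to \infty$ (using $\tau_n \nearrow \infty$, which is automatic under the local boundedness implied by the Lipschitz property) yields pathwise uniqueness.

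For Condition \ref{cond: U1}: The coefficient $a$ is continuous in $x$ uniformly in $t$ and pointwise uniformly elliptic, so on every closed ball $\{\|x\| \leq n\}$ it is bounded, uniformly continuous, and uniformly elliptic. This is the setting where Stroock--Varadhan's uniqueness theorem for martingale problems applies: for any two solutions $P, P'$, I would first truncate $a$ outside $\{\|x\| \leq n\}$ to produce coefficients to which the classical global result applies, obtaining a unique solution $P_n$ to the truncated MP. Stopping $P$ and $P'$ at $\tau_n$, both stopped measures agree with the stopping of $P_n$ on $\mathcal{F}_{\tau_n}$, so $P = P'$ on $\mathcal{F}_{\tau_n}$. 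Since $\tau_n \to T$ almost surely (again thanks to local boundedness), $P = P'$ on $\mathcal{F}$.

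The main obstacle is the Condition \ref{cond: U1} case, as the classical Stroock--Varadhan uniqueness rests on deeper parabolic $L^p$ theory (Calder\'on--Zygmund / Krylov estimates for the heat equation associated with $a$) and must be applied carefully in the time-inhomogeneous setting; by contrast, the Condition \ref{cond: U2} case reduces to a clean Gronwall argument. The localization step — transferring a global uniqueness result for bounded, uniformly continuous, uniformly elliptic coefficients to our local setting — is routine in both cases once non-explosion of $\tau_n$ is ensured.
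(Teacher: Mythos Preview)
Your proposal is correct and follows essentially the same route as the paper: for Condition~\ref{cond: U2} the paper also passes to the SDE \(\dd Y_t = a^{1/2}(t,Y_t)\,\dd W_t\), proves pathwise uniqueness via It\^o's isometry, the local Lipschitz bound and Gronwall, and then invokes Yamada--Watanabe; for Condition~\ref{cond: U1} the paper likewise localizes by replacing \(a\) with \(a_n = \Phi_n a + (1-\Phi_n)\textup{Id}\) for a cutoff \(\Phi_n\), applies the Stroock--Varadhan uniqueness theorem (Theorems~7.2.1 and~10.1.1 in \cite{SV}) to the truncated MP, and concludes by letting \(\tau_n \nearrow \infty\). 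One small imprecision: in the pathwise-uniqueness step your localizing time must be the first exit of \emph{either} process from the ball, i.e.\ \(\rho_n \triangleq \inf(t:\|X_t\|\vee\|Y_t\|\geq n)\), rather than the \(\tau_n\) of \eqref{eq: tau n}; and \(\rho_n \nearrow \infty\) holds simply because both solutions are continuous \(\mathbb{R}^d\)-valued paths on \([0,T]\), not because of any Lipschitz property.
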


To prove the existence of a SMD or an EMM, it suffices to repeat the previous argument with \(ZS^i\) or \(S^i\) instead of \(Z\). 
The additional assumption \eqref{eq: a almost locally bounded} ensures that a local Novikov condition in the spirit of \eqref{eq: loc nov} holds.
We omit the details. Since, by (i), there is at most one ELMM under either of the Conditions \ref{cond: U1} and \ref{cond: U2}, there can obviously also be at most one EMM. The proof of Theorem \ref{theo: LG} is complete.
\qed
\begin{remark}\label{rem: infinite time horizon}
	A version of Theorem \ref{theo: LG} also holds on the infinite time horizon. Let us shortly point out which changes are necessary. 
	We implicitly assume that our setting is adjusted to the infinite time horizon. That means, for instance, that \(\Omega\) is the space of continuous functions \([0, \infty) \to \mathbb{R}^d\), equipped with the local uniform topology, and \(\mathcal{F}\) is the corresponding Borel \(\sigma\)-field. In particular, the Assumption \ref{Stand Ass} has to hold for all \(T >0\). Then, we can define a probability measure \(Q_n\) by the Radon-Nikodym derivative \(\dd Q_n = Z_{n \wedge \tau_n} \dd P\).
	If one of the Conditions \ref{cond: EL1}, \ref{cond: EL3} and \ref{cond: EL MCKEAN} holds for all \(T > 0\), then \eqref{eq: Tul} holds for all \(T > 0\) and an application of Tulcea's extension theorem, see Theorem 1.3.5 in \cite{SV}, yields the existence of a probability measure on \((\Omega, \mathcal{F})\) which coincides with \(Q_n\) on \(\mathcal{F}_{n \wedge \tau_n}\). This extension is locally equivalent of \(P\), see Lemma III.3.3 in \cite{JS}, and such that \(\langle X, e_i\rangle\) is a local martingale. Similarly, 
	if (in addition) the Conditions \ref{cond: E1} and \ref{cond: E3} hold for all \(T > 0\), then \(Z S^i\) is a \(P\)-martingale (\(S^i\) is a martingale w.r.t. the extension). 
\end{remark}

\section{Deterministic Conditions for the Non-Existence of SMDs and E(L)MMs}\label{sec: nonex}
In this section we suppose that \(d = m\) and, for simplicity, \(\|x_0\| = 1\). 
The following conditions can be viewed as complements to the Conditions \ref{cond: EL1} and \ref{cond: E1}. They are extensions of criteria for explosion by \cite{doi:10.1137/1105016} as given in Theorem 10.2.4 in \cite{SV} and imply that the MP \((0, a, x)\), respectively the MP \((a e_i, a, x)\), has no solution.  

We stress that Theorem 10.2.4 in \cite{SV} only implies that the MPs \((0, a, x)\) and \((a e_i, a, x_0)\) have no solutions for large enough time horizons. We establish this result for an arbitrary finite time horizon.
\begin{condition}\label{cond: NL1}
	There exist continuous functions \(A \colon (0, \infty) \to (0, \infty)\) and \(B \colon (0, \infty) \to (0, \infty)\) such that for all \(t \in [0, T], \rho >0\) and \(x \in \{z \in \mathbb{R}^d\colon \|z\| = \rho\}\) 
	\begin{align}
	A \left(\frac{\rho^2}{2}\right) &\leq \la a(t, x)x, x\ra,\label{A1}\\
	\la a(t, x)x, x\ra B \left(\frac{\rho^2}{2}\right)&\leq \textup{ trace } a(t, x),\label{B1}\\
	&\hspace{-1.5cm}\int_{\frac{1}{2}}^\infty \frac{\int_{\frac{1}{2}}^z \frac{C(\sigma)}{A(\sigma)} \dd \sigma}{C(z)} \dd z < \infty,\label{N MM 1}\\
	&\hspace{-1.5cm}\int_0^{\frac{1}{2}} \frac{\int_z^{\frac{1}{2}} \frac{C(\sigma)}{A(\sigma)} \dd \sigma}{C(z)} \dd z = \infty\label{N MM 2},
	\end{align}
	where
	\begin{align}
	C (z) \triangleq \exp \left(\int_{\frac{1}{2}}^z B(\sigma) \dd \sigma\right).
	\end{align}
	Moreover, for all \(n \in \mathbb{N}\) there exist a strictly increasing function \(\rho_n \colon [0, \infty) \to [0, \infty)\) with \(\rho_n(0) = 0\) and a strictly increasing, concave and continuous function \(\kappa_n \colon [0, \infty) \to [0, \infty)\) with \(\kappa_n(0) = 0\) such that for all \(x, y \in [\frac{1}{n}, n]\) it holds that 
	\begin{align}\label{gamma n}
	\|A^{\frac{1}{2}} (x) - A^{\frac{1}{2}} (y)\|^2 &\leq \rho_n (\|x - y\|),\\
	\label{kappa n}
	\| A(x) B(x) - A(y) B(y)\| &\leq \kappa_n(\|x - y\|),
	\end{align}
	and for all \(\epsilon > 0\)
	\begin{align} \label{rho cond}
	\int_0^\epsilon \frac{1}{\rho_n(z)}\dd z &= \infty,\\
	\int_0^\epsilon \frac{1}{\kappa_n(z)} \dd z &= \infty.\label{kappa cond}
	\end{align}
\end{condition}
\begin{remark}
	The second part of Condition \ref{cond: NL1} is in the spirit of the conditions for pathwise uniqueness as given by \cite{yamada1971}.
	For instance, one may choose \(\rho_n (x)=\kappa_n (x) = x\) if \(A^{\frac{1}{2}}\) is locally H\"older continuous with exponent \(\frac{1}{2}\) and \(AB\) is locally Lipschitz continuous.
\end{remark}
We also formulate a condition for non-existence of a SMD. Recall that if no SMD exists, then also no EMM exists.
\begin{condition}\label{cond: N1}
	For at least one \(i \in \{ 1, ..., d\}\) the following holds:
	There exist continuous functions \(A \colon (0, \infty) \to (0, \infty)\) and \(B \colon (0, \infty) \to (0, \infty)\) such that for all \(t \in [0, T], \rho >0\) and \(x \in \{z \in \mathbb{R}^d \colon\|z\| = \rho\}\), 
	\begin{align}
	A \left(\frac{\rho^2}{2}\right) &\leq \la a(t, x)x, x\ra,\\
	\la a(t, x)x, x\ra B \left(\frac{\rho^2}{2}\right)&\leq \textup{ trace } a(t, x) + 2 \la x, a(t, x) e_i\ra,\label{eq: 4.11}
	\end{align}
	and \eqref{N MM 1} and \eqref{N MM 2} hold. 
	Moreover, for all \(n \in \mathbb{N}\) there exist a strictly increasing function \(\rho_n \colon [0, \infty) \to [0, \infty)\) with \(\rho_n(0) = 0\) and a strictly increasing, concave and continuous function \(\kappa_n \colon [0, \infty) \to [0, \infty)\) with \(\kappa_n(0) = 0\) such that \eqref{gamma n} and \eqref{kappa n} hold for all \(x, y \in [\frac{1}{n}, n]\)
	and \eqref{rho cond} and \eqref{kappa cond} hold for all \(\epsilon > 0\).
\end{condition}
The proof of the following theorem is based on a comparison argument in the spirit of the proof of Khasminskii's criterion for explosion as given by \cite{ikeda1977} 
together with the fact that one-dimensional diffusions explode arbitrarily fast.
\begin{theorem}\label{theo: det NE 1}
	\begin{enumerate}
		\item[\textup{(i)}] Suppose that Condition \ref{cond: NL1} holds, then no ELMM exists.
		\item[\textup{(ii)}] Suppose that Condition \ref{cond: N1} holds, then no SMD exists. In particular, if Condition \ref{cond: N1} holds, then no EMM exists.
	\end{enumerate}
\end{theorem}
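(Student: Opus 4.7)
The plan is to prove both parts by contradiction. For (i), if an ELMM $Q$ exists, Lemma \ref{lem: uniqueness} forces $Q$ to solve the MP $(0, a, x_0)$, since $d=m$. For (ii), starting from a postulated SMD $Z$, I would use the Kunita--Watanabe decomposition from the proof of Theorem \ref{prop: NUPBR} to extract an MPR $c$ so that
\[
\mathcal{L}(ZS^i) = \int_0^\cdot \langle c_s + e_i, dX^c_s\rangle + N,
\]
with $\langle N, X^c\rangle = 0$ and the MPRE holding for all $j = 1, \ldots, d$ (again using $d=m$). Since $ZS^i$ is a strictly positive \emph{true} $P$-martingale, $d\tilde Q := (Z_T S^i_T / S^i_0) dP$ defines a probability equivalent to $P$, and Girsanov's theorem combined with the orthogonality $\langle N, X^c\rangle = 0$ shows that under $\tilde Q$ the process $X$ has drift $b + a(c + e_i) = a e_i$, so $\tilde Q$ solves the MP $(a e_i, a, x_0)$. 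Both parts thus reduce to the claim that, under the respective Condition, the MP $(f, a, x_0)$ with $f \equiv 0$ or $f = a e_i$ has no solution on $[0, T]$.

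Given a candidate solution, I would follow the radial strategy behind Khasminskii's explosion test as executed in \cite{ikeda1977}. Setting $R_t := \|X_t\|^2/2$ and applying It\^o's formula yields the continuous semimartingale
\[
dR_t = \tfrac{1}{2}\bigl(\operatorname{trace} a(t, X_t) + 2\langle X_t, f(t, X_t)\rangle\bigr) dt + \langle X_t, dX^c_t\rangle,
\]
whose martingale part has quadratic variation $\int_0^\cdot \langle a(s, X_s) X_s, X_s\rangle ds$. The bounds \eqref{A1}--\eqref{B1} of Condition \ref{cond: NL1}, and their analogues \eqref{eq: 4.11} in Condition \ref{cond: N1}, translate to
\[
\text{drift of } R_t \;\geq\; \tfrac{1}{2} A(R_t) B(R_t), \qquad \frac{d\langle R\rangle_t}{dt} \;\geq\; A(R_t).
\]
These are exactly the infinitesimal data of the autonomous one-dimensional SDE
\[
dY_t = \sqrt{A(Y_t)}\, dW_t + \tfrac{1}{2} A(Y_t) B(Y_t) dt, \qquad Y_0 = R_0 = \tfrac{1}{2},
\]
and I would combine a Dambis--Dubins--Schwarz representation of the martingale part of $R$ with a second time-change driven by $A(R)$ to get a pathwise (or at least distributional) domination $R \succeq Y$ up to the explosion time of $Y$. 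The Yamada-type conditions \eqref{gamma n}--\eqref{kappa cond} are imposed precisely so that Yamada--Watanabe applies on each interval $[1/n, n]$, yielding a unique strong solution $Y$ on $(0, \infty)$ up to its exit time $\zeta$. The integrability conditions \eqref{N MM 1} and \eqref{N MM 2} are Feller's test for explosion at $\infty$ together with non-explosion at $0$, so $\zeta < \infty$ $P$-a.s.\ and $Y_{\zeta-} = \infty$.

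The main obstacle is the passage from "$\zeta < \infty$ a.s." to explosion \emph{within} the prescribed finite horizon $T$, which is exactly what distinguishes the present result from Theorem 10.2.4 in \cite{SV}. The key input will be that one-dimensional regular diffusions explode arbitrarily fast: because $\sqrt{A}$ is strictly positive and the coefficients are continuous on $(0, \infty)$, the explosion time $\zeta$ has an absolutely continuous distribution with full support in $(0, \infty)$, so $P(\zeta \leq T) > 0$ for every $T > 0$. Transported back through the comparison, this gives a strictly positive probability that $\|X_t\|$ leaves every compact set before $T$ under the assumed solution, contradicting the continuity of $X$ on $[0, T]$. A secondary technical point, specific to (ii), is that the SMD hypothesis (rather than a mere SLMD) is what makes the local Girsanov change genuinely produce a probability measure, and the orthogonal term $N$ in $\mathcal{L}(Z)$ does not contribute to the drift of $X$ under $\tilde Q$ thanks to $\langle N, X^c\rangle = 0$; starting from $\|x_0\| = 1$, i.e. $R_0 = \tfrac{1}{2}$, is chosen precisely to match the Feller integrals split at $\tfrac{1}{2}$ in \eqref{N MM 1}--\eqref{N MM 2}.
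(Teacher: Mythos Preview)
Your proposal is correct and follows essentially the same route as the paper: reduce (i) via Lemma~\ref{lem: uniqueness} and (ii) via a Girsanov change with density $Z_TS^i_T/S^i_0$ (the paper packages this as Lemma~\ref{lem: no EMM}) to non-existence of a solution to the MP $(\hat b,a,x_0)$, then run the radial comparison with the one-dimensional SDE and invoke Feller's test plus arbitrarily-fast explosion. Two small corrections: the paper performs a \emph{single} time change $\phi_t=\inf\{s:\int_0^s \langle aY,Y\rangle/A(p(Y))\,dr\ge t\}$ (so $\phi_t\le t$ by \eqref{A1}) rather than DDS followed by another change, and the conditions \eqref{N MM 1}--\eqref{N MM 2} only yield $P(\zeta\le T)>0$, not $\zeta<\infty$ a.s., which is all that is needed for the contradiction.
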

\begin{remark}
	The Theorems \ref{prop: NUPBR} and \ref{theo: det NE 1} can be used to construct financial markets in which a SLMD exists, while no ELMM exists. We give an example in Section \ref{section 4} below. Similarly, the Theorems \ref{prop: NUPBR} and \ref{theo: det NE 1} can be used to construct a financial market in which a SMD exists, while no ELMM exists.
	
	Theorem \ref{theo: det NE 1} in combination with Theorem \ref{theo: LG} can be used to construct financial markets in which an ELMM exists, but no EMM.
	In this case the market includes a financial bubble as defined by \cite{Cox2005}, see also Section \ref{section 4} below.
	
	Furthermore, if a MPR exists and Condition \ref{cond: N1} holds, then there exists a SLMD which is no SMD.
\end{remark}
\noindent
\emph{Proof of Theorem \ref{theo: det NE 1}:}
By Lemma \ref{lem: uniqueness}, part (i) follows if Condition \ref{cond: NL1} implies that the MP \((0, a, x_0)\) has no solution. 
For the second part, note the following 
\begin{lemma}\label{lem: no EMM}
	If a SMD exists, then the MP \((a e_i, a, x_0)\) has a solution for all \(i = 1, ..., d\).
\end{lemma}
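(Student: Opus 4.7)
Fix $i \in \{1, \ldots, d\}$ and let $Z$ be a SMD; after rescaling by the (deterministic, strictly positive) constant $Z_0$, we may assume $Z_0 = 1$. The plan is to build the desired solution to the MP $(ae_i, a, x_0)$ as a probability measure $Q_i$ equivalent to $P$, with Radon--Nikodym density $\dd Q_i/\dd P = Z_T S^i_T / S^i_0$. This is well defined because $Z$ being a SMD means $Z S^i$ is a strictly positive true $(\F, P)$-martingale, so $L_t := Z_t S^i_t/S^i_0$ is a strictly positive $(\F, P)$-martingale starting at one. The remaining task is then to identify the Girsanov drift of $X$ under $Q_i$ and to check that it equals $a e_i$.

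To carry this out, I would first apply Theorem \ref{prop: NUPBR} to extract a MPR $c = (c_t)_{t \in [0, T]}$ from the existence of the SMD. Since $d = m$, the MPRE \eqref{eq: MPRE} for all $i = 1, \ldots, d$ forces $\int_0^t [b(s, X_s) + a(s, X_s) c_s] \dd s = 0$ for every $t \in [0, T]$, $P$-a.s. Next, following the Kunita--Watanabe step in the proof of Theorem \ref{prop: NUPBR} (direction (ii) $\Rightarrow$ (i)), I would decompose $\mathcal{L}(Z) = \int_0^\cdot \la c_s, \dd X^c_s\ra + N$, with $N$ a continuous local $(\F, P)$-martingale orthogonal to each $\la e_j, X^c\ra$. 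Combining this decomposition with $\mathcal{L}(S^i) = \int_0^\cdot \la e_i, \dd X_s\ra$, the product formula $\mathcal{L}(Z S^i) = \mathcal{L}(Z) + \mathcal{L}(S^i) + \lle \mathcal{L}(Z), \mathcal{L}(S^i)\rre$, and the MPRE to cancel the $b + ac$ finite-variation piece, I obtain
\[ \mathcal{L}(L) = \int_0^\cdot \la c_s + e_i, \dd X^c_s\ra + N. \]

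With $\mathcal{L}(L)$ in this explicit form, Girsanov's theorem (Theorem \ref{theo: Gir}) applied to the local $P$-martingale $X^c$ yields that $X^c - \int_0^\cdot a(s, X_s)(c_s + e_i) \dd s$ is a local $Q_i$-martingale; the orthogonal part $N$ contributes nothing to this drift. Unravelling $X^c_t = X_t - x_0 - \int_0^t b(s, X_s) \dd s$ and using the MPRE once more to cancel $b + ac$ shows that $X_t - x_0 - \int_0^t a(s, X_s) e_i \dd s$ is a local $Q_i$-martingale. As quadratic covariation is preserved under an equivalent change of measure, $\lle X, X\rre$ remains $\int_0^\cdot a(s, X_s) \dd s$ under $Q_i$, and Proposition \ref{prop: eq MP} in Appendix \ref{rem: SMP} then delivers that $Q_i$ solves the MP $(ae_i, a, x_0)$. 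The main obstacle is the careful bookkeeping in assembling $\mathcal{L}(L)$: verifying that the orthogonal remainder $N$ in $\mathcal{L}(Z)$ passes into $\mathcal{L}(L)$ without polluting the Girsanov drift, after which the MPRE absorbs exactly the unwanted $b + ac$ contribution and leaves the desired drift $ae_i$.
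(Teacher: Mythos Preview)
Your proposal is correct and follows essentially the same route as the paper: extract a MPR $c$ via the Kunita--Watanabe decomposition of $\mathcal{L}(Z)$ (as in the proof of Theorem \ref{prop: NUPBR}), compute $\mathcal{L}(ZS^i)=\int_0^\cdot\la c_s+e_i,\dd X^c_s\ra+N$, define $Q_i$ from the true martingale $ZS^i/(Z_0S^i_0)$, and then apply Girsanov together with the MPRE (which, since $d=m$, kills $b+ac$) and Proposition \ref{prop: eq MP} to identify $Q_i$ as a solution of the MP $(ae_i,a,x_0)$. The only cosmetic difference is that the paper records the covariation $\lle\la X,e_i\ra, ZS^i\rre$ explicitly before invoking Theorem \ref{theo: Gir}, whereas you apply Girsanov directly to $X^c$ using the explicit form of $\mathcal{L}(L)$; also, calling $Z_0$ a ``deterministic constant'' is slightly imprecise (it is $\mathcal{F}_{0+}$-measurable), but the normalization is harmless and the paper simply keeps $Z_0$ in the denominator.
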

\begin{proof}
	Suppose there exists a SMD \(Z = (Z_t)_{t \in [0, T]}\).
	As shown in the proof of Theorem \ref{prop: NUPBR}, there exists a MPR \(c = (c_t)_{t \in [0, T]}\) and a local \((\F, P)\)-martingale \(N = (N_t)_{t \in [0, T]}\) with \(\lle N, \la e_i,X^c\ra\rre = 0\) for all \(i = 1, ..., d\) such that 
	\begin{equation}\begin{split}
	\mathcal{L}\left(Z S^i\right) = \int_0^\cdot \la c_s + e_i, \dd X^c_s\ra + N.
	\end{split}
	\end{equation}
	Denote by \(\mathcal{E}(\cdot)\) the stochastic exponential. We have \begin{align}\mathcal{E}\left(\mathcal{L}\left(Z S^i\right)\right) = \frac{ZS^i}{Z_0S^i_0},\end{align}
	see Corollary II.8.7 in \cite{JS}.
	Thus, it holds that 
	\begin{align}\label{eq: G appli}
	\lle \la X, e_i\ra, Z S^i\rre = \int_0^\cdot Z_{s-} S^i_s \langle e_i, a(s, X_s)(c_s + e_i) \rangle \dd s.
	\end{align}
	By definition, \(ZS^i\) is an \((\F, P)\)-martingale.
	Thus, we can define a probability measure \(Q^i\) on \((\Omega, \mathcal{F})\) by the Radon-Nikodym derivative \(\dd Q^i = Z_TS^i_T/Z_0S_0^i \dd P\). 
	Now, Girsanov's theorem, see Theorem \ref{theo: Gir} in Appendix \ref{rem: SMP}, Proposition \ref{prop: eq MP} in Appendix \ref{rem: SMP}, \eqref{eq: MPRE} and \eqref{eq: G appli} yield that \(Q^i\) solves the MP \((a e_i, a, x_0)\). 
\end{proof}

Thus, since the existence of an EMM implies the existence of a SMD, part (ii) follows if Condition \ref{cond: N1} implies that the MP \((a e_i, a, x_0)\) has no solution for some \(i \in \{1, ..., d\}\). The argument is identical for both cases, so that we prove them simultaneously. 
Let 
\begin{align}
\hat{b} \triangleq \begin{cases}
a e_i&\text{ if Condition \ref{cond: NL1} holds},\\
0,&\text{ if Condition \ref{cond: N1} holds}.
\end{cases}
\end{align}
We use a contradiction argument.
Suppose that the MP \((\hat{b}, a, x_0)\) has a solution. 
Since the existence of weak solutions to SDEs is equivalent to the existence of solutions to MPs, see Theorem \ref{theo: equivalence MP SDE up to expl} in Appendix \ref{Appendix}, there exists a solution process \(Y = (Y_t)_{t \in [0, T]}\) to the SDE
\begin{align}
\dd Y_t = \hat{b}(t, Y_t)\dd t + a^{\frac{1}{2}}(t, Y_t) \dd W_t, \quad Y_0 = x_0,
\end{align}
where \(W = (W_t)_{t \in [0, T]}\) is a \(d\)-dimensional Brownian motion.
Define  \(p(x) \triangleq \frac{1}{2} \|x\|^2\) and
\begin{align}
\phi_t \triangleq \inf\left(s \in [0, T] \colon \int_0^s \frac{\la a(r, Y_r) Y_r, Y_r\ra}{A(p(Y_r))} \dd r \geq t\right).
\end{align}
Note that, by \eqref{A1}, \(\phi_t \leq t\).
Set 
\begin{align}
u(t, x) \triangleq \la x, \hat{b} (t, x)\ra + \frac{1}{2} \text{ trace } a(t, x).
\end{align}
Using It\^o's formula, we obtain 
\begin{align}
\dd p(Y_t) =  \la Y_t, 	a^{\frac{1}{2}} (t, Y_t)\dd W_t\ra + u(t, Y_t)\dd t,\quad p(Y_0) = \frac{1}{2}.
\end{align}
The map
\begin{align} t \mapsto \int_0^t \frac{\la a(r, Y_r) Y_r, Y_r\ra}{A(p(Y_r))}\dd r\end{align} is continuous and strictly increasing. Thus, also \(t \mapsto \phi_t\) is continuous and strictly increasing and, by classical rules for time-changed (stochastic) integrals, see Appendix \ref{Appendix: Time Change}, we obtain
\begin{equation}\begin{split}
\dd p(Y_{\phi_t}) &= \la Y_{\phi_t}, a^{\frac{1}{2}} (\phi_t, Y_{\phi_t}) \dd W_{\phi_t}\ra + u(\phi_t, Y_{\phi_t}) \dd \phi_t
\\&= \la Y_{\phi_t}, a^{\frac{1}{2}} (\phi_t, Y_{\phi_t}) \dd W_{\phi_t}\ra + \frac{1}{2} \frac{ A(p(Y_{\phi_t}))u(\phi_t, Y_{\phi_t})}{ \la a(\phi_t, Y_{\phi_t}) Y_{\phi_t}, Y_{\phi_t}\ra} \dd t.
\end{split}
\end{equation}
We set 
\begin{align}
B_t \triangleq \int_0^t \la A^{- \frac{1}{2}} (p(Y_{\phi_t})) Y_{\phi_t}, a^{\frac{1}{2}} (\phi_t, Y_{\phi_t}) \dd W_{\phi_t}\ra,\quad t \in [0, T].
\end{align}
Due to Proposition \ref{prop: TC Mg} in Appendix \ref{Appendix: Time Change} the process \(B = (B_t)_{t \in [0, T]}\) is a local martingale (w.r.t. a time-changed filtration) with
\begin{equation}\begin{split}
\lle B \rre_t &= \int_0^t \frac{\la a (\phi_s, Y_{\phi_s}) Y_{\phi_s}, Y_{\phi_s}\ra}{A (p (Y_{\phi_s}))} \dd\hspace{0.06cm} \lle W_{\phi_\cdot}\rre_s
\\&= \int_0^t \frac{\la a (\phi_s, Y_{\phi_s}) Y_{\phi_s}, Y_{\phi_s}\ra}{A (p (Y_{\phi_s}))} \dd\phi_s = t.
\end{split}
\end{equation}
Thus, by L\'evy's characterization of Brownian motion, see, for instance, Theorem IV.3.6 in \cite{RY}, the process \(B\) is a Brownian motion. 
We have found the dynamics
\begin{align}
\dd p(Y_{\phi_t}) = A^{\frac{1}{2}} (p(Y_{\phi_t})) \dd B_t + \frac{A(p(Y_{\phi_t})) u(\phi_t, Y_{\phi_t})}{\la a(\phi_t, Y_{\phi_t}) Y_{\phi_t}, Y_{\phi_t}\ra} \dd t,\quad p(Y_{\phi_0}) = \frac{1}{2}.
\end{align}

Let \(I_\Delta \triangleq (0, \infty) \cup \{\Delta\}\) be the one-point compactification of \(I \triangleq (0, \infty)\). 
As in Appendix \ref{Appendix}, we denote by \(\mathbb{W}\) the space of all continuous functions \(\omega\colon [0, \infty) \to I_\Delta\) such that \(\omega(t) = \Delta\) for all \(t \geq \gamma_I(\omega) \triangleq \inf(t \in [0, \infty)\colon \omega(t) \not \in I)\).
Furthermore, we set
\begin{align}
\gamma_n (\omega) \triangleq \inf\left(t \in [0, \infty) \colon \omega(t) \not \in \left(\frac{1}{n}, n\right) \right),\quad \omega \in \mathbb{W}.
\end{align}
The following lemma is proven in Appendix \ref{App: pf 4}.

\begin{lemma}\label{lem: Z compactification}
	There exists a process \(Z = (Z_t)_{t \in [0, \infty)}\) with paths in \(\mathbb{W}\) such that for all \(t \in [0, \infty)\) and \(n \in \mathbb{N}\)
	\begin{equation}\label{SDE: Z}\begin{split}
	Z_{t \wedge \gamma_n(Z)} &= \frac{1}{2} + \int_0^{t \wedge \gamma_n(Z)} A^{\frac{1}{2}}(Z_s) \dd B_s + \frac{1}{2} \int_0^{t \wedge \gamma_n(Z)} A(Z_s) B(Z_s)\dd s.
	\end{split}
	\end{equation}
\end{lemma}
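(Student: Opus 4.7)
The plan is to build $Z$ by first solving the SDE \eqref{SDE: Z} on the growing family of bounded sub-intervals $(1/n, n) \subset I$, and then pasting the local solutions together, declaring $Z$ to equal the cemetery point $\Delta$ after explosion. Let $\sigma \triangleq A^{\frac{1}{2}}$ and $\beta \triangleq \frac{1}{2} A B$, which are continuous on $(0, \infty)$. For each $n$ large enough that $\frac{1}{2} \in (\frac{1}{n}, n)$, extend $\sigma|_{[1/n, n]}$ and $\beta|_{[1/n, n]}$ to bounded continuous functions $\sigma_n, \beta_n \colon \mathbb{R} \to \mathbb{R}$ (say by constant continuation outside $[1/n, n]$). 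Weak existence of a solution, on the underlying probability space carrying $B$, to the SDE
\begin{align}
\dd Y_t = \beta_n(Y_t)\dt + \sigma_n(Y_t)\dd B_t,\quad Y_0 = \tfrac{1}{2},
\end{align}
follows from standard Stroock--Varadhan / Skorokhod existence, since the coefficients are bounded and continuous.

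For pathwise uniqueness I would invoke a Yamada--Watanabe style argument. The diffusion estimate \eqref{gamma n} together with \eqref{rho cond} is exactly the hypothesis of the original criterion of \cite{yamada1971} for $\sigma_n$ on $[1/n, n]$. The drift $\beta_n$, by \eqref{kappa n} and \eqref{kappa cond}, admits a concave modulus $\kappa_n$ with $\int_0^\epsilon \dd \xi/\kappa_n(\xi) = \infty$, which via a Bihari/Osgood comparison replaces the usual Lipschitz requirement on the drift in the Yamada--Watanabe smoothing computation. Combining the two yields pathwise uniqueness for the modified SDE. By Yamada--Watanabe (weak existence plus pathwise uniqueness implies strong existence and uniqueness) we obtain a unique continuous strong solution $Z^n$ to the above SDE on the given filtered probability space driven by $B$.

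Now set $\tau_n \triangleq \inf(t \geq 0 \colon Z^n_t \notin (\frac{1}{n}, n))$. Since $\sigma_n = \sigma_{n+1}$ and $\beta_n = \beta_{n+1}$ on $[1/n, n]$, pathwise uniqueness forces $Z^n_{t \wedge \tau_n} = Z^{n+1}_{t \wedge \tau_n}$ for all $t \geq 0$, so the family $(Z^n)$ is consistent and $(\tau_n)$ is almost surely increasing. Let $\tau_\infty \triangleq \lim_{n \to \infty} \tau_n$ and define
\begin{align}
Z_t \triangleq \begin{cases} Z^n_t & \text{if } t < \tau_n \text{ for some } n,\\ \Delta & \text{if } t \geq \tau_\infty. \end{cases}
\end{align}
Consistency makes this unambiguous, and the continuity of each $Z^n$ on $[0, \tau_n)$ guarantees that the sample paths of $Z$ lie in $\mathbb{W}$ with $\gamma_I(Z) = \tau_\infty$ and $\gamma_n(Z) = \tau_n$. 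Since $\sigma_n$ and $\beta_n$ coincide with $\sigma$ and $\beta$ on $(\frac{1}{n}, n)$, stopping the SDE for $Z^n$ at $\gamma_n(Z) = \tau_n$ yields exactly \eqref{SDE: Z}.

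The main obstacle is the pathwise uniqueness step, because the drift is only Osgood-regular rather than Lipschitz; the classical Yamada--Watanabe smoothing argument (with the approximate absolute-value functions $\varphi_k$ built from the $\rho_n$) must be combined with a Bihari comparison to absorb the $\kappa_n$-modulus of the drift. Granting this, weak existence on bounded coefficients, strong-existence via Yamada--Watanabe, consistency, and the extension by $\Delta$ are routine.
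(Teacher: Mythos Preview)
Your proposal is correct and follows essentially the same strategy as the paper: establish pathwise uniqueness via the moduli \(\rho_n,\kappa_n\), obtain weak existence, and conclude strong existence on the given Brownian motion \(B\) by Yamada--Watanabe. The paper's implementation differs only in minor technical choices: it proves pathwise uniqueness using Le~Gall's local-time lemma and Tanaka's formula (rather than the \(\varphi_k\)-smoothing you sketch), obtains weak existence up to explosion directly from Feller's test (rather than from bounded-coefficient existence on each \([1/n,n]\)), and then invokes an ``up to explosion'' version of Yamada--Watanabe in one step instead of your explicit pasting of the \(Z^n\). Both routes rest on the same hypotheses \eqref{gamma n}--\eqref{kappa cond} and yield the same conclusion.
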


Next, we compare \((Z_t)_{t \in [0, T]}\) and \(U = (U_t)_{t \in [0, T]} \triangleq (p(Y_{\phi_t}))_{t \in [0, T]}\).
We set 
\begin{align}
\xi_n \triangleq \inf\left(t \in [0, T] \colon U_t \not \in \left(\frac{1}{n}, n\right)\right).
\end{align}
The following lemma is proven in Appendix \ref{App: pf 5}.
\begin{lemma}\label{lem: comparison}
	Almost surely for all \(n \in \mathbb{N}\) and \(t \in [0, T]\)
	\begin{align}\label{comparison}
	Z_{t \wedge \gamma_n(Z) \wedge \xi_n} \leq U_{t \wedge \gamma_n(Z) \wedge \xi_n}.
	\end{align}
\end{lemma}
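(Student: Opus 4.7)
The plan is to recognize this as a one-dimensional Yamada--Watanabe--type pathwise comparison. The processes \(Z\) and \(U\) are driven by the \emph{same} Brownian motion \(B\), both start at \(\tfrac{1}{2}\) (using \(\|x_0\|=1\) for \(U\)), and both carry the dispersion coefficient \(A^{1/2}(\cdot)\). If I can show that the drift of \(U\) dominates \(\tfrac{1}{2}A(U_s)B(U_s)\), i.e.\ the drift of \(Z\) evaluated at \(U_s\), then the regularity assumptions \eqref{gamma n}--\eqref{kappa cond} will close a Gronwall--Bihari estimate on \((Z-U)^+\).

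The drift comparison is immediate. The drift of \(U\) displayed in the preamble to the lemma equals \(\tfrac{1}{2} A(U_t)\cdot 2u(\phi_t, Y_{\phi_t})/\la a(\phi_t, Y_{\phi_t}) Y_{\phi_t}, Y_{\phi_t}\ra\). Under Condition \ref{cond: NL1} we have \(\hat{b}\equiv 0\), so \(2u(t,x)=\textup{trace } a(t,x)\), and \eqref{B1} yields \(\la a(t,x)x,x\ra B(p(x))\leq 2u(t,x)\); under Condition \ref{cond: N1} with \(\hat{b}=ae_i\), \(2u(t,x)=\textup{trace } a(t,x)+2\la x, a(t,x)e_i\ra\), and \eqref{eq: 4.11} gives the same bound. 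In either case the drift of \(U\) is pointwise at least \(\tfrac{1}{2}A(U_s)B(U_s)\).

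Next I fix \(n\) and stop at \(\tau\triangleq\gamma_n(Z)\wedge\xi_n\), so that both processes take values in \([1/n, n]\) on \([0,\tau]\) and the local moduli \(\rho_n, \kappa_n\) of \eqref{gamma n}--\eqref{kappa n} apply. I build the standard Yamada--Watanabe approximants \(\psi_k\in C^2\) of \(x\mapsto x^+\): using \eqref{rho cond}, pick \(a_k\downarrow 0\) with \(\int_{a_k}^{a_{k-1}} \rho_n(z)^{-1}\dd z=k\), then non-negative continuous \(\phi_k\) supported in \((a_k, a_{k-1})\) with \(\phi_k\leq 2/(k\rho_n)\) and \(\int\phi_k\dd z=1\), and set \(\psi_k(x)\triangleq\int_0^{x^+}\!\int_0^y \phi_k(z)\dd z\,\dd y\), so that \(0\leq\psi_k'\leq\mathbf{1}_{(0,\infty)}\), \(\psi_k''=\phi_k\mathbf{1}_{(0,\infty)}\), and \(\psi_k(x)\uparrow x^+\). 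Applying It\^o's formula to \(\psi_k(V_{t\wedge\tau})\) with \(V\triangleq Z-U\), and taking expectation (the stochastic integral is a genuine martingale because its integrand is bounded on \([0,\tau]\)), leaves two contributions: the second-order term is bounded by \eqref{gamma n} and the design of \(\phi_k\) through \(\phi_k\rho_n\leq 2/k\), so it contributes at most \(T/k\to 0\); on \(\{V_s>0\}\) the drift-difference term is controlled by \(\psi_k'\leq 1\), the drift comparison from Step~1, and \eqref{kappa n}, producing at most \(\tfrac{1}{2}\E\!\int_0^{t\wedge\tau}\kappa_n(V_s^+)\dd s\).

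Letting \(k\to\infty\) and using monotone convergence \(\psi_k(V)\uparrow V^+\), set \(f(t)\triangleq\E[V_{t\wedge\tau}^+]\). Jensen's inequality for the concave \(\kappa_n\) gives the Bihari-type bound
\[
f(t)\leq\tfrac{1}{2}\int_0^t \kappa_n(f(s))\dd s, \qquad t\in[0,T];
\]
since \(f(0)=0\) and \(\int_0^\epsilon \kappa_n(z)^{-1}\dd z=\infty\) by \eqref{kappa cond}, Bihari's lemma forces \(f\equiv 0\). Hence \(Z_{t\wedge\tau}\leq U_{t\wedge\tau}\) a.s.\ for each \(t, n\), and path continuity promotes this to the uniform-in-\((t,n)\) statement \eqref{comparison}. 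I expect the main technical subtlety to be arranging Step~1 so that the non-Markovian random drift of \(U\) is compared to \(\tfrac{1}{2}A(\cdot)B(\cdot)\) evaluated at \(U\) itself, not at \(Y_{\phi_\cdot}\); once this is in place the Yamada--Watanabe apparatus on the scalar process \(V\) only sees the functions \(A^{1/2}\) and \(AB\) of a single argument, and \eqref{gamma n}--\eqref{kappa cond} suffice.
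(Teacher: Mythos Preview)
Your argument is correct and follows the same overall strategy as the paper: establish the drift comparison via \eqref{B1} or \eqref{eq: 4.11}, then push a Bihari-type inequality for \(E[(Z-U)^+_{t\wedge\tau}]\) using the moduli \(\rho_n,\kappa_n\) from \eqref{gamma n}--\eqref{kappa cond}. The only methodological difference is in how you control the diffusion term: you build the Yamada--Watanabe \(C^2\) approximants \(\psi_k\) of \(x^+\) and let the second-order term vanish as \(T/k\to 0\), whereas the paper invokes Le~Gall's lemma (Lemma~\ref{lem: Le Gall}) to kill the local time at zero and then applies Tanaka's formula directly to \((Z-U)^+\). These two devices are well known to be interchangeable; your approach is slightly more self-contained (no appeal to an external local-time lemma), while the paper's is a bit shorter once Le~Gall's lemma is taken as given.
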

The previous lemma states that a.s. the paths of \((Z_t)_{t \in [0, T]}\) are below the paths of \(U\) till either \((Z_t)_{t \in [0, T]}\) or \(U\) leaves \((0, \infty)\).
However, by the Feller test for explosion, see Theorem \ref{theo: Feller test} in Appendix \ref{Appendix}, the condition \eqref{N MM 2} yields that \(Z\) a.s. does not explode to \(0\). Hence, by the pathwise ordering of \((Z_t)_{t \in [0, T]}\) and \(U\), also \(U\) a.s. does not explode to \(0\). By Feller's test and Proposition \ref{prop: fast explosion} in Appendix \ref{Appendix}, \eqref{N MM 1} implies that \(Z\) explodes to \(+ \infty\) arbitrarily fast, i.e. before time \(T\) with positive probability.
Therefore, our pathwise ordering implies that \(U\) explodes to \(+ \infty\) with positive probability.
We deduce from \(\phi_t \leq t\) that \(\|Y\|\) explodes to \(+ \infty\) with positive probability. This, however, is a contradiction to the fact that \(Y\) has paths in \(\Omega\). Hence, we conclude that the MP \((0, a, x_0)\) has no solution. 
This completes the proof.
\qed
\section{The Influence of the Market Dimension}\label{section 4}
The number of assets in a market plays a crucial role for the existence of an ELMM.
In this section we give an example for a financial market in which the number of risky assets coincides with the number of the sources of risk and the existence of an ELMM depends on the number of assets.

Let \(f \colon \mathbb{R}^d\to (0, \infty)\) be a locally bounded Borel function such that for all \(R > 0\)
\begin{align}
\inf_{\|x\| \leq R} f(x) > 0.
\end{align}
We suppose that \(m = d\), that there is a Borel function \(\zeta \colon [0, T] \to [0, \infty)\) and a locally bounded Borel function \(\hat{b}\colon[0, T] \times \mathbb{R}^d \to [0, \infty)\) such that \(\int_0^T \zeta^2(s)\dd s <\infty\), \(\|b(t, x)\| \leq \zeta (t) \hat{b}(t, x)\),  \(\|x_0\| = 1\) and
\begin{align}
\la a(t, x)e_i, e_j\ra \triangleq f(x) \1_{\{i = j\}}.
\end{align}
In this case, \(c_s(\omega) \triangleq - a^{-1} (s, \omega(s)) b (s, \omega(s))\) is a good MPR. In particular, by Theorem \ref{prop: NUPBR}, there always exists a SLMD.
However, as we will see next, if the dimension is high enough there are cases in which no ELMM exists.
We transfer the Conditions \ref{cond: EL MCKEAN} and \ref{cond: NL1} to this setting.
\begin{condition}\label{cond example existence}
	There exists a continuous function \(\xi \colon [1, \infty) \to (0, \infty)\) such that \(\xi(z) \geq \sup_{\|x\| \leq z} f(x)\) for all \(z \in [1, \infty)\) and 
	\begin{align}
	\int_1^\infty \frac{\rho}{\xi(\rho)} \dd \rho = \infty.
	\end{align}
\end{condition}
\begin{condition}\label{cond example}
	There exists a locally Lipschitz continuous function \(\alpha\colon (0, \infty) \to (0, \infty)\) such that \(\alpha(\rho) \leq f(x)\) for \(\rho> 0\) and \(x \in \{z \in \mathbb{R}^d \colon \|z\| = \rho\}\), and
	\begin{align}\label{conv 2}
	\int_1^\infty \frac{\rho}{\alpha(\rho)}\dd \rho < \infty.
	\end{align}
\end{condition}
\begin{proposition}\label{dim coro}
	If \(d \leq 2\), then there exists a unique ELMM.
	Moreover, if \(d \geq 3\) and Condition \ref{cond example existence} holds, then there is an ELMM, which is unique if \(f\) is continuous. But if \(d \geq 3\) and Condition \ref{cond example} holds, then there is no ELMM.
\end{proposition}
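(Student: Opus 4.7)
The plan is to reduce all three claims to Theorem \ref{theo: LG} (for existence and uniqueness) and Theorem \ref{theo: det NE 1} (for non-existence). Since $a(t,x) = f(x)I_d$ and $m=d$, Assumption \ref{Stand Ass} is in force with $\mu\equiv 0$ and the good MPR $c_s(\omega) = -b(s,\omega(s))/f(\omega(s))$ supplied by the section's setting, and Lemma \ref{lem: uniqueness} tells us that every ELMM must solve the MP $(0,fI,x_0)$.

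For existence in $d\leq 2$, I would verify Condition \ref{cond: EL1} by choosing a constant $\zeta\equiv M$ with $M\geq \sqrt d\sup_{\|x\|\leq\sqrt{2r}} f(x)$ and $M\geq d/2$, and setting $A(\sigma)=(2\sigma/M)\,g(\sqrt{2\sigma})$, $B(\sigma) = d/(2M\sigma)$, where $g$ is any continuous nondecreasing majorant of $\rho\mapsto\sup_{\|x\|\leq\rho} f(x)$. The bounds \eqref{cond bound 1}--\eqref{cond bound 2} are then immediate (with equality in the second), and $C(z) = (z/r)^{d/(2M)}$. The inner integral of \eqref{cond: eq H} is positive and nondecreasing, so the outer integral dominates a multiple of $\int^\infty z^{-d/(2M)}\,dz$, which diverges as soon as $d/(2M)\leq 1$, i.e.\ whenever $d\leq 2$. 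For $d\geq 3$ under Condition \ref{cond example existence}, I would instead apply Condition \ref{cond: EL MCKEAN}: $\lambda_+(a(t,x)) = f(x)$ gives $\gamma(z) = \sup_{\|x\|\leq z} f(x)$, the function $\xi$ is continuous so that $x/\xi(x)$ is Riemann integrable on each $[z_0,n]$, and $\int_1^\infty x/\xi(x)\,dx = \infty$ is exactly the hypothesis.

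For uniqueness, continuity of $f$ trivially yields Condition \ref{cond: U1} because $\langle\theta,a(t,x)\theta\rangle = f(x)\|\theta\|^2 > 0$ and $\|a(t,y)-a(t,x)\| = \sqrt d\,|f(y)-f(x)|$, so Theorem \ref{theo: LG} closes both existence cases under that hypothesis. To obtain unconditional uniqueness when $d\leq 2$, I would use a Dambis--Dubins--Schwarz time-change: under any ELMM $Q$ the coordinate process satisfies $d[X^i,X^j]_t = f(X_t)\delta_{ij}\,dt$, so with $A(t) = \int_0^t f(X_s)\,ds$ and $\tau=A^{-1}$ the process $Z \triangleq X\circ\tau$ is a $d$-dimensional Brownian motion by L\'evy's characterization. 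Substitution then gives $\tau(t) = \int_0^t 1/f(Z_s)\,ds$, so $\tau$ (and thus $X = Z\circ\tau^{-1}$) is a deterministic functional of $Z$; since $Z$ is recurrent for $d\leq 2$ and $1/f$ is bounded below on every compact, $\tau(\infty)=\infty$ a.s., this functional is well-defined on $[0,T]$, and the law of $X$ under any ELMM is forced to coincide with it.

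The main obstacle is the non-existence statement. I plan to apply Theorem \ref{theo: det NE 1}(i) by verifying Condition \ref{cond: NL1} with $A(\sigma) = 2\sigma\,\alpha(\sqrt{2\sigma})$ and $B(\sigma) = d/(2\sigma)$: the inequalities \eqref{A1}--\eqref{B1} are immediate, and \eqref{gamma n}--\eqref{kappa cond} hold with $\rho_n(z)=\kappa_n(z)=C_n z$ because positivity and local Lipschitz continuity of $\alpha$ propagate to both $A^{1/2}$ and $AB = d\,\alpha(\sqrt{2\,\cdot\,})$ on each $[1/n,n]$. After the substitution $\sigma = \rho^2/2$, the decisive condition \eqref{N MM 1} becomes
\[
\int_1^\infty y^{1-d}\,F(y)\,dy < \infty,\qquad F(y) = \int_1^y \frac{\rho^{d-1}}{\alpha(\rho)}\,d\rho,
\]
which I would attack by integrating by parts in $y$ to obtain
\[
\int_1^Y y^{1-d} F(y)\,dy = \frac{1}{d-2}\Bigl[\int_1^Y \frac{y}{\alpha(y)}\,dy - Y^{2-d} F(Y)\Bigr],
\]
then by a second integration by parts $F(y) = y^{d-2} G(y) - (d-2)\int_1^y \rho^{d-3} G(\rho)\,d\rho$ with $G(y) = \int_1^y \rho/\alpha(\rho)\,d\rho$, and finally Karamata's theorem to conclude that $Y^{2-d}F(Y)\to 0$ from $G(\infty)<\infty$ (Condition \ref{cond example}); the integral then equals $G(\infty)/(d-2)<\infty$. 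The divergence \eqref{N MM 2} is routine, since $\alpha$ is bounded above on $[\epsilon,1]$ and $\int_0^\epsilon y^{1-d}\,dy=\infty$ for $d\geq 2$. Theorem \ref{theo: det NE 1}(i) then rules out the existence of an ELMM.
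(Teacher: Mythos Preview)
Your proposal is correct, and for the $d\geq 3$ clauses it is essentially the paper's argument: the existence under Condition~\ref{cond example existence} via Condition~\ref{cond: EL MCKEAN} and uniqueness via Condition~\ref{cond: U1} are identical, and for non-existence you choose the same comparison pair $A(\sigma)=2\sigma\,\alpha(\sqrt{2\sigma})$, $B(\sigma)=d/(2\sigma)$ and the same Lipschitz verification of \eqref{gamma n}--\eqref{kappa cond}. The only divergence there is in checking \eqref{N MM 1}: the paper swaps the order of integration by Fubini, which collapses the double integral in one line to $\tfrac{1}{d-2}\int_1^\infty \rho/\alpha(\rho)\,d\rho$; your integration-by-parts route reaches the same bound (indeed, after your first IBP the inequality $Y^{2-d}F(Y)\geq 0$ already gives $\int_1^Y y^{1-d}F(y)\,dy \leq G(Y)/(d-2)$, so the second IBP and the Karamata step are unnecessary).

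For $d\leq 2$ you take a genuinely different route. The paper does \emph{not} go through Condition~\ref{cond: EL1}; instead it observes directly that the MP $(0,fI,x)$ up to explosion has, for every initial point, a unique non-explosive solution $Q_x$ (this is Exercise~10.3.3 in Stroock--Varadhan, essentially the recurrence of the time-changed Brownian motion), then uses Proposition~\ref{prop: loc uni} to identify $Q_n$ with $Q_{x_0}$ on $\mathcal F_{\tau_n}$ and conclude $\lim_n Q_n(\tau_n>T)=1$; uniqueness of the ELMM then comes from Lemma~\ref{lem: uniqueness} together with uniqueness of that MP. Your approach---verifying Condition~\ref{cond: EL1} with $B(\sigma)=d/(2M\sigma)$ so that $C(z)=(z/r)^{d/(2M)}$ and \eqref{cond: eq H} diverges once $d/(2M)\leq 1$, and handling uniqueness by a Dambis--Dubins--Schwarz time change---is self-contained within the paper's own machinery and avoids the external citation, at the cost of some extra care (your $Z$ is a priori only a Brownian motion on the random interval $[0,A(T)]$, so the sentence about ``$\tau(\infty)=\infty$'' really refers to the reverse construction from a full Brownian path; the uniqueness argument is cleanest phrased as: $A(T)$ is a stopping time of $Z$ determined by $Z$ itself via $A(T)=\inf\{u:\int_0^u 1/f(Z_s)\,ds=T\}$, so the law of $(Z\lvert_{[0,A(T)]},A(T))$, and hence of $X$, is fixed). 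Both arguments ultimately encode the same low-dimensional recurrence phenomenon.
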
 
\begin{proof}
	Let \(d \leq2\). 
	Set \(Q_n\) as in the proof of Theorem \ref{theo: LG} and note, by Lemma \ref{lem: Qn} in Appendix \ref{App: pf 2}, that \(Q_n\) solves the MP \((0, a, x_0; \tau_n)\), see Definition \ref{def: stopped MP} in Appendix \ref{App: pf 2} for this notation. 
	If we show that \(\lim_{n \to \infty} Q_n(\tau_n > T) = 1\), then the existence of an ELMM follows as in the proof of Theorem \ref{theo: LG}.
	By Exercise 10.3.3 in \cite{SV}, for all initial valued \(x \in \mathbb{R}^d\) the MP \((0, a, x)\) up to explosion has a unique non-explosive solution \(Q_x\).
	Thus, by Proposition \ref{prop: loc uni} in Appendix \ref{Appendix} we have \(Q_{x_0} = Q_n\) on \(\mathcal{F}_{\tau_n}\) and, since \(Q_{x_0}\) is non-explosive, it holds that
	\begin{align}
	\lim_{n \to \infty} Q_n (\tau_n > T) = \lim_{n \to \infty} Q_{x_0}(\tau_n > T) = 1.
	\end{align}
	We conclude that an ELMM exists. Since, again due to Proposition \ref{prop: loc uni} in Appendix \ref{Appendix}, the MP \((0, a, x_0)\) has only one solution, Lemma \ref{lem: uniqueness} yields that the ELMM is unique.
	
	Clearly, Condition \ref{cond example existence} implies Condition \ref{cond: EL MCKEAN}. Thus, if \(d \geq 3\) and Condition \ref{cond example existence} holds, an ELMM exists by Theorem \ref{theo: LG}. Moreover, if \(f\) is continuous, Condition \ref{cond: U1} holds and Theorem \ref{theo: LG} implies furthermore that the ELMM is unique.
	
	Suppose now that \(d \geq 3\) and that Condition \ref{cond example} holds. 
	Set \(A(x) = 2x \alpha(\sqrt{2 x})\) and \(B(x) = \frac{d}{2x}\) for \(x \in (0, \infty)\). It is routine to check that \(A\) and \(B\) satisfy \eqref{A1} and \eqref{B1}.
	Moreover, since compositions and products of locally Lipschitz continuous functions are locally Lipschitz continuous, \(A^\frac{1}{2}\) and \(AB\) are locally Lipschitz continuous. 
	Let \(C\) be defined as in \eqref{C}.
	By Fubini's theorem, using that \(d \geq 3\), we obtain
	\begin{equation}\begin{split}
	\int_{\frac{1}{2}}^{\infty} \frac{\int_{\frac{1}{2}}^z \frac{C(\sigma)}{A(\sigma)}\dd \sigma}{C(z)} \dd z &= \int_{\frac{1}{2}}^{\infty} \int_{\sigma}^{\infty} \frac{1}{C(z)} \dd z \frac{C(\sigma)}{A(\sigma)} \dd \sigma 
	\\&= \textup{ const. } \int_{\frac{1}{2}}^{\infty} \frac{\sigma}{A(\sigma)}\dd \sigma 
	\\&= \textup{ const. } \int_{\frac{1}{2}}^\infty \frac{1}{\alpha(\sqrt{2 \sigma})} \dd \sigma
	\\&= \textup{ const. } \int_{1}^\infty \frac{\sigma}{\alpha(\sigma)} \dd \sigma.
	\end{split}
	\end{equation}
	Hence, \eqref{conv 2} is equivalent to \eqref{N MM 1}. Moreover, using again \(d \geq 3\), we obtain
	\begin{align} %\label{dim influence indication}
	\int_0^{1/2} \exp\left( - \int_{1/2}^z B(u)\dd u\right) \dd z = \textup{const. }  \int_0^{1/2} \frac{1}{z^{d/2}}\dd z = + \infty.
	\end{align}
	Thus, Problem 5.5.27 in \cite{KaraShre} implies that \eqref{N MM 2} is satisfied. 
	Putting these pieces together, Condition \ref{cond example} implies Condition \ref{cond: NL1}.
	Therefore, due to Theorem \ref{theo: det NE 1}, no ELMM exists.
\end{proof}
Without further assumptions on \(f\) we cannot conclude uniqueness of the ELMM in the case \(d \geq 3\), see \cite{zbMATH01140123} for more details.

The Conditions \ref{cond example existence} and \ref{cond example} are very close. We think that in the present market they provide a fairly complete picture of the existence respectively non-existence of ELMMs for high dimensions.

Proposition \ref{dim coro} yields the existence of financial markets including arbitrage opportunities where the number of assets coincides with the number of sources of risk.
We give an explicit example.
\begin{example}
	Let \(d \geq 3\) and
	suppose that \(f(x) \triangleq (\|x\| \vee 1)^{2 + \epsilon}\) for some \(\epsilon > 0\) and set \(b(x) = - \frac{\beta}{2} \|x\|^\epsilon x\) with \(\beta \geq d - 2\). In this case, the MP \((b, a, x_0)\) has a unique solution, see Exercise 10.3.4 in \cite{SV}. Now, we can take \(\alpha (\rho) \triangleq \rho^{2 + \epsilon}\) which is a locally Lipschitz continuous function. It is easy to check that \eqref{conv 2} holds. Therefore, by Theorem \ref{prop: NUPBR} and Proposition \ref{dim coro}, there exists a SLMD, but no ELMM.
\end{example}
If \(d \leq 2\) there is always an ELMM, but there are case in which no EMM exists, i.e. the market includes a financial bubble. 
Let us illustrate this for the one-dimensional case. 
\begin{proposition}\label{proposition: 1D}
	Assume that \(d = m = 1\). Then, the following are equivalent:
	\begin{enumerate}
		\item[\textup{(i)}] A SMD exists.
		\item[\textup{(ii)}] An EMM exists.
		\item[\textup{(iii)}] We have
		\begin{align}\label{eq: Feller cond 1D}
		\int_1^\infty \frac{1}{f(y)}\dd y = \infty. 
		\end{align}
	\end{enumerate}
	Moreover, if an EMM exists, it is unique.
\end{proposition}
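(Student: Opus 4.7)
The plan is to reduce both equivalences to a one-dimensional Feller test. First, under the standing assumptions of Section~\ref{section 4}, $c_s \triangleq -b(s,X_s)/f(X_s)$ is a good MPR: on $\{t\le \tau_n\}$ the coefficient $f$ is bounded away from $0$ and $\hat b$ is bounded, so $\tfrac{1}{2}\int_0^{T\wedge\tau_n} f(X_s)c_s^2\,\dd s$ is dominated by a constant multiple of $\int_0^T\zeta^2(s)\dd s<\infty$. Define the local $(\F,P)$-martingale $Z\triangleq \mathcal{E}(\int_0^\cdot c_s\,\dd X^c_s)$; by \eqref{eq: SMD candi}, $ZS^1/S_0^1 = \mathcal{E}(\int_0^\cdot (c_s+1)\dd X^c_s)$, and the analogous local Novikov estimate (using $|c+1|\le 1+|c|$ on $\{t\le\tau_n\}$) shows that both $(Z_{t\wedge\tau_n})_{t\in[0,T]}$ and $(Z_{t\wedge\tau_n}S^1_{t\wedge\tau_n})_{t\in[0,T]}$ are genuine $(\F,P)$-martingales for every $n$.

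The central calculation is a Feller test for the MP $(f,f,x_0)$ in dimension one. Its scale derivative is $e^{-2y}$, so the scale function $p(x)=\tfrac{1}{2}(1-e^{-2x})$ satisfies $p(-\infty)=-\infty$ (so $-\infty$ is not attainable) and $p(+\infty)=1/2$, while the Feller integral at $+\infty$, after the cancellation of the $e^{\pm 2y}$ factors coming from the scale and speed measure, reduces to $\int_{x_0}^\infty 1/f(y)\dd y$. Thus (iii) is precisely the non-explosion criterion for the MP $(f,f,x_0)$; since $f>0$ and $1/f$ is locally bounded, the Engelbert--Schmidt theorem yields uniqueness in law, so the MP $(f,f,x_0)$ has a unique non-explosive solution $\tilde Q^\ast$. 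The same reasoning with drift $0$ shows that the MP $(0,f,x_0)$ is \emph{always} non-explosive in one dimension and has a unique solution $Q^\ast$.

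For (iii)$\Rightarrow$(ii), introduce the probability measures $Q_n$ and $\tilde Q_n$ defined by $\dd Q_n=Z_{T\wedge\tau_n}\dd P$ and $\dd \tilde Q_n=Z_{T\wedge\tau_n}S^1_{T\wedge\tau_n}/S_0^1\,\dd P$. Girsanov's theorem together with Lemma \ref{lem: Qn} shows that $Q_n$ solves the stopped MP $(0,f,x_0;\tau_n)$ and $\tilde Q_n$ solves $(f,f,x_0;\tau_n)$; by Proposition \ref{prop: loc uni} we may identify them with $Q^\ast$ and $\tilde Q^\ast$ respectively on $\mathcal{F}_{\tau_n}$. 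Non-explosion then yields $\lim_n Q_n(\tau_n>T)=\lim_n\tilde Q_n(\tau_n>T)=1$, whence $E^P[Z_T]=E^P[Z_TS^1_T/S_0^1]=1$. Hence the measure $Q$ defined by $\dd Q=Z_T\dd P$ is an ELMM under which $E^Q[S^1_T]=S_0^1$, i.e.\ an EMM.

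The implication (ii)$\Rightarrow$(i) is immediate since the density process of an EMM is an SMD. For (i)$\Rightarrow$(iii), Lemma \ref{lem: no EMM} forces the MP $(f,f,x_0)$ to admit a solution on $[0,T]$; were $\int_1^\infty 1/f(y)\dd y<\infty$, the Feller calculation above combined with Proposition \ref{prop: fast explosion} would force $\tilde Q^\ast$ to explode before time $T$ with positive probability, in contradiction to a solution having paths in $\Omega$. Uniqueness of the EMM follows immediately from Lemma \ref{lem: uniqueness} and the Engelbert--Schmidt uniqueness in law for the MP $(0,f,x_0)$. The main technical hurdle is the explicit Feller simplification together with the careful bookkeeping of the stopped martingale problems; once these are in place, every other step is automatic from the results established earlier in the paper.
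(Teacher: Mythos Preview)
Your proof is correct and uses the same core ingredients as the paper: the Feller test for the MP $(f,f,x_0)$, Lemma~\ref{lem: no EMM} for (i)$\Rightarrow$(iii), and the stopped-MP/local-uniqueness argument via Proposition~\ref{prop: loc uni} for (iii)$\Rightarrow$(ii). The only organizational difference is in (iii)$\Rightarrow$(ii): the paper first invokes Proposition~\ref{dim coro} to obtain the ELMM $Q$ and then works under $Q$ with the single density $S/S_0$, whereas you stay under $P$ and run the non-explosion argument twice, once for $Z$ and once for $ZS^1$, to obtain $E^P[Z_T]=E^P[Z_TS^1_T/S_0^1]=1$ simultaneously. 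Your route is slightly more self-contained (it does not appeal to Proposition~\ref{dim coro}), while the paper's is shorter because the existence and uniqueness of the ELMM is already packaged there; both lead to the same Feller computation and the same conclusion.
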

\begin{proof}
	We prove (i) \(\Longrightarrow\) (iii) \(\Longrightarrow\) (ii). This suffices since (ii) \(\Longrightarrow\) (i) is clear.
	
	First, assume that (i) holds. Then, by Lemma \ref{lem: no EMM}, the MP \((f, f, x_0)\) has a solution. By Theorem \ref{theo: Feller test} and Remark \ref{rem: explosion} in Appendix \ref{Appendix} this implies (iii). To see this, note that Fubini's theorem implies that 
	\begin{align}
	\int_0^\infty \exp(- 2 x) \int_0^x \frac{\exp(2 y)}{f(y)}\dd y\dd x = \frac{1}{2} \int_0^\infty \frac{1}{f(x)}\dd x,
	\end{align}
	and that 
	\begin{align}
	\int_{- \infty}^0 \exp(- 2 x) \int_x^0 \frac{\exp(2 y)}{f(y)}\dd y\dd x = \infty.
	\end{align}
	
	Let us now suppose that (iii) holds. Due to Proposition \ref{dim coro}, there exists a unique ELMM \(Q\) and thus there is at most one EMM.
	By Novikov's condition, \((S_{t \wedge \tau_n})_{t \in [0, T]}\) is an \((\F, Q)\)-martingale and we can define a probability measure \(Q_n\) on \((\Omega, \mathcal{F})\) by the Radon-Nikodym derivative \(\dd Q_n \triangleq S_{T \wedge \tau_n}/S_0 \dd Q\). It follows as in the proof of Lemma \ref{lem: Qn} in Appendix \ref{App: pf 2} that \(Q_n\) solves the MP \((f, f, x_0; \tau_n)\), see Definition \ref{def: stopped MP} in Appendix \ref{App: pf 2} for this notation. Since (iii) holds, Theorem \ref{theo: Feller test} in Appendix \ref{Appendix} yields that for all initial values \(x \in \mathbb{R}\) the MP \((f, f, x)\) (up to explosion) has a unique non-explosive solution \(Q_x\). 
	We deduce from Proposition \ref{prop: loc uni} in Appendix \ref{Appendix} that \(Q_n = Q_{x_0}\) on \(\mathcal{F}_{\tau_n}\). Thus, we have 
	\begin{equation}\begin{split}
	E^Q\left[\frac{S_T}{S_0}\right] &= \lim_{n \to \infty} E^Q\left[\frac{S_{T \wedge \tau_n}}{S_0} \1_{\{ \tau_n > T\}}\right] \\&= \lim_{n \to \infty} Q_n ( \tau_n >T) \\&= \lim_{n \to \infty} Q_{x_0}(\tau_n  >T ) \\&= 1,
	\end{split}
	\end{equation}
	which implies that \(S\) is an \((\F, Q)\)-martingale, i.e. that \(Q\) is an EMM.
\end{proof}
\begin{example}
	Let \(d =1\) and \(\delta > 0\). Then, for \(f (x) \triangleq (\|x\| \vee 1)^\delta\) a SMD exists \(\Longleftrightarrow\) an EMM exists \(\Longleftrightarrow\) \(\delta \leq 1\). Thus, if \(\delta \leq 1\), (NRA) and (NGA) hold, while a financial bubble exists if \(\delta > 1\). For this particular example, we see that Condition \ref{cond: E3} is sharp.
\end{example}
\section{Comments on Related Literature}\label{sec:LMU}
We comment on related literature.
Let us start with the article of \cite{Mijatovic2012}, which has also motivated our research.

The financial market is assumed to include one risky asset \(S = (S_t)_{t \in [0, T]}\), which is modeled as a one-dimensional homogeneous diffusion with dynamics
\begin{align}\label{SDE:MU}
\dd S_t = \mu(S_t)\dd t + \sigma(S_t)\dd B_t, \quad S_0 > 0,
\end{align}
where \(B = (B_t)_{t \in [0, T]}\) is a one-dimensional Brownian motion.
In our framework we do not model the asset price process as a homogeneous diffusion, but suppose that its log returns are time-inhomogeneous diffusions. 

Mijatovi\'c and Urusov assume that \(\sigma \colon (0, \infty) \to \mathbb{R} \backslash \{0\}\) and \(\mu \colon (0, \infty) \to \mathbb{R}\) are Borel functions which satisfy the Engelbert-Schmidt conditions, i.e.
\begin{align}
\frac{1 + |\mu|}{\sigma^2} \in L^1_\textup{loc}((0, \infty)).
\end{align}
In this case, the SDE \eqref{SDE:MU} has a solution (up to explosion) which is unique in law.
For simplicity, let us suppose that the coefficients \(\sigma\) and \(\mu\) are chosen such that almost all paths of \(S\) are positive. For explicit conditions on \(\mu\) and \(\sigma\) we refer to Feller's test, see Theorem 5.5.29 in \cite{KaraShre} or Theorem \ref{theo: Feller test} in Appendix \ref{Appendix}.
Mijatovi\'c and Urusov furthermore discuss the case where \(Y\) can explode to 0, as well as the infinite time horizon.                                               
The notions of E(L)MMs and S(L)MDs are defined similar as in our setting.
Here, a SLMD exists if and only if 
\begin{align}\label{eq: MU MPR}
\frac{\mu^2}{\sigma^4} \in L^1_\textup{loc} ((0, \infty)).
\end{align}
By Corollary 3.4 in \cite{Mijatovic2012}, an ELMM exists if and only if \eqref{eq: MU MPR} holds and 
\begin{align}\label{eq: SDE MU1}
\int_0^1 \frac{x}{\sigma^2 (x)} \dd x = \infty.
\end{align}
Furthermore, by Theorem 3.6 in \cite{Mijatovic2012}, an EMM exists if and only if \eqref{eq: MU MPR} and \eqref{eq: SDE MU1} hold and
\begin{align}\label{eq: MU SDE 2}
\int_1^\infty \frac{x}{\sigma^2(x)} \dd x = \infty.
\end{align}
By Theorem 3.11 in \cite{Mijatovic2012}, a SMD exists if and only if \eqref{eq: MU MPR} and \eqref{eq: MU SDE 2} hold.
These results show that 
\begin{align}
\textup{(NFLVR) } + \textup{ (NRA)} \quad \Longleftrightarrow\quad \textup{(NGA)}.
\end{align}
This differs slightly from the one-dimensional version of our setting, where (NFLVR) always holds and 
\begin{align}
\textup{(NRA)} \quad \Longleftrightarrow\quad \textup{(NGA)},
\end{align}
see Section \ref{section 4}. The condition \eqref{eq: MU SDE 2} is comparable with our conditions for the existence and non-existence of an ELMM as given in Proposition \ref{dim coro}. As we showed in Proposition \ref{proposition: 1D}, in our setting \eqref{eq: MU SDE 2} does not suffices to guarantee the existence of an EMM.

Now, considering the one-dimensional setting, it is natural to ask what can be said in multi-dimensional cases. More precisely, how the dimension, i.e. the number of assets in the market, affects the existence and absence of arbitrage and whether one can give deterministic conditions which are comparable to those in the one-dimensional case. We have given such conditions in the Theorems \ref{theo: LG} and \ref{theo: det NE 1}.

A classical example where low- and high-dimensional behavior of processes differs is the recurrence and transience of Brownian motion. It is well-known that Brownian motion is recurrent in dimensions one and two and transient otherwise. 
Related to this fact, in Section \ref{section 4} we constructed a type of financial market which is free of arbitrage in dimensions one and two and allows arbitrage otherwise.

Recall that Mijatovi\'c and Urusov assume that there is only one choice of the real-world measure and one candidate for an ELMM, the law of the solution process to the SDE \eqref{eq: SDE MU1}. It is an interesting question what can be said if we forgo on uniqueness.
Our results also apply if uniqueness does not hold. Indeed, we saw that it is often possible to find an E(L)MM under the candidates. But, we also saw that not necessarily all candidates are E(L)MMs and that the ELMM may depend on the choice of the real-world measure.

The absence and existence of arbitrage was also studied by \cite{MAFI:MAFI530}.
For simplicity, we only describe his setting without interest rate and information process. 

The market includes one risky asset \(S = (S_t)_{t \in [0, T]}\) given as the stochastic exponential
\begin{align}
\dd S_t = S_t \dd \tilde{X}_t,\quad S_0 = 1,
\end{align}
where \(\tilde{X} = (\tilde{X}_t)_{t \in [0, T]}\) is a continuous It\^o process, the so-called \emph{aggregated excess-returns process}.
In his article, Lyasoff presents the one-dimensional case in full detail and remarks that the multi-dimensional case can be treated along the lines.
It is assumed that
\begin{align}\label{Ly MPR cond}
\dd \tilde{X}_t = \alpha (t, \tilde{X}) \theta_t \dd t+\alpha (t, \tilde{X}) \dd B_t,\quad \tilde{X}_0 = 0,
\end{align}
where \(\alpha\) and \(\theta\) satisfy minimal conditions such that \(\tilde{X}\) is well-defined and \(B = (B_t)_{t \in [0, T]}\) is a one-dimensional Brownian motion.

Denote \(\mu\) the law of \(\tilde{X}\) seen as a probability measure on \((\Omega, \mathcal{F})\).
In this setting, a probability measure \(Q\) on \((\Omega, \mathcal{F})\) is called an ELMM if \(Q \sim \mu\) and \(X\) is a local \((\F, Q)\)-martingale.

It is presumed that for \(\dd t \otimes \mu\)-a.a. \((t,\omega) \in [0, T] \times \Omega\) it holds that \(\alpha (t, \omega) > 0\) and a.s.
\begin{align}\label{Ly: Theta SI}
\int_0^T \|\theta_s\|^2 \dd s < \infty.
\end{align}
This corresponds to \eqref{eq: c inte cond} and \eqref{eq: MPRE} in our setup.

By Proposition 2.3 in \cite{MAFI:MAFI530}, the existence of an ELMM is equivalent to \(\mathscr{W} \sim \nu\), where \(\mathscr{W}\) is the (one-dimensional) Wiener measure (on the time interval \([0, T]\)) and \(\nu\) is the law of the process \begin{align}\label{theta}
\check{X}_t\triangleq \int_0^t \alpha^{-1} (s, \tilde{X}) \dd \tilde{X}_s = \int_0^t \theta_s \dd s + B_t,\quad t \in [0, T],\end{align} which is called the \emph{normalized excess-return process}.

In general, this type of condition is very different from the deterministic type we gave in this article.

The one-dimensional version of our setting can partially be included into the setting of Lyasoff. That means, if \(d = m = 1\) and \(a^\frac{1}{2}\) is invertible, then we can choose \(\alpha(t, \tilde{X}) \triangleq a^\frac{1}{2}(t, \tilde{X}_t)\) and \(\theta_t \triangleq a^{-\frac{1}{2}}(t, \tilde{X}_t) b(t, \tilde{X}_t)\).

\appendix
\section{Technical Proofs}
\subsection{Proof of Proposition \ref{prop: uni 1d}}\label{App: pf 1}
As in the proof of Theorem \ref{theo: LG}, it suffices to show that the SDE
\begin{align}\label{eq: 1D SDE Le Gall}
\dd Y_t = a^\frac{1}{2}(t, Y_t) \dd W_t, \quad Y_0 = x_0, 
\end{align}
satisfies pathwise uniqueness. We localize a classical argument due to Le Gall. 
Let \(Y = (Y_t)_{t \in [0, T]}\) and \(U = (U_t)_{t \in [0, T]}\) be two solution processes of the SDE \eqref{eq: 1D SDE Le Gall} for the same driving system, see Appendix \ref{Appendix} for this terminology. 
Set \(\xi_n \triangleq \inf(t \in [0, T] \colon \|Y_t\| \vee \|U_t\| \geq n).\)
For all \(t \in [0, T]\), we have 
\begin{equation}\begin{split}
\int_0^{t \wedge \xi_n} &\frac{1}{h_n (Y_s - U_s)} \1_{\{Y_s - U_s > 0\}} \dd \hspace{0.05cm}\lle Y - U\rre_s 
\\&= \int_0^{t \wedge \xi_n} \frac{\|a^\frac{1}{2} (s, Y_s) - a^\frac{1}{2} (s, U_s)\|^2}{h_n(\|Y_s -  U_s\|)} \1_{\{Y_s - U_s > 0\}} \dd s 
\\&\leq \int_0^T \zeta_n (s) \dd s < \infty.
\end{split}\end{equation}
Thus, by a lemma of Le Gall, see Lemma \ref{lem: Le Gall} in Appendix \ref{Appendix: Le Gall und co}, the local time of \((Y_{t \wedge \xi_n} - U_{t \wedge \xi_n})_{t \in [0, T]}\) in the origin a.s. vanishes. Hence, by Tanaka's formula, see, for instance, Theorem VI.1.2 in \cite{RY}, for all \(t \in [0, T]\) it holds that 
\begin{align}\label{eq: 1D Tanaka}
\|Y_{t \wedge \xi_n} - U_{t \wedge \xi_n}\| = \int_0^{t \wedge \xi_n}\textup{sgn}(Y_s - U_s) \dd Y_s -  \int_0^{t \wedge \xi_n} \textup{sgn}(Y_s - U_s) \dd U_s.
\end{align}
We deduce from the bound
\begin{equation}\begin{split}
\lle Y - U\rre_{T \wedge \xi_n} &= \int_0^{T \wedge \xi_n} \|a^\frac{1}{2} (s, Y_s) - a^\frac{1}{2} (s, U_s)\|^2 \dd s \\&\leq h_n(2n) \int_0^T \zeta_n (s) \dd s < \infty
\end{split}
\end{equation}
and \eqref{eq: 1D Tanaka} that \((\|Y_{t \wedge \xi_n} - U_{t \wedge \xi_n}\|)_{t \in [0, T]}\) is a martingale starting at the origin. Thus, taking expectation, by continuity, a.s. \(Y_{t \wedge \xi_n} = U_{t \wedge \xi_n}\) for all \(t \in [0, T]\) and \(n \in \mathbb{N}\). Since \(\xi_n \nearrow \infty\) as \(n \to \infty\), we conclude that the SDE \eqref{eq: 1D SDE Le Gall} satisfies pathwise uniqueness.
\qed
\subsection{Proof of Lemma \ref{lem: cm no explosion}}\label{App: pf 2}
We start with the observation that \(Q_n\) solves a stopped MP as defined in the following
\begin{definition}\label{def: stopped MP}
	For an \(\F\)-stopping time \(\tau\) we say that a probability measure \(P\) on \((\Omega, \mathcal{F})\) solves the (stopped) MP \((b, a, x; \tau)\) if the stopped process \((M^f_{t \wedge \tau, x})_{t \in [0, T]}\) (see \eqref{Mf} for this notation) is a local \((\F, P)\)-martingale for all \(f \in C^2(\mathbb{R}^d)\). %We denote the set of solutions by \(\mathcal{M}(b, a, x; \tau)\).
\end{definition}
\begin{lemma} \label{lem: Qn}
	The probability measure \(Q_n\) solves the MP \(((0, \mu), a, x_0; \tau_n)\). 
\end{lemma}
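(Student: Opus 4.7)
The plan is to reduce the claim to the martingale characterization of solutions to the MP, namely Proposition \ref{prop: eq MP}, and then invoke Girsanov's theorem in combination with the MPRE of Assumption \ref{Stand Ass}. Recall that, by the goodness of the MPR and Novikov's criterion, $(Z_{t \wedge \tau_n})_{t \in [0,T]}$ is a genuine $(\mathbf{F},P)$-martingale, so $Q_n$ is a well-defined probability measure on $(\Omega,\mathcal{F})$, equivalent to $P$. By the stopped version of Proposition \ref{prop: eq MP}, it suffices to show that under $Q_n$ the stopped process
\[
X^{(0,\mu)}_{\cdot \wedge \tau_n} \triangleq X_{\cdot \wedge \tau_n} - x_0 - \int_0^{\cdot \wedge \tau_n} (0,\mu)(s, X_s)\,\dd s
\]
is a continuous local $(\mathbf{F}, Q_n)$-martingale with quadratic variation $\int_0^{\cdot \wedge \tau_n} a(s,X_s)\,\dd s$.

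First, I would compute the stochastic logarithm of the density process: from the definition of $Z$ in \eqref{eq: Z}, we have $\mathcal{L}(Z_{\cdot \wedge \tau_n}) = \int_0^{\cdot \wedge \tau_n} \langle c_s, \dd X^c_s \rangle$, which is a local $(\mathbf{F}, P)$-martingale. Applying Girsanov's theorem (Theorem \ref{theo: Gir}) to the change of measure $P \rightsquigarrow Q_n$ with density $Z_{T \wedge \tau_n}$, the process
\[
X^c_{\cdot \wedge \tau_n} - \bigl\langle\!\bigl\langle X^c_{\cdot \wedge \tau_n}, \, \mathcal{L}(Z_{\cdot \wedge \tau_n})\bigr\rangle\!\bigr\rangle = X^c_{\cdot \wedge \tau_n} - \int_0^{\cdot \wedge \tau_n} a(s,X_s) c_s\,\dd s
\]
is a continuous local $(\mathbf{F}, Q_n)$-martingale. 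Unwinding the definition of $X^c$ yields that
\[
X_{\cdot \wedge \tau_n} - x_0 - \int_0^{\cdot \wedge \tau_n} \bigl(b(s,X_s) + a(s,X_s) c_s\bigr)\,\dd s
\]
is a continuous local $(\mathbf{F}, Q_n)$-martingale.

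Next, I would use the MPRE to rewrite the drift. Equation \eqref{eq: MPRE} ensures that for $i = 1,\dots,m$ the $i$-th coordinate of the integral vanishes, while the second half of Assumption \ref{Stand Ass}, namely \eqref{eq: set M}, gives that for $i = m+1,\dots,d$ the $i$-th coordinate equals $\int_0^{\cdot} \langle e_{i-m}, \mu(s,X_s)\rangle\,\dd s$. Put together, $P$-a.s. (and hence, by equivalence, $Q_n$-a.s.)
\[
\int_0^t \bigl(b(s,X_s) + a(s,X_s) c_s\bigr)\,\dd s = \int_0^t (0,\mu)(s, X_s)\,\dd s \qquad \text{for all } t \in [0,T].
\]
Substituting this identity shows that $X^{(0,\mu)}_{\cdot \wedge \tau_n}$ is a continuous local $(\mathbf{F}, Q_n)$-martingale.

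Finally, the quadratic variation of $X^{(0,\mu)}_{\cdot \wedge \tau_n}$ under $Q_n$ coincides with its quadratic variation under $P$ (this is preserved by absolutely continuous changes of measure, as recorded in Theorem \ref{theo: Gir}), and under $P$ it equals $\int_0^{\cdot \wedge \tau_n} a(s,X_s)\,\dd s$ by Proposition \ref{prop: eq MP}. Applying the stopped version of Proposition \ref{prop: eq MP} in the reverse direction yields that $Q_n$ solves the MP $((0,\mu), a, x_0; \tau_n)$. The only delicate point is verifying that Girsanov's theorem applies to the stopped density $Z_{\cdot \wedge \tau_n}$, but this is ensured by the goodness assumption \eqref{eq: loc nov}, which guarantees that $Z_{\cdot \wedge \tau_n}$ is a uniformly integrable $P$-martingale. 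The rest is a mechanical bookkeeping of coordinates and stopping.
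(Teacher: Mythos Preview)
Your proposal is correct and follows essentially the same route as the paper: apply Girsanov's theorem to identify the semimartingale decomposition of \(X_{\cdot \wedge \tau_n}\) under \(Q_n\), then use the MPRE \eqref{eq: MPRE} together with \eqref{eq: set M} to rewrite the drift as \((0,\mu)\). The only cosmetic difference is that the paper finishes by applying It\^o's formula directly to show \(M^f_{\cdot \wedge \tau_n, x_0}\) is a local \((\mathbf{F},Q_n)\)-martingale, whereas you appeal to a ``stopped version of Proposition~\ref{prop: eq MP}''; since that stopped equivalence is itself proved by exactly the same It\^o computation, the two arguments coincide.
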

\begin{proof}
	By Girsanov's theorem, see Theorem \ref{theo: Gir} in Appendix \ref{rem: SMP}, \eqref{eq: MPRE} and \eqref{eq: set M}, the stopped coordinate process \((X_{t \wedge \tau_n})_{t \in [0, T]}\) is an \((\F, Q_n)\)-semimartingale with decomposition
	\begin{align}
	X_{t \wedge \tau_n} = \int_0^{t \wedge \tau_n} (0, \mu)(s, X_s)\dd s + X^c_{t \wedge \tau_n},\quad t \in [0, T], 
	\end{align}
	where \((X^c_{t \wedge \tau_n})_{t \in [0, T]}\) is a local \((\F, Q_n)\)-martingale with quadratic variation process \begin{align}\left(\int_0^{t \wedge \tau_n} a(s, X_s)\dd s\right)_{t \in [0, T]}.\end{align}
	For \(f \in C^2(\mathbb{R}^d)\), by It\^o's formula, the process \(M^f_{\cdot \wedge \tau_n, x_0}\), defined as in \eqref{Mf} with \(b\) replaced by \((0, \mu)\), is a local \((\F, Q_n)\)-martingale. This implies the claim.
\end{proof}

First, we assume that Condition \ref{cond: EL1} holds. 
We use ideas given in the proofs of the Theorems 10.2.1 and 10.2.3 in \cite{SV}.
Define a sequence \((u_n)_{n \in \mathbb{N}}\) of Borel functions \([r, \infty) \to [1, \infty)\) recursively by setting \(u_0 \triangleq 1\) and 
\begin{align}
u_n (x) = \int_r^x \exp\left(- \int_r^y B(z)\dd z\right)\int_r^y u_{n-1}(v) \frac{2 \exp \left(\int_r^v B(z)\dd z\right)}{A(v)} \dd v \dd y
\end{align}
for \(x \in [r, \infty).\)
Since \(A\) and \(B\) are continuous, we obtain
\begin{align}\label{eq: ODE n}
A(x) u''_n(x) + A(x) B(x) u'_n(x) = 2 u_{n-1}(x),\quad x \in [r, \infty).
\end{align}
Moreover, by induction, it follows that 
\begin{align}\label{eq: bound u}
u_n (x) \leq \frac{[u_1(x)]^n}{n!},\quad n \in \mathbb{N} \cup \{0\}, x \in [r, \infty).
\end{align}
In particular, since
\begin{align}\label{eq: bound u'}
u'_n (x) = \exp \left(- \int_r^x B(z) \dd z\right) \int_r^x u_{k-1}(y) \frac{2 \exp \left(\int_r^y B(z)\dd z\right)}{A(v)} \dd y, 
\end{align}
it holds that 
\begin{align}
u'_n(x) \leq u'_1 (x) \frac{[u_1(x)]^{n-1}}{(n-1)!},\quad n \in \mathbb{N}, x \in [r, \infty).
\end{align}
Thus, since \(u_n \geq 0\) and \(u'_n \geq 0\) for all \(n \in \mathbb{N} \cup \{0\}\), the sums \(\sum_{n = 0}^\infty u_n\) and \(\sum_{n= 0}^\infty u_n'\) converge absolutely uniformly on compact subsets of \([r, \infty)\). Solving \eqref{eq: ODE n} for \(u''_n\) and using again the bounds \eqref{eq: bound u} and \eqref{eq: bound u'}, it follows that also \(\sum_{n = 1}^\infty u''_n\) converges absolutely uniformly on compact subsets of \([r, \infty)\).
Summarizing this, \(u \triangleq \sum_{n = 0}^\infty u_n\) is a twice continuously differentiable increasing function with \(u' = \sum_{n = 0}^\infty u'_n\) and \(u'' = \sum_{n = 1}^\infty u''_n\). In particular, using \eqref{eq: ODE n}, it holds that
\begin{align}\label{u ODE}
A(x) u''(x) + A(x) B(x) u'(x)= 2 u(x),\quad  x \in [r, \infty).
\end{align}
By \eqref{cond: eq H}, we have
\begin{align}\label{eq: u expl}
u(x) \geq 1 + u_1(x) \xrightarrow{\quad x \to \infty\quad} \infty.
\end{align}
Let \(\phi \in C^2 (\mathbb{R}^d)\) be such that \(\phi \geq 1\) and \(\phi(x) \triangleq u\big(\frac{\|x\|^2}{2}\big)\) for \(x \in \{z \in \mathbb{R}^d \colon \|z\| > \sqrt{2 r}\}\). We have for \(\dd t\)-a.a. \(t \in [0, T]\) and all \(x \in \{z \in \mathbb{R}^d\colon \|z\| \leq \sqrt{2 r}\}\)
\begin{equation}\begin{split}
\mathcal{K}^{((0, \mu), a)}_t \phi (x) &= \la (0, \mu)(t, x),  \nabla \phi (x)\ra + \frac{1}{2}\textup{trace} \left( a(t, x) \nabla^2 \phi(x) \right) 
\\&\leq  \|\mu(t, x)\| \|\nabla \phi(x)\| + \frac{1}{2} \|a(t, x)\| \|\nabla^2 \phi (x)\|
\\&\leq\zeta (t) \max \left( \sup_{\|y\| \leq \sqrt{2 r}} \|\nabla \phi(y)\|, \sup_{\|y\| \leq \sqrt{2 r}}\frac{ \|\nabla^2 \phi(y)\|}{2} \right)\triangleq \zeta (t) C,
\end{split}
\end{equation}
where \(\zeta\) is as in Condition \ref{cond: EL1}.
\begin{lemma}\label{lem: bound phi}
	For \(\dd t\)-a.a. \(t \in [0, T]\) and all \(x \in \mathbb{R}^d\) it holds that \begin{align}\mathcal{K}^{((0, \mu), a)}_t \phi (x) \leq \zeta (t) \max (C, 1) \phi(x).\end{align}
\end{lemma}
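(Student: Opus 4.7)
The plan is to split on whether $\|x\| \leq \sqrt{2r}$ or $\|x\| > \sqrt{2r}$. On the ball $\{\|x\| \leq \sqrt{2r}\}$ the bound follows at once from the estimate $\mathcal{K}^{((0,\mu),a)}_t \phi(x) \leq \zeta(t) C$ derived immediately before the lemma: since $\phi \geq 1$ by construction, $\zeta(t) C \leq \zeta(t) C \phi(x) \leq \zeta(t) \max(C, 1) \phi(x)$, and nothing further is needed inside the ball.

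For $\|x\| > \sqrt{2r}$ the explicit radial representation $\phi(x) = u(\|x\|^2/2)$ is the key. Setting $y \triangleq \|x\|^2/2 \geq r$, the chain rule gives $\nabla\phi(x) = u'(y) x$ and $\nabla^2\phi(x) = u'(y) I_d + u''(y) x x^\top$, so that
\begin{align*}
\mathcal{K}^{((0, \mu), a)}_t \phi(x) = \tfrac{1}{2} u'(y) \bigl[\textup{trace } a(t, x) + 2 \la x, (0, \mu)(t, x)\ra\bigr] + \tfrac{1}{2} u''(y) \la a(t, x) x, x\ra.
\end{align*}
I would then eliminate $u''(y)$ using the ODE \eqref{u ODE} in the form $A(y) u''(y) = 2 u(y) - A(y) B(y) u'(y)$, which rearranges the right-hand side into
\begin{align*}
\frac{u'(y)}{2} \bigl[\textup{trace } a(t, x) + 2 \la x, (0, \mu)(t, x)\ra - B(y) \la a(t, x) x, x\ra\bigr] + u(y) \frac{\la a(t, x) x, x\ra}{A(y)}.
\end{align*}

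The second summand is bounded by $\zeta(t) u(y) = \zeta(t) \phi(x)$ via \eqref{cond bound 1}. For the first summand, \eqref{cond bound 2} shows the bracket in square brackets is at most $(\zeta(t) - 1) B(y) \la a(t, x) x, x\ra$; since $u'(y) \geq 0$, this contribution is nonpositive on $\{\zeta(t) \leq 1\}$ and we get the claim outright there. On $\{\zeta(t) > 1\}$ I would feed \eqref{cond bound 1} back in and invoke the ODE identity $A(y) B(y) u'(y) = 2 u(y) - A(y) u''(y)$ once more to express the residual term in terms of $u(y)$; the excess time-dependent factors are then absorbed into the multiplicative constant $\max(C, 1)$, producing the bound $\mathcal{K}^{((0, \mu), a)}_t \phi(x) \leq \zeta(t) \max(C, 1) \phi(x)$.

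The principal obstacle is the sign of $u''(y)$: because the series construction $u = \sum_n u_n$ does not force $u'' \geq 0$, the naive replacement of $\la a(t, x) x, x\ra$ by its upper bound $\zeta(t) A(y)$ inside the $u''$-term is not valid in general. The ODE-driven elimination of $u''$ is precisely the device that sidesteps this sign issue, reducing the estimate to manipulations of the nonnegative quantities $u'(y)$, $u(y)$, $B(y) \la a x, x\ra$, and $\la a x, x\ra / A(y)$; the remainder is a careful algebraic bookkeeping to extract the prefactor $\max(C,1)$ claimed in the statement.
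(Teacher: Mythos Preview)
Your treatment of the ball $\{\|x\|\le\sqrt{2r}\}$ is correct and matches the paper. The difficulty is in the region $\|x\|>\sqrt{2r}$ on the set $\{\zeta(t)>1\}$, and there your argument has a genuine gap.

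After bounding the bracket by $(\zeta(t)-1)B(y)\langle a(t,x)x,x\rangle$ and feeding in \eqref{cond bound 1}, the first summand is dominated by $\tfrac12(\zeta(t)-1)\zeta(t)A(y)B(y)u'(y)$. Re-invoking the ODE as you propose yields $A(y)B(y)u'(y)=2u(y)-A(y)u''(y)$, which still carries $u''(y)$ of unknown sign; even if one is willing to drop $-A(y)u''(y)$ (which requires $u''\ge 0$, exactly what you set out to avoid), the resulting estimate on $\mathcal{K}^{((0,\mu),a)}_t\phi(x)$ is of order $\zeta(t)^2 u(y)$, not $\zeta(t)\cdot\text{const}\cdot u(y)$. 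Since $\max(C,1)$ is a constant independent of $t$ while $\zeta$ is only assumed integrable on $[0,T]$, the assertion that ``the excess time-dependent factors are then absorbed into the multiplicative constant $\max(C,1)$'' does not go through.

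The paper's route is shorter and sidesteps the case split entirely. It does \emph{not} eliminate $u''$. After using $u'\ge 0$ together with \eqref{cond bound 2} on the $u'$-coefficient, it keeps the combination $u''+Bu'$ intact, notes from the ODE \eqref{u ODE} that $u''+Bu'=2u/A>0$, and only then applies \eqref{cond bound 1} to the prefactor $\langle a(t,x)x,x\rangle$ multiplying this nonnegative quantity. Your concern about the sign of $u''$ alone is therefore misplaced: only the sign of $u''+Bu'$ is needed, and that is delivered for free by the ODE. The upshot is that eliminating $u''$ first is the wrong move; grouping $u''+Bu'$ before estimating $\langle a x,x\rangle$ is what makes the argument work without splitting on the size of $\zeta(t)$.
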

\begin{proof}
	It suffices to prove the claim for \(\dd t\)-a.a. \(t \in [0, T]\) and all \(x \in \{z \in \mathbb{R}^d\colon \|z\|> \sqrt{2 r}\}
	\). By \eqref{cond bound 1}, \eqref{cond bound 2} and \eqref{u ODE}, we obtain
	\begin{equation}
	\begin{split}
	\mathcal{K}^{((0, \mu), a)}_t \phi (x) 
	&= \left\la(0, \mu)(t, x), \nabla u \left(\frac{\|x\|^2}{2}\right) \right\ra + \frac{1}{2}\textup{trace} \left( a(t, x) \nabla^2 u \left(\frac{\|x\|^2}{2}\right) \right)
	\\&=  \frac{1}{2}\la x, a(t, x) x\ra u'' \left(\frac{\|x\|^2}{2} \right)  \\&\qquad\qquad+  \frac{1}{2}\left( \textup{trace } a(t, x) + 2\la x, (0, \mu)(t, x)\ra\right) u' \left(\frac{\|x\|^2}{2}\right) 
	\\&\leq \frac{\zeta (t)}{2} \la x, a(t, x) x\ra \left( u'' \left(\frac{\|x\|^2}{2} \right)  +  B\left(\frac{\|x\|^2}{2}\right) u' \left(\frac{\|x\|^2}{2}\right) \right) 
	\\&\leq \frac{\zeta (t)}{2}A \left(\frac{\|x\|^2}{2}\right) \left( u'' \left(\frac{\|x\|^2}{2} \right)  +  B\left(\frac{\|x\|^2}{2}\right) u' \left(\frac{\|x\|^2}{2}\right) \right) 
	\\&= \zeta (t) u \left(\frac{\|x\|^2}{2}\right), %                         = \phi(x).
	\end{split}
	\end{equation}
	where we also used that \(u' \geq 0\) and \(u'' + B u' \geq 0\), see \eqref{u ODE}.
\end{proof}
Now, set \begin{align}
U^n_t \triangleq \exp \left(- \max (C, 1) \int_0^{t \wedge \tau_n} \zeta (s) \dd s\right),\quad t \in [0, T].
\end{align} 
By Lemma \ref{lem: Qn} and It\^o's formula,
\begin{equation}\begin{split}
U^n \phi(X_{\cdot \wedge \tau_n}) =  \phi(x_0) &+ \int_0^{\cdot \wedge \tau_n}  U^n_s \left(- \max (C, 1) \zeta(s) \phi(X_s) + \mathcal{K}^{((0, \mu), a)}_s \phi(X_s)\right) \dd s 
\\&+ \textup{local \((\F, Q_n)\)-martingale starting at the origin}.
\end{split} 
\end{equation}
Thus, by Lemma \ref{lem: bound phi}, we have 
\begin{align}
U^n \phi(X_{\cdot \wedge \tau_n}) \leq \textup{local \((\F, Q_n)\)-martingale starting at \(\phi(x_0)\)}. 
\end{align}
Since \(U^n_t \phi(X_{t \wedge \tau_n}) \geq 0\) for all \(t \in [0, T]\), the local \((\F, Q_n)\)-martingale on the r.h.s. is non-negative and thus an \((\F, Q_n)\)-supermartingale.
Hence, for \(n > \sqrt{2 r}\), we obtain
\begin{equation}\begin{split}
\phi(x_0) &\geq E^{Q_n} [U^n_T \phi(X_{T \wedge \tau_n})] \\&\geq E^{Q_n}[U^n_T \phi(X_{\tau_n}) \1_{\{\tau_n \leq T\}}]
%\\&=E^{Q_n} \left[ \exp \left(- \max (C, 1) \int_0^{\tau_n} \zeta (s) \dd s\right) \phi (X_{\tau_n}) \1_{\{\tau_n \leq T\}}\right] 
\\&=  E^{Q_n} \left[ \exp \left(- \max (C, 1) \int_0^{ \tau_n} \zeta (s) \dd s\right) u \left(\frac{n^2}{2}\right) \1_{\{\tau_n \leq T\}}\right] 
\\&\geq   \exp \left(- \max (C, 1) \int_0^{T} \zeta (s) \dd s\right) u \left(\frac{n^2}{2}\right) Q_n(\tau_n \leq T).
\end{split}
\end{equation}
By \eqref{eq: u expl}, this bound yields that \(\lim_{n \to \infty} Q_n (\tau_n \leq T) = 0\).

Now, assume that Condition \ref{cond: EL3} holds. 
The argument is very similar to the previous one.
Set \(\psi(x) \triangleq 1 + \|x\|^2\). 
By Condition \ref{cond: EL3}, for \(\dd t\)-a.a. \(t \in [0, T]\) and all \(x \in \mathbb{R}^d\)
\begin{align}\label{eq: psi bound}
\mathcal{K}^{((0, \mu), a)}_t \psi(x) = 2 \la b(t, x), x\ra + \textup{trace } a(t, x) \leq \zeta (t) (1 + \|x\|^2) = \zeta (t) \psi(x).
\end{align}
Now, set \begin{align}
V^n_{t} \triangleq \exp \left( - \int_0^{t \wedge \tau_n} \zeta (s)\dd s \right) \psi (X_{t \wedge \tau_n}),\end{align} and note, as above, that by It\^o's formula and \eqref{eq: psi bound}
\begin{align}
V^n \leq \textup{\((\F, Q_n)\)-supermartingale starting at \(\psi(x_0)\)}.
\end{align}
Thus, we obtain
\begin{align}
\psi(x_0) \geq E^{Q_n} [V^n_T] \geq \exp \left(- \int_0^T \zeta (s)\dd s \right) \left(1 + n^2\right) Q_n(\tau_n \leq T).
\end{align}
This proves that \(\lim_{n \to \infty} Q_n(\tau_n \leq T) = 0\).

Finally, we assume that Condition \ref{cond: EL MCKEAN} holds. We use ideas of \cite{mckean1969stochastic}.
For \(n \geq z_0\), set 
\begin{align}\label{eq: first time alpha}
\alpha_n \triangleq \frac{1}{2 n^2} \left( \sqrt{d^2 + \frac{4 n^2}{\gamma(n)}} - d\right),
\end{align}
and for \(t \in [0, T]\)
\begin{equation}
\begin{split}
\widehat{Z}_t &\triangleq \exp \left( \alpha_n \int_0^{t \wedge \tau_n} \la X_s, \dd X_s\ra - \frac{\alpha^2_n}{2} \int_0^{t \wedge \tau_n} \la X_s, a(s, X_s) X_s \ra \dd s\right) 
\\&= \exp \left( \frac{\alpha_n}{2} \left(\|X_{t \wedge \tau_n}\|^2 - \|x_0\|^2\right) \right)\\ &\qquad\quad\times \exp\left(- \frac{\alpha_n}{2} \int_0^{t \wedge \tau_n}\left( \textup{trace } a(s, X_s) + \alpha_n \la X_s, a(s, X_s) X_s \ra\right) \dd s\right),
\end{split}
\end{equation}
where the equality follows from Lemma \ref{lem: Qn} and It\^o's formula and holds up to a \(Q_n\)-null set. Recall that we assume that \(\mu = 0\), i.e. that \((X_{t \wedge \tau_n})_{t \in [0, T]}\) is an \((\F, Q_n)\)-martingale.
Hence, \(\widehat{Z} = (\widehat{Z}_t)_{t \in [0, T]}\) is a non-negative local \((\F, Q_n)\)-martingale and therefore an \((\F, Q_n)\)-supermartingale. 
Thus, for \(n > z_0\)
\begin{equation}
\begin{split}
1 &\geq E^{Q_n} \left[\widehat{Z}_{T \wedge \tau_n} \right] \\&\geq E^{Q_n} \left[\widehat{Z}_{\tau_n} \1_{\{\tau_n \leq T\}}\right]
\\&\geq \exp \left( \frac{\alpha_n}{2} \left(n^2 - \|x_0\|^2\right) \right)E^{Q_n}\left[ \exp \left(- \frac{\alpha_n \tau_n \gamma (n)}{2}\left(d + \alpha_n n^2 \right)\right) \1_{\{\tau_n \leq T\}} \right]
\\&= \exp \left( \frac{\alpha_n}{2} \left(n^2 - \|x_0\|^2\right) \right)E^{Q_n}\left[ \exp \left(- \frac{\tau_n}{2}\right) \1_{\{\tau_n \leq T\}} \right]
\\&\geq \exp \left( \frac{\alpha_n}{2} \left(n^2 - \|x_0\|^2\right) - \frac{T}{2} \right) Q_n(\tau_n \leq T),
\end{split}
\end{equation}
where we use that \eqref{eq: first time alpha} is chosen such that \(\alpha_n \gamma(n) (d + \alpha_n n^2) = 1\).
If \(\limsup_{n \to \infty} \frac{n^2}{\gamma (n)} = \infty\), then
\begin{equation}\begin{split}
\limsup_{n \to \infty} &\frac{\alpha_n}{2} \left(n^2 - \|x_0\|^2\right) 
\\&= \limsup_{n \to \infty} \frac{1}{4} \left(\sqrt{d^2 + \frac{4 n^2}{\gamma(n)}} - d\right) \cdot \left(\frac{n^2 - \|x_0\|^2}{n^2}\right)= \infty 
\end{split}
\end{equation}
and we can conclude that \(\lim_{n \to \infty} Q_n (\tau_n \leq T) = 0\).

In the remaining proof we assume the second part of Condition \ref{cond: EL MCKEAN}. W.l.o.g. we can additionally assume that \begin{align}\label{eq: add assp} \limsup_{n \to \infty} \frac{n^2}{\gamma(n)} < \infty,\end{align}
since otherwise we are in the case discussed above.
With a little abuse of notation, we redefine \(\alpha_n\), see \eqref{eq: first time alpha}, by setting \begin{align}\alpha_n \triangleq \frac{1}{\gamma(n)}.\end{align} 
Furthermore, we set 
\begin{align}
c_n \triangleq d + \frac{n^2}{\gamma (n)}.
\end{align}
By the optional stopping theorem, for all \(\epsilon > 0\) such that \(n - \epsilon \geq z_0\),  up to a \(Q_n\)-null set, we obtain
\begin{equation}\label{eq: help McKean}
\begin{split}
1 &\geq E^{Q_n} \left[\frac{\widehat{Z}_{T \wedge \tau_{n}}}{\widehat{Z}_{T \wedge \tau_{n - \epsilon}}} \bigg|\mathcal{F}_{\tau_{n - \epsilon}}\right] 
\\&\geq E^{Q_n} \left[\frac{\widehat{Z}_{\tau_{n}}}{\widehat{Z}_{\tau_{n - \epsilon}}} \1_{\{\tau_n \leq T\}} \bigg|\mathcal{F}_{\tau_{n - \epsilon}}\right] 
\\&\geq \exp \left(\frac{n^2 - (n - \epsilon)^2}{2 \gamma (n)} \right)
E^{Q_n} \left[ \exp \left(-  \frac{c_n }{2}\cdot (\tau_{n} - \tau_{n - \epsilon})\right) \1_{\{\tau_{n} \leq T\}} \big| \mathcal{F}_{\tau_{n - \epsilon}} \right]
\\&\geq \exp \left(\frac{n \epsilon}{2 \gamma (n)} \right)
E^{Q_n} \left[ \exp \left(-  \frac{c_n }{2}\cdot (\tau_{n} - \tau_{n - \epsilon})\right)\1_{\{\tau_{n} \leq T\}} \big| \mathcal{F}_{\tau_{n - \epsilon}} \right].
\end{split}
\end{equation}
Now, set \(t_k^m \triangleq \frac{k (n - z_0)}{m}\). Then, for \(n > z_0\) and \(m \geq 2\), using \eqref{eq: help McKean}, we obtain
\begin{equation}\begin{split}
\exp &\left(\frac{-c_n T}{2}\right) Q_n (\tau_n \leq T) \\&\ \ \leq E^{Q_n} \left[ \exp \left( \frac{-c_n (\tau_n - \tau_{z_0})}{2} \right) \1_{\{\tau_n \leq T\}} \right] 
\\&\ \ = E^{Q_n} \bigg[ \exp \bigg( - \frac{c_n}{2} \sum_{k = 0}^{m-1} \left(\tau_{n - t^m_k} - \tau_{n - t^m_{k + 1}}\right) \bigg)  \1_{\{\tau_{n - t^m_1} \leq T\} \cap \{\tau_n \leq T\}}\bigg]
\\&\ \ = E^{Q_n} \bigg[ \exp \bigg( - \frac{c_n}{2} \sum_{k = 1}^{m-1} \left(\tau_{n - t^m_k} - \tau_{n - t^m_{k + 1}}\right) \bigg) \1_{\{\tau_{n - t^m_1} \leq T\}} \\&\qquad\qquad\quad \times E^{Q_n} \left[ \exp \left(- \frac{c_n}{2}(\tau_{n} - \tau_{n - t^m_1})\right)\1_{\{\tau_{n} \leq T\}} \big| \mathcal{F}_{\tau_{n - t^m_1}}\right] \bigg]
\\& \ \ \leq E^{Q_n} \bigg[ \exp \bigg( - \frac{c_n}{2} \sum_{k = 1}^{m-1} \left(\tau_{n - t^m_k} - \tau_{n - t^m_{k + 1}}\right) \bigg)\1_{\{\tau_{n - t^m_1} \leq T\}} \\&\qquad\qquad\quad \times \exp \bigg( \frac{- (n - t^m_0) (t^m_1 - t^m_0)}{2\gamma(n - t^m_0)} \bigg) \bigg]
\\&\ \ \leq \exp \bigg( -\sum_{k = 0}^{m-1} \frac{ (n - t^m_k) ( t^m_{k + 1} - t^m_k)}{2\gamma(n - t^m_k)} \bigg)
\\&\ \ \leq \exp \bigg( -\sum_{k = 0}^{m-1} \frac{ (n - t^m_k) (t^m_{k + 1} - t^m_k)}{2\xi(n - t^m_k)} \bigg).
\end{split}
\end{equation}
Thus, letting \(m \to \infty\), for \(n > z_0\), we have
\begin{align}
Q_n (\tau_n \leq T) \leq \exp \left(\frac{c_nT}{2}\right) \exp \left( - \frac{1}{2}  \int_{z_0}^n \frac{z}{\xi(z)} \dd z \right).
\end{align}
Using \eqref{eq: cond MCKEAN} and \eqref{eq: add assp}, this implies that \(\lim_{n \to \infty} Q_n (\tau_n \leq T) = 0\) and the proof is complete. 
\qed
\subsection{Proof of Lemma \ref{lem: techn uni}}\label{App: pf 3}
Suppose that Condition \ref{cond: U1} holds. Let \(\Phi_n \in C^2(\mathbb{R}^d)\) such that \(\Phi_n\) has compact support, \(\Phi_n \in [0, 1]\) and \(\Phi_n = 1\) on \(\{x \in \mathbb{R}^d \colon \|x\| \leq n\}\). Then, set \(a_n (t, x) \triangleq \Phi_n (x) a(t, x) + (1- \Phi_n(x)) \textup{Id}\), where \(\textup{Id}\) denotes the identity matrix. Let \(Q_1\) and \(Q_2\) be solutions to the MP \((0, a, x_0)\).
By the Theorems 7.2.1 and 10.1.1 in \cite{SV}, there exists a solution to the MP \((0, a_n, x_0)\) which coincides with \(Q_1\) and \(Q_2\) on \(\mathcal{F}_{\tau_n} = \sigma(X_{t \wedge \tau_n}, t \in [0, T])\).
Here, Theorem 10.1.1 in \cite{SV} is comparable to Proposition \ref{prop: loc uni} in Appendix \ref{Appendix} for time-inhomogeneous MPs.
Since \(\tau_n (\omega) \nearrow \infty\) as \(n \to \infty\) for all \(\omega \in \Omega\), we have for all \(G \in \mathcal{F}\)
\begin{align}
Q_1(G) = \lim_{n \to \infty} Q_1(G \cap \{\tau_n > T\}) = \lim_{n \to \infty} Q_2(G \cap \{\tau_n > T\}) = Q_2(G).
\end{align}
Thus, we have proven uniqueness.

Suppose that Condition \ref{cond: U2} holds. We use a classical argument based on Gronwall's lemma to show that the SDE
\begin{align}\label{eq: SDE a}
\dd Y_t = a^\frac{1}{2} (t, Y_t) \dd W_t, \quad Y_0 = x_0, 
\end{align}
where \(W\) is a \(d\)-dimensional Brownian motion, satisfies pathwise uniqueness. Then, by the Yamada-Watanabe theorem, see Proposition \ref{theo: YW} in Appendix \ref{Appendix}, the SDE satisfies also uniqueness in law and, by Theorem \ref{theo: equivalence MP SDE up to expl} in Appendix \ref{Appendix}, there exists at most one solution to the MP \((0, a, x_0)\).
Let \(Y = (Y_t)_{t \in [0, T]}\) and \(U = (U_t)_{t \in [0, T]}\) be solution processes to the SDE \eqref{eq: SDE a} on the same driving system w.r.t. the same Brownian motion \(W = (W_t)_{t \in [0, T]}\). Set \(\rho_n \triangleq \inf(t \in [0, T] \colon \|Y_t\| \vee \|U_t\| \geq n)\). By It\^o's isometry and Fubini's theorem we obtain for all \(t \in [0, T]\) 
\begin{equation}\begin{split}
E\left[ \|Y_{t \wedge \rho_n} - U_{t \wedge \rho_n}\|^2\right] &= E \left[ \int_0^{t \wedge \rho_n} \|a^\frac{1}{2} (s, Y_s) - a^\frac{1}{2} (s, U_s)\|^2 \dd s \right] \\&\leq \int_0^{t} \zeta_n (s) E\left[ \|Y_{s \wedge \rho_n} - U_{s \wedge \rho_n}\|^2 \right]\dd s.
\end{split}
\end{equation}
Therefore, by Lemma \ref{lem: Gronwall} in Appendix \ref{Appendix: Le Gall und co} and the continuity of \((Y_{t \wedge \rho_n} - U_{t \wedge \rho_n})_{t \in [0, T]}\), a.s. \(Y_{t \wedge \rho_n} = U_{t \wedge \rho_n}\) for all \(t \in [0, T]\) and \(n \in \mathbb{N}\). Since \(\rho_n \nearrow \infty\) as \(n \to \infty\), this yields that the SDE \eqref{eq: SDE a} satisfies pathwise uniqueness.
\qed
\subsection{Proof of Lemma \ref{lem: Z compactification}}\label{App: pf 4}
First, we prove a pathwise uniqueness result for the SDE \eqref{SDE: Z}. We use an argument as in the proof of Proposition \ref{prop: uni 1d}.
Suppose that \(U = (U_t)_{t \in [0, \infty)}\) and \(V = (V_t)_{t \in [0, \infty)}\) are two continuous \(I\)-valued processes such that 
\begin{align}
\dd U_t &= A^{\frac{1}{2}} (U_t) \1_{\{t \leq \gamma_n(U)\}} \dd W_t + \frac{1}{2} A(U_t) B(U_t)\1_{\{ t \leq \gamma_n(U)\}} \dd t,\quad U_0 = \frac{1}{2},\\
\dd V_t &= A^{\frac{1}{2}} (V_t) \1_{\{ t \leq \gamma_n(V)\}} \dd W_t + \frac{1}{2} A(V_t) B(V_t)\1_{\{ t \leq \gamma_n(V)\}} \dd t,\hspace{0.12cm}\quad V_0 = \frac{1}{2},
\end{align}
where \(W = (W_t)_{t \in [0, \infty)}\) is a one-dimensional Brownian motion.
Set \begin{align}
Y_t \triangleq U_{t \wedge \gamma_n(U) \wedge \gamma_n(V)} - V_{t \wedge \gamma_n (U) \wedge \gamma_n(V)}.
\end{align}
Let \(\rho_n\) be as in Condition \ref{cond: NL1}. 
By \eqref{gamma n} we have for all \(t \in [0, \infty)\)
\begin{equation}\label{rho bound}
\begin{split}
\int_0^t \frac{1}{\rho_n(Y_s)} &\1_{\{Y_s > 0\}} \dd\hspace{0.05cm} \lle Y\rre_s 
\\&= \int_0^t \frac{\|A^{\frac{1}{2}}(U_s) - A^{\frac{1}{2}} (V_s)\|^2}{\rho_n(\|Y_s\|)} \1_{\{Y_s > 0\}} \1_{\{s \leq \gamma_n(U) \wedge \gamma_n(V)\}} \dd s
\\&\leq \int_0^t \frac{\rho_n(\|Y_s\|)}{\rho_n(\|Y_s\|)} \1_{\{Y_s > 0\}} \1_{\{s \leq \gamma_n(U) \wedge \gamma_n(V)\}} \dd s
\leq t.
\end{split}
\end{equation}
Using \eqref{rho cond} and a lemma by Le Gall, see Lemma \ref{lem: Le Gall} in Appendix \ref{Appendix: Le Gall und co}, we obtain that the local time of \(Y\) in the origin a.s. vanishes. Hence, by Tanaka's formula, it holds that 
\begin{align}
\|Y\| = \int_0^\cdot \textup{sgn}(Y_s) \dd Y_s.
\end{align}
Since, 
\begin{align}\label{brow mart}
\int_0^{t \wedge \gamma_n(U) \wedge \gamma_n(V)} \left(A^{\frac{1}{2}} (U_s) - A^{\frac{1}{2}}(V_s)\right)^2 \dd s \leq 2 t \sup_{x \in [\frac{1}{n}, n]} \|A(x)\| < \infty, 
\end{align}
the stochastic integral \(\int_0^{\cdot \wedge \gamma_n(U) \wedge \gamma_n(V)} \textup{sgn}(Y_s) (A^{\frac{1}{2}} (U_s) - A^{\frac{1}{2}} (V_s)) \dd W_s\) is a martingale. Hence, we obtain from \eqref{kappa n}, Fubini's theorem and Jensen's inequality
\begin{equation}\begin{split}
E \left[ \|Y_t\| \right] &= \frac{1}{2} E \left[\int_0^{t \wedge \gamma_n(U) \wedge \gamma_n(V)} (A(U_s) B(U_s) - A(V_s) B(V_s)) \dd s\right]
\\&\leq \frac{1}{2} E \left[ \int_0^t \kappa_n (\|Y_s\|) \dd s\right]
\\&= \frac{1}{2} \int_0^t E [\kappa_n (\|Y_s\|)] \dd s
\\&\leq \frac{1}{2} \int_0^t \kappa_n( E[\|Y_s\|]) \dd s.
\end{split}
\end{equation}
Now, due to the hypothesis \eqref{kappa cond}, a lemma of Bihari, see Lemma \ref{lem: Bihari} in Appendix \ref{Appendix: Le Gall und co}, yields \(E[\|Y_t\|] = 0\). Thus, by the continuity of \(Y\), a.s. 
\begin{align}\label{eq: U stopped mini}
U_{t \wedge \gamma_n(U) \wedge \gamma_n(V)} = V_{t \wedge \gamma_n(U) \wedge \gamma_n(V)} \text{ for all } t \in [0, \infty).
\end{align}
Note that, by Galmarino's test, see Lemma III.2.43 in \cite{JS}, we have the implication
\begin{align}\label{eq: U stopped}
U_{t} = V_t \text{ for all } t \leq \gamma_n(U) \quad \Longrightarrow \quad \gamma_n(U) = \gamma_n(V).
\end{align}
By symmetry, this gives the implication 
\begin{align}\label{eq: U stopped 2}
U_{t} = V_t \text{ for all } t \leq \gamma_n(U) \wedge \gamma_n(V) \quad \Longrightarrow \quad \gamma_n(U) = \gamma_n(V),
\end{align}
and we conclude that a.s. \eqref{eq: U stopped mini} implies a.s. \(U_{t \wedge \gamma_n(U)} = V_{t \wedge \gamma_n(V)}\) for all \(t \in [0, \infty)\). In other words, we have shown that the SDE \eqref{SDE: Z} satisfies pathwise uniqueness. In fact, this yields that the SDE 
\begin{align}\label{eq: SDE up to exp}
\dd Y_t = A^\frac{1}{2} (Y_t) \dd W_t + \frac{1}{2} A(Y_t) B(Y_t) \dd t,\quad Y_0 = \frac{1}{2}, 
\end{align}
satisfies pathwise uniqueness up to explosion, see Definition \ref{def: pu expl} in Appendix \ref{Appendix}. 

Moreover, by Theorem \ref{theo: Feller test} in Appendix \ref{Appendix}, the SDE \eqref{eq: SDE up to exp} has a weak solution up to explosion, see Definition \ref{def: SP} in Appendix \ref{Appendix}.
Therefore, since weak existence and pathwise uniqueness implies strong existence, see Theorem \ref{theo: YW 2} in Appendix \ref{Appendix}, the proof is complete.
\qed
\subsection{Proof of Lemma \ref{lem: comparison}}\label{App: pf 5}
We set  \(Y^{n} = (Y^n_t)_{t \in [0, T]} \triangleq (Z_{t \wedge \gamma_n(Z) \wedge \xi_n} - U_{t \wedge \gamma_n(Z) \wedge \xi_n})_{t \in [0, T]}\). Let \(t \in [0, T]\) be fixed. As in \eqref{rho bound}, we obtain
\begin{align}
\int_0^t \frac{1}{\rho_n(Y^{n}_s)} \1_{\{Y_s^{n} >0\}} \dd\hspace{0.06cm}\lle Y^n\rre_s 
&\leq t.
\end{align}
Thus, by a lemma of Le Gall, see Lemma \ref{lem: Le Gall} in Appendix \ref{Appendix: Le Gall und co}, a.s. the local time of \(Y^n\) in the origin vanishes and, by Tanaka's formula, we obtain
\begin{align}
\max (Y^{n}_t, 0) = \int_0^t \1_{\{Y^{n}_s > 0\}} \dd Y^{n}_s.
\end{align}
Similar to \eqref{brow mart} one verifies that the Brownian part of \(\int_0^\cdot \1_{\{Y^{n}_s > 0\}} \dd Y^{n}_s\) is a martingale. Now, Fubini's theorem, \eqref{B1} and \eqref{eq: 4.11}, \(\phi_s \leq T\) for all \(s \in [0, T]\) and Jensen's inequality yield
\begin{equation}\begin{split}
E &\left[ \max (Y^{n}_t, 0) \right] 
\\&\ \ = E \left[ \int_0^{t \wedge \gamma_n(Z) \wedge \xi_n} \frac{\1_{\{Z_s > U_s\}}}{2} \left(A(Z_s) B(Z_s) - \frac{  2A(U_s) u(\phi_s, Y_{\phi_s})}{\la a(\phi_s, Y_{\phi_s}) Y_{\phi_s}, Y_{\phi_s}\ra} \right)\dd s\right] 
\\&\ \ \leq E \left[ \int_0^{t \wedge \gamma_n(Z) \wedge \xi_n}\frac{\1_{\{Z_s > U_s\}}}{2} \left(A(Z_s) B(Z_s) - A(U_s) B(U_s) \right)\dd s\right] 
\\&\ \ \leq \frac{1}{2} \int_0^t \kappa_n(E \left[ \max( Y^{n}_s, 0)\right]) \dd s. 
\end{split}
\end{equation}
By Bihari's lemma, see Lemma \ref{lem: Bihari} in Appendix \ref{Appendix: Le Gall und co}, \(E[\max(Y^{n}_t, 0)] = 0\). Since \(t \mapsto \phi_t\) is continuous, the process \(Y^{n}\) has continuous paths and we conclude the claim. 
\qed
\section{Girsanov's Theorem and Kunita-Watanabe's Decomposition}\label{rem: SMP}
We can reformulate MPs in terms of classical semimartingale theory. For a proof see Theorem 13.55 in \cite{J79}.
\begin{proposition}\label{prop: eq MP}
	Let \(P\) be a probability measure on \((\Omega, \mathcal{F})\). The following are equivalent:
	\begin{enumerate} 
		\item[\textup{(i)}] \(P\) solves the MP \((b,a, x)\).
		\item[\textup{(ii)}] We have \(P(X_0 = x) = 1\) and the coordinate process \(X\) is a continuous \((\F,P)\)-semimartingale with canonical decomposition
		\begin{align}\label{eq: App decompositio}
		X = x + \int_0^\cdot b(s, X_s) \dd s + X^c,
		\end{align}
		where \(X^c = (X^c_t)_{t \in [0, T]}\) is a continuous local \((\F, P)\)-martingale with quadratic variation process \(\big(\int_0^t a(s, X_s)\dd s\big)_{t \in [0, T]}\) and \(X^c_0 = 0\). 
		This decomposition is unique up to a \(P\)-null set. 
	\end{enumerate}
\end{proposition}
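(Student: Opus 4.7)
The plan is to establish both directions by combining the test-function characterization in Definition \ref{def: MP} with It\^o's formula, and to obtain uniqueness from the classical fact that a continuous local martingale of finite variation starting at zero is identically zero.

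For (i) $\Rightarrow$ (ii), I would first apply the MP to the coordinate functions \(f_i(x) \triangleq \la e_i, x\ra\). Since \(\nabla f_i = e_i\) and \(\nabla^2 f_i = 0\), one has \(\mathcal{K}^{b,a}_t f_i(x) = \la e_i, b(t,x)\ra\), so directly from Definition \ref{def: MP} the process \(X^c \triangleq X - x - \int_0^\cdot b(s, X_s)\dd s\) is a continuous local \((\F, P)\)-martingale starting at zero. To identify its quadratic variation, I would next apply the MP to \(f_{ij}(x) \triangleq \la e_i, x\ra\la e_j, x\ra\), for which
\begin{align*}
\mathcal{K}^{b,a}_t f_{ij}(x) = \la e_j, x\ra \la e_i, b(t,x)\ra + \la e_i, x\ra\la e_j, b(t,x)\ra + \la e_i, a(t,x) e_j\ra.
\end{align*}
On the other hand, It\^o's product formula applied to \(\la e_i, X\ra\la e_j, X\ra\) using the just-found decomposition produces the same two drift terms plus the cross-variation \(\lle \la e_i, X^c\ra, \la e_j, X^c\ra\rre_t\) and a stochastic integral against \(X^c\). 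Subtracting the two identities shows that \(\lle \la e_i, X^c\ra, \la e_j, X^c\ra\rre - \int_0^\cdot \la e_i, a(s,X_s) e_j\ra\dd s\) is simultaneously of finite variation and a continuous local martingale starting at zero, hence vanishes up to a $P$-null set.

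For (ii) $\Rightarrow$ (i), I would apply It\^o's formula to an arbitrary \(f \in C^2(\mathbb{R}^d)\) using the given semimartingale decomposition of $X$. Substituting \(\dd X_s = b(s,X_s)\dd s + \dd X^c_s\) in
\begin{align*}
f(X_t) - f(x) = \int_0^t \la \nabla f(X_s), \dd X_s\ra + \tfrac{1}{2}\int_0^t \textup{trace}(a(s,X_s)\nabla^2 f(X_s))\dd s,
\end{align*}
the bounded-variation part collapses to \(\int_0^t \mathcal{K}^{b,a}_s f(X_s)\dd s\), while the stochastic integral against $X^c$ is a continuous local martingale since its integrand is locally bounded along continuous paths. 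Uniqueness of the decomposition is immediate from subtracting two candidates: their difference is a continuous local martingale of finite variation starting at zero.

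The main technical obstacle will be the unboundedness of the test functions \(f_i\) and \(f_{ij}\). I would address this by localization, replacing them with smooth compactly-supported modifications agreeing on \(\{\|x\| \leq n\}\), applying the MP there, and then letting $n \to \infty$ using the stopping times \(\tau_n\) from \eqref{eq: tau n}, which a.s. increase to $+\infty$ because $X$ takes values in the continuous path space $\Omega$. The only auxiliary point to check is the a.s. finiteness of \(\int_0^T |\mathcal{K}^{b,a}_s f(X_s)|\dd s\) for the relevant $f$, which is implicit in well-posedness of the MP itself, since otherwise the process in \eqref{Mf} would not be defined.
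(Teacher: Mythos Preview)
Your proposal is correct and follows the standard route (coordinate functions for the drift, products of coordinates for the quadratic variation, It\^o's formula for the converse, localization to handle unbounded test functions). The paper itself does not give a proof at all: it simply refers the reader to Theorem 13.55 in \cite{J79}. Your argument is precisely the classical one behind that reference, so there is nothing to compare beyond noting that you have written out what the paper chose to cite.
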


Let us now state a version of Girsanov's theorem. For a proof see Proposition \ref{prop: eq MP} and Girsanov's theorem for semimartingales, i.e. Theorem III.3.24 in \cite{JS}.
\begin{theorem}\label{theo: Gir}
	Let \(P\) and \(Q\) be two probability measures on \((\Omega, \mathcal{F})\) such that \(Q \ll P\) with \(Z_t = \frac{\dd Q}{\dd P}\big|_{\mathcal{F}_t}\) for \(t \in [0, T]\). If \(P\) solves the MP \((b, a, x)\), then there exists an \(\F\)-predictable process \(c= (c_t)_{t \in [0, T]}\) such that \(Q\)-a.s.
	\begin{align}\label{eq: c inte cond 1}
	\int_0^T \la a(s, X_s) c_s, c_s\ra \dd s < \infty 
	\end{align}
	and	\(X\) is a continuous \((\F, Q)\)-semimartingale with decomposition
	\begin{align}\label{eq: Gir decompo}
	X = x + \int_0^t \left( b(s, X_s) + a(s, X_s) c_s \right)\dd s + X^c, 
	\end{align}
	where \(X^c\) is a continuous local \((\F, Q)\)-martingale with quadratic variation process \(\big(\int_0^t a(s, X_s)\dd s\big)_{t \in [0, T]}\). Moreover, \(c\) is as above if and only if it solves, up to a \(P\)-null set, the equations
	\begin{align}
	\lle\la X, e_i\ra, Z \rre_t = \int_0^t Z_{s-} \la e_i, a(s, X_s) c_s\ra \dd s,\quad i \in \{ 1, ..., d\}, t \in [0, T],
	\end{align}
	where \(\lle \cdot \rre\) is the quadratic variation process relative to \(P\).
\end{theorem}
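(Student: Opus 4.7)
The strategy is to reduce the statement to the semimartingale version of Girsanov's theorem (Theorem III.3.24 in \cite{JS}) and then to rewrite the abstract drift correction it produces in the explicit form $a(s, X_s) c_s \dd s$ via a Kunita--Watanabe-type decomposition. First, Proposition \ref{prop: eq MP} translates the assumption that $P$ solves the MP $(b, a, x)$ into the $(\F, P)$-canonical decomposition
\begin{align*}
X = x + \int_0^\cdot b(s, X_s)\dd s + X^c,
\end{align*}
where $X^c$ is a continuous $d$-dimensional local $(\F, P)$-martingale with $\lle X^c\rre_t = \int_0^t a(s, X_s)\dd s$. Since $Q \ll P$, Theorem III.3.24 in \cite{JS} yields that $X$ is also a continuous $(\F, Q)$-semimartingale and that its continuous local $(\F, Q)$-martingale part equals $X^c - A$, where $A$ is the $\F$-predictable process of finite variation with $A_0 = 0$ given, up to a $Q$-null set, by $A_t = \int_0^t Z_{s-}^{-1} \dd \lle X^c, Z\rre_s$; moreover, the quadratic variation of the continuous martingale part is unchanged, so $X^c$ in \eqref{eq: Gir decompo} has the required bracket $\int_0^\cdot a(s, X_s)\dd s$.

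Second, since $X^c$ is continuous, its predictable bracket with $Z$ coincides with that with the continuous local $(\F, P)$-martingale part $Z^c$ of $Z$. I would then invoke the Kunita--Watanabe decomposition (Theorem \ref{theo: deco KW} in Appendix \ref{rem: SMP}) applied to $Z^c$ relative to $X^c$: there exists an $\F$-predictable $\er^d$-valued process $\widetilde c$ satisfying $P$-a.s.\ $\int_0^T \la a(s, X_s) \widetilde c_s, \widetilde c_s\ra \dd s < \infty$, and a local $(\F, P)$-martingale $N$ with $\lle N, X^c_i\rre = 0$ for all $i$, such that $Z^c = Z_0 + \int_0^\cdot \la \widetilde c_s, \dd X^c_s\ra + N$. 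Strong orthogonality then gives, for each $i$,
\begin{align*}
\lle X^c_i, Z\rre_t \;=\; \lle X^c_i, Z^c\rre_t \;=\; \int_0^t \la e_i, a(s, X_s) \widetilde c_s\ra \dd s.
\end{align*}
Setting $c_s \triangleq Z_{s-}^{-1} \widetilde c_s$ on $\{Z_{s-}>0\}$ and $c_s \triangleq 0$ otherwise gives an $\F$-predictable process; on the $Q$-full set where $Z_- > 0$ throughout $[0, T]$ (use Lemma III.3.6 in \cite{JS}), the integrability condition \eqref{eq: c inte cond 1} follows $Q$-a.s.\ from that of $\widetilde c$. Inserting this $c$ into the explicit formula for $A$ yields $A_t = \int_0^t a(s, X_s) c_s \dd s$, which is precisely the drift appearing in \eqref{eq: Gir decompo}.

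For the ``moreover'' equivalence, I note that since the drift part of $X$ has finite variation, $\lle \la X, e_i\ra, Z\rre_t = \lle X^c_i, Z\rre_t$, so the computation above automatically gives $\lle \la X, e_i\ra, Z\rre_t = \int_0^t Z_{s-} \la e_i, a(s, X_s) c_s\ra \dd s$ for the $c$ constructed above. Conversely, if $c$ is any $\F$-predictable process satisfying these equations $P$-a.s.\ for every $i$, then $\int_0^\cdot Z_{s-}^{-1} \dd \lle X^c, Z\rre_s = \int_0^\cdot a(s, X_s) c_s\dd s$, so by uniqueness of the canonical decomposition of the $(\F, Q)$-semimartingale $X$, \eqref{eq: Gir decompo} holds with this $c$. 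I expect the only real technical hurdles to be bookkeeping around the factor $Z_{s-}$ (defining $c$ on the possibly non-trivial null set $\{Z_- = 0\}$ and transferring almost-sure statements between $P$ and $Q$) and invoking the Kunita--Watanabe decomposition in the multidimensional form with the correct $a$-weighted integrability; both are standard but demand some care.
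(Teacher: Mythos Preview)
Your proposal is correct and follows exactly the route the paper indicates: the paper's proof is a one-line reference to Proposition~\ref{prop: eq MP} together with Theorem~III.3.24 in \cite{JS}, and you have simply unpacked that reference, spelling out the Girsanov drift correction and the Kunita--Watanabe step that sits behind it. There is nothing to add.
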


Finally, we recall a version of the decomposition theorem of Kunita and Watanabe. For a proof see Proposition \ref{prop: eq MP} and Theorem III.4.11 in \cite{JS}.
\begin{theorem}\label{theo: deco KW}
	Let \(P\) be a solution to the MP \((b, a, x)\) and let \(X^c\) be as in \eqref{eq: App decompositio}.
	Then, for each local \((\F, P)\)-martingale \(Z = (Z_t)_{t \in [0, T]}\) with \(Z_0 = 0\) we find an \(\mathbb{R}^d\)-valued \(\F\)-predictable process \(c = (c_t)_{t \in [0, T]}\) such that \eqref{eq: c inte cond 1} holds \(P\)-a.s. and a local \((\F, P)\)-martingale \(N = (N_t)_{t \in [0, T]}\) with \(\lle N, \la e_i,X^c\ra\rre = 0\) for all \(i = 1, ..., d\) such that 
	\begin{align}
	Z_t = \int_0^t \la c_s, \dd X^c_s\ra + N_t,\quad t \in [0, T].
	\end{align}
\end{theorem}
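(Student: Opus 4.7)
The plan is to derive this from the general Kunita-Watanabe decomposition of a local martingale onto the stable subspace generated by the continuous local martingale \(X^c\), specialized to the diffusion setting. First I would apply Proposition~\ref{prop: eq MP} to identify \(X^c\) as a continuous local \((\F, P)\)-martingale whose matrix of quadratic covariations is \(\lle \la e_i, X^c\ra, \la e_j, X^c\ra\rre_t = \int_0^t \la e_i, a(s, X_s) e_j\ra\,\dd s\). In particular, every component covariation is absolutely continuous with respect to Lebesgue measure on \([0, T]\).

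The key step is to construct \(c\). By the Kunita-Watanabe inequality, each scalar covariation \(\lle Z, \la e_i, X^c\ra\rre\) is absolutely continuous with respect to \(\operatorname{trace} \lle X^c\rre_\cdot = \int_0^\cdot \operatorname{trace} a(s, X_s)\,\dd s\), hence with respect to Lebesgue measure. Thus there exist \(\F\)-predictable processes \(g^1, \ldots, g^d\) with
\begin{align*}
\lle Z, \la e_i, X^c\ra\rre_t = \int_0^t g^i_s\,\dd s, \qquad t \in [0, T],\ i = 1, \ldots, d.
\end{align*}
I would then show that \(g_s := (g^1_s, \ldots, g^d_s)\) lies in the range of \(a(s, X_s)\) for \(P \otimes \dd s\)-a.a.\ \((\omega, s)\): for any bounded \(\F\)-predictable \(\er^d\)-valued \(\theta\) with \(\theta_s \in \ker a(s, X_s)\), the stochastic integral \(\int_0^\cdot \la \theta_s, \dd X^c_s\ra\) has vanishing quadratic variation and so vanishes identically, whence its covariation with \(Z\) is zero, giving \(\int_0^t \la \theta_s, g_s\ra\,\dd s = 0\). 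This forces \(g_s \perp \ker a(s, X_s)\) almost everywhere. An \(\F\)-predictable \(c\) with \(a(s, X_s) c_s = g_s\) can then be produced by setting \(c_s := a^+(s, X_s)\, g_s\), where \(a^+\) is the Moore-Penrose pseudo-inverse, which is a Borel function of its matrix argument.

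Having constructed \(c\), I would verify the integrability bound \eqref{eq: c inte cond 1} via localization at stopping times reducing \(Z\) and bounding \(\|c\|\), using a Kunita-Watanabe-type estimate applied to \(\int \la c_s, g_s\ra\,\dd s = \int \la a(s, X_s) c_s, c_s\ra\,\dd s\). Finally, I would define
\begin{align*}
N_t := Z_t - \int_0^t \la c_s, \dd X^c_s\ra,
\end{align*}
which is a local \((\F, P)\)-martingale by construction, and check orthogonality by the direct calculation
\begin{align*}
\lle N, \la e_i, X^c\ra\rre_t = \lle Z, \la e_i, X^c\ra\rre_t - \int_0^t \la e_i, a(s, X_s) c_s\ra\,\dd s = 0.
\end{align*}
The main obstacle is the measurable construction of \(c\) in the possibly degenerate case where \(a(s, X_s)\) is not invertible; this is the step that requires the range-inclusion argument and the measurability of the pseudo-inverse. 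Once \(c\) is in place, everything else reduces to bookkeeping with quadratic covariations.
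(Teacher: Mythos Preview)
Your outline is correct, but note that the paper does not actually give a proof: it simply cites Proposition~\ref{prop: eq MP} together with Theorem~III.4.11 in \cite{JS}, which is the abstract Kunita--Watanabe decomposition onto the stable subspace generated by a continuous local martingale. Your argument is essentially an unfolding of that reference in the present diffusion setting: you identify \(X^c\) and its covariations via Proposition~\ref{prop: eq MP} exactly as the paper does, and then carry out by hand the construction (densities \(g^i\), range inclusion, pseudo-inverse, orthogonality) that \cite{JS} packages into a single statement. So the approaches agree at the level of the underlying idea; yours is more self-contained, the paper's is a black-box citation. The only step in your sketch that deserves a little more care is the integrability \(\int_0^T \la a(s,X_s)c_s,c_s\ra\,\dd s<\infty\): the clean way is to localize with \(\tau_n=\inf\{t:\int_0^t \la a c_s,c_s\ra\,\dd s\ge n\}\), observe that on \([0,\tau_n]\) the integral \(\int \la c,\dd X^c\ra\) is well-defined and orthogonal to \(Z^c-\int \la c,\dd X^c\ra\), and deduce \(\int_0^{\tau_n}\la a c_s,c_s\ra\,\dd s\le \lle Z^c\rre_T<\infty\), forcing \(\tau_n\ge T\) for large \(n\).
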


\section{MPs and SDEs up to Explosion}\label{Appendix}
Let \(I\subseteq \mathbb{R}^d\) be  a domain. 
We denote \(I_\Delta \triangleq I\cup \{\Delta\}\) the one-point compactification of \(I\) and by \(C([0, \infty), I_\Delta)\) be the space of continuous functions \([0, \infty) \to I_\Delta\). It is well-known that \(I_\Delta\) is metrizable and thus a Polish space, see \cite{cohn13}. We note that \(C([0, \infty), I_\Delta)\) is a Polish space when equipped with the local uniform topology. Let \(X = (X_t)_{t \in [0, \infty)}\) be the coordinate process on \(C([0, \infty), I_\Delta)\). Then, the Borel \(\sigma\)-field
\(\mathcal{B}(C([0, \infty), I_\Delta))\) coincides with \(\sigma(X_s, s \in [0, \infty))\). Furthermore, we define \(\mathcal{B}(C([0, t], I_\Delta)) \triangleq \sigma(X_s, s \in [0, t])\) for \(t \in [0, \infty)\). % and \(\mathbf{B}\triangleq (\mathcal{B}(C([0, t], I_\Delta)))_{t \in [0, \infty)}\).

Let \(\mathbb{W}\) be the set of all \(\omega \in C([0, \infty), I_\Delta)\) such that \(\omega(t) = \Delta\) for all \(t \geq \gamma_I(\omega) \triangleq \inf(t \in [0, \infty)\colon \omega(t) \not \in I)\). Then, \(\mathbb{W}\) is a closed subspace of \(C([0, \infty), I_\Delta)\) and, hence, itself a Polish space. Let \(X = (X_t)_{t \in [0, \infty)}\) be the coordinate process on \(\mathbb{W}\) and set \(\mathcal{W} \triangleq \sigma (X_s, s \in [0, \infty))\) and \(\mathbf{W} \triangleq (\mathcal{W}_{t+})_{t \in [0, \infty)}\) with \(\mathcal{W}_t \triangleq \sigma(X_s, s \in [0, t])\).

Fix two Borel functions \(\mu \colon [0, \infty) \times I \to \mathbb{R}^d\) and \(\sigma \colon [0, \infty) \times I \to \mathbb{R}^d \otimes \mathbb{R}^d\), \(y \in I\)  % and \(\nu\) be a probability measure on \((\mathbb{R}^d, \mathcal{B}(\mathbb{R}^d))\). 
and a sequence \((I_n)_{n \in \mathbb{N}}\subset \mathbb{R}^d\) of bounded domains such that \(I_{n} \subset I_{n+1}\) for all \(n \in \mathbb{N}\) and \(\bigcup_{n \in \mathbb{N}} I_n = I\). We set 
\begin{align}
\gamma_n (\omega) \triangleq \inf\left(t \in [0, \infty)\colon \omega(t) \not \in I_n\right) \wedge n,\quad \omega \in \mathbb{W}.
\end{align}
\begin{definition}\label{def: SP}
	A triplet \(((\Sigma, \mathcal{A}, \mathbf{A}, P); W; Y)\) is called \emph{weak solution up to explosion} to \begin{align}\label{eq: Appendix SDE}
	\dd Y_t = \mu(t, Y_t) \dd t + \sigma(t, Y_t) \dd W_t,\quad Y_0= y , 
	\end{align} if \((\Sigma, \mathcal{A}, \mathbf{A}, P)\) is a filtered probability space satisfying the usual hypothesis which supports a \(d\)-dimensional Brownian motion \(W = (W_t)_{t \in [0, \infty)}\) and an \(\mathbf{A}\)-adapted process \(Y = (Y_t)_{t \in [0, \infty)}\) with paths in \(\mathbb{W}\), such that \(P (Y_0 = y) = 1\) and \(P\)-a.s. for all \(n \in \mathbb{N}\) and all \(t \in [0, \infty)\)
	\begin{align}
	Y_{t \wedge \gamma_n(Y)} = Y_0 + \int_0^{t \wedge \gamma_n(Y)} \mu(s, Y_s) \dd s + \int_0^{t \wedge \gamma_n(Y)} \sigma(s, Y_s) \dd W_s, 
	\end{align}
	where it is implicit that the integrals are well-defined, i.e. \(P\)-a.s. for all \(n \in \mathbb{N}\) and all \(t \in [0, \infty)\)
	\begin{align}
	\int_0^{t \wedge \gamma_n(Y)} \left(\|\mu(s, Y_s)\| +  \|\sigma (s, Y_s) \sigma^* (s, Y_s)\|\right) \dd s < \infty.
	\end{align}
	The tuple \(((\Sigma, \mathcal{A}, \mathbf{A}, P); W)\) is called a \emph{driving system} of the SDE \eqref{eq: Appendix SDE} and the process \(Y\) is called \emph{solution process up to explosion}. If \(P(\gamma_I(Y) = \infty) = 1\), we call the solution process non-explosive.
\end{definition}
\begin{definition}\label{def: pu expl}
	We say that the SDE \eqref{eq: Appendix SDE} satisfies \emph{pathwise uniqueness up to explosion}, if all solution processes up to explosion on the same driving system are indistinguishable. Moreover, we say the the SDE \eqref{eq: Appendix SDE} satisfies \emph{uniqueness in law up to explosion}, if all solution processes have the same law, seen as a probability measure on \((\mathbb{W}, \mathcal{W})\).
\end{definition}
The following theorem is due to Yamada-Watanabe. Its proof is similar as in the non-explosive case, see, for instance, Proposition 5.3.20 in \cite{KaraShre}.
\begin{proposition}\label{theo: YW}	
	Pathwise uniqueness up to explosion implies uniqueness in law up to explosion. 
\end{proposition}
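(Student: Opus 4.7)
The plan is to adapt the classical Yamada--Watanabe coupling (cf.\ Proposition 5.3.20 in \cite{KaraShre}) and handle the possibility of explosion by localization along $(\gamma_n)_{n \in \mathbb{N}}$. Let $((\Sigma^i,\mathcal{A}^i,\mathbf{A}^i,P^i);W^i;Y^i)$, $i=1,2$, be two weak solutions up to explosion of \eqref{eq: Appendix SDE} starting at $y$, and write $\mathcal{C}^d \triangleq C([0,\infty),\mathbb{R}^d)$ for the Wiener space carrying the coordinate process $\beta$ and the Wiener measure $\mathbb{P}_W$. Let $Q^i$ denote the joint law of $(W^i,Y^i)$ on $\mathcal{C}^d \times \mathbb{W}$; since $W^i$ is a Brownian motion, its first marginal is $\mathbb{P}_W$. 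It suffices to show that the second marginals of $Q^1$ and $Q^2$ agree.

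Since $\mathcal{C}^d$ and $\mathbb{W}$ are Polish, each $Q^i$ admits a regular disintegration $Q^i(dw,dy) = \mathbb{P}_W(dw)\,Q^i_w(dy)$. On the product space $\widetilde{\Omega} \triangleq \mathcal{C}^d \times \mathbb{W} \times \mathbb{W}$ with coordinate maps $(\beta,\xi_1,\xi_2)$ define
\[
R(dw,dy_1,dy_2) \triangleq \mathbb{P}_W(dw)\,Q^1_w(dy_1)\,Q^2_w(dy_2),
\]
and let $\widetilde{\mathbf{F}}$ be the right-continuous, $R$-completed filtration generated by $(\beta,\xi_1,\xi_2)$. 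Under $R$, the marginal law of $(\beta,\xi_i)$ equals $Q^i$, while $\xi_1$ and $\xi_2$ are conditionally independent given $\beta$. The two non-trivial tasks are: (i) $\beta$ is a $d$-dimensional $\widetilde{\mathbf{F}}$-Brownian motion under $R$; and (ii) for each $i=1,2$, the tuple $(\widetilde{\Omega},\widetilde{\mathcal{F}},\widetilde{\mathbf{F}},R;\beta;\xi_i)$ is again a weak solution up to explosion of \eqref{eq: Appendix SDE}. For (i), a characteristic-function computation using $\int Q^i_w(dy) = 1$ yields the required independent Gaussian increments relative to $\widetilde{\mathbf{F}}$. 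For (ii), the localization enters: on $\{t \leq \gamma_n(\xi_i)\}$ the integrand $\sigma(\cdot,\xi_i(\cdot))$ is bounded on $\overline{I_n}$, so the stopped stochastic integral $\int_0^{\cdot \wedge \gamma_n(\xi_i)} \sigma(s,\xi_i(s))\,d\beta_s$ can be approximated in probability by Riemann-type sums that are measurable functions of $(\beta,\xi_i)$. By the disintegration, such sums have the same law under $R$ and $Q^i$, so passing to the limit transports the defining integral equation from $Q^i$ to $R$; letting $n \to \infty$ with $\gamma_n(\xi_i) \nearrow \gamma_I(\xi_i)$ completes the verification.

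With (i) and (ii) in hand, $\xi_1$ and $\xi_2$ are solution processes up to explosion on the common driving system $((\widetilde{\Omega},\widetilde{\mathcal{F}},\widetilde{\mathbf{F}},R);\beta)$, so the assumed pathwise uniqueness up to explosion forces $\xi_1 = \xi_2$ to be indistinguishable in $\mathbb{W}$. Consequently, for every $A \in \mathcal{W}$,
\[
Q^1(\mathcal{C}^d \times A) = R(\xi_1 \in A) = R(\xi_2 \in A) = Q^2(\mathcal{C}^d \times A),
\]
which is uniqueness in law up to explosion. The main obstacle is item (ii): showing that the stochastic integral structure is preserved under the coupling measure $R$. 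This is the well-known technical heart of the Yamada--Watanabe argument and here additionally requires the localization along $\gamma_n$ to cope with the explosive setting, since without the stopping one cannot a priori bound the integrands uniformly to justify the Riemann-sum approximation.
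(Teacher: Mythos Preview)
Your approach is exactly what the paper itself indicates: the paper gives no proof and simply points to Proposition~5.3.20 in \cite{KaraShre}, saying the argument is ``similar as in the non-explosive case.'' Your coupling on $\mathcal C^d\times\mathbb W\times\mathbb W$ via regular conditional probabilities, followed by an appeal to pathwise uniqueness on the common driving system, is precisely that Yamada--Watanabe scheme, and your localization along $(\gamma_n)$ in step (ii) is the right adaptation to the explosive setting.

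One point, however, needs correction. Your justification of (i) is misleading: the identity $\int Q^i_w(dy)=1$ is automatic for any disintegration kernel and by itself only tells you that $\beta$ is a Brownian motion in its \emph{own} filtration, not in the enlarged filtration $\widetilde{\mathbf F}$. What is actually required is that for every $s\ge 0$ and $A\in\mathcal W_s$ the map $w\mapsto Q^i_w(A)$ is, up to $\mathbb P_W$-null sets, measurable with respect to $\mathcal B(C([0,s],\mathbb R^d))$. This is the content of Lemma~5.3.21 in \cite{KaraShre}, and it follows from the fact that $W^i$ is an $\mathbf A^i$-Brownian motion on the original space, so that $W^i_t-W^i_s$ is independent of $\sigma(W^i_u,Y^i_u:u\le s)$ under $P^i$. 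Only with this measurability in hand can the characteristic-function computation be run against test functions that also depend on $\xi_1|_{[0,s]}$ and $\xi_2|_{[0,s]}$, which is what the $\widetilde{\mathbf F}$-Brownian property demands. Once you plug this in, the rest of your sketch goes through.
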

As in the non-explosive setting, pathwise uniqueness and weak existence implies strong existence. The proof is similar to the proof of the non-explosive case, see, for instance, Section 5.3.D, pp. 308, in \cite{KaraShre}. 
The Wiener measure on \((C([0, \infty), \mathbb{R}^d), \mathcal{B}(C([0, \infty), \mathbb{R}^d)))\) is denoted by \(\mathscr{W}\).
\begin{theorem}\label{theo: YW 2}
	Suppose that the SDE \eqref{eq: Appendix SDE} satisfies pathwise uniqueness up to explosion and has a weak solution up to explosion. 
	Let \((\Sigma, \mathcal{A}, \mathbf{A}, P)\) be a filtered probability space which supports a \(d\)-dimensional Brownian motion \((W_t)_{t \in [0, \infty)}\). Denote by \(\overline{\mathcal{B}(C([0,t],\mathbb{R}^d))}\) the completion of \(\mathcal{B}(C([0,t],\mathbb{R}^d))\) w.r.t. \(\mathscr{W}\).
	Then there exists a Borel map \(h \colon C([0, \infty), \mathbb{R}^d))  \to \mathbb{W}\) such that for all \(t \in [0, \infty)\)
	\begin{align}
	h^{-1} (\mathcal{W}_t) \subseteq \overline{\mathcal{B}(C([0,t],\mathbb{R}^d))},
	\end{align}
	and the process
	\(
	h(W)
	\)
	is a solution process up to explosion to the SDE \eqref{eq: Appendix SDE}.
\end{theorem}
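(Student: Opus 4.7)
The plan is to adapt the classical Yamada--Watanabe construction (as in Karatzas--Shreve \cite{KaraShre}, Section 5.3.D) to the present setting with explosion. The main tools will be the regular disintegration of a joint law over Wiener space and the transfer of pathwise uniqueness through a doubling-of-the-second-coordinate argument. Throughout, the stopping sequence \(\gamma_n\) will play the role that localization plays in the non-explosive case.

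First, let \(((\Sigma_0, \mathcal{A}_0, \mathbf{A}_0, P_0); W_0; Y_0)\) be a weak solution up to explosion, which exists by hypothesis. Denote by \(\Pi\) the joint law of \((W_0, Y_0)\) on the Polish space \(C([0,\infty),\mathbb{R}^d) \times \mathbb{W}\). Since the first marginal is \(\mathscr{W}\), a regular conditional probability exists, i.e. a Borel kernel \(Q\colon C([0,\infty),\mathbb{R}^d) \times \mathcal{W} \to [0,1]\) with \(\Pi(\ud w, \ud \omega) = \mathscr{W}(\ud w)\, Q(w, \ud\omega)\). On the enlarged probability space \(\Sigma^* \triangleq C([0,\infty),\mathbb{R}^d) \times \mathbb{W} \times \mathbb{W}\) equipped with \(\mathscr{W}(\ud w)\,Q(w, \ud\omega_1)\,Q(w, \ud\omega_2)\) and an appropriately right-continuous, \(P^*\)-completed filtration, both coordinates \(\omega_1\) and \(\omega_2\) are solution processes up to explosion to the SDE \eqref{eq: Appendix SDE} driven by the same Brownian motion \(w\). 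Pathwise uniqueness up to explosion therefore gives \(\omega_1 = \omega_2\) \(P^*\)-a.s., which forces \(Q(w, \cdot)\) to be a Dirac mass \(\delta_{h(w)}\) for \(\mathscr{W}\)-a.e.\ \(w\). A measurable selection argument (or reading off \(h\) from the disintegration) produces a Borel map \(h \colon C([0,\infty),\mathbb{R}^d) \to \mathbb{W}\), modified on a \(\mathscr{W}\)-null set if necessary to make it defined everywhere.

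Second, the filtration property \(h^{-1}(\mathcal{W}_t) \subseteq \overline{\mathcal{B}(C([0,t],\mathbb{R}^d))}\) follows from the adaptedness of \(Y_0\) to the augmented Brownian filtration of \(W_0\) (up to explosion, arguing level by level on \(\{\gamma_n(Y_0) > t\}\)): the regular conditional law \(Q(w, \{\omega \colon \omega(\cdot \wedge t) \in B\})\) depends only on \(w|_{[0,t]}\) up to \(\mathscr{W}\)-null sets, hence so does \(h(w)(\cdot \wedge t)\). Third, on the given filtered space \((\Sigma, \mathcal{A}, \mathbf{A}, P)\), set \(Y \triangleq h(W)\). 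Then \((W, Y)\) has law \(\Pi\), so \(Y\) has paths in \(\mathbb{W}\), \(Y_0 = y\) \(P\)-a.s., and the stopped integrability conditions of Definition \ref{def: SP} are inherited from \((W_0, Y_0)\). The identity
\begin{align*}
Y_{t \wedge \gamma_n(Y)} = y + \int_0^{t \wedge \gamma_n(Y)} \mu(s, Y_s)\dd s + \int_0^{t \wedge \gamma_n(Y)} \sigma(s, Y_s) \dd W_s
\end{align*}
holds because the stopped stochastic integral against \(W\) is, up to a \(P\)-null set, a fixed Borel functional of \((W, Y)\), and this functional agrees \(\Pi\)-a.s.\ with \(Y_{t \wedge \gamma_n(Y)} - y - \int_0^{t \wedge \gamma_n(Y)}\mu(s, Y_s)\dd s\) on the reference space \((\Sigma_0, \mathcal{A}_0, P_0)\).

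The delicate step, and the only genuine departure from the non-explosive case, is making the disintegration argument compatible with the stopping times \(\gamma_n\): one has to check that pathwise uniqueness on \([0, \gamma_n]\) transfers to the product space with the enlarged filtration for each \(n\), and then let \(n \to \infty\) (using that \(\gamma_n \nearrow \gamma_I\) pointwise on \(\mathbb{W}\)). Once this localization is in place, the rest of the argument is the standard Yamada--Watanabe mechanism; no new ideas beyond \cite{KaraShre} are required.
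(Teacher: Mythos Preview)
Your proposal is correct and follows exactly the route the paper indicates: the paper does not give a proof but simply refers to Section~5.3.D of \cite{KaraShre}, and your sketch is precisely the Yamada--Watanabe disintegration/doubling argument from that reference, localized via the stopping times \(\gamma_n\). The only point worth tightening is the verification that the coordinate \(w\) remains a Brownian motion with respect to the enlarged, completed, right-continuous filtration on \(\Sigma^*\) (so that pathwise uniqueness genuinely applies there); this is handled in \cite{KaraShre} by checking independence of increments against the product \(\sigma\)-field, and carries over unchanged.
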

We also introduce a MP up to explosion.
\begin{definition}\label{def: MP expl}
	We call a probability measure \(P\) on \((\mathbb{W}, \mathcal{W})\) a \emph{solution up to explosion to the MP \((\mu, \sigma \sigma^*, y)\)} if for all \(n \in \mathbb{N}\) and \(f \in C^2(I)\) the process \(M^f_{\cdot \wedge \gamma_n, y}\), defined as in \eqref{Mf} with \(a\) replaced by \(\sigma \sigma^*\), is a local \((\mathbf{W}, P)\)-martingale.
\end{definition}
\begin{theorem}\label{theo: equivalence MP SDE up to expl}
	A weak solution up to explosion to the SDE \eqref{eq: Appendix SDE} exists if and only if the MP \((\mu, \sigma \sigma^*, y)\) has a solution up to explosion.
	Moreover, the law of the solution process is a solution to the MP \((\mu, \sigma \sigma^*, y)\) and uniqueness in law up to explosion holds for the SDE \eqref{eq: Appendix SDE} if and only if the MP \((\mu, \sigma \sigma^*, y)\) has at most one solution up to explosion.
\end{theorem}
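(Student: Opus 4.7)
The plan is to establish both directions of the equivalence by adapting the classical Stroock--Varadhan correspondence between martingale problems and weak solutions to the present setting with an explosion time $\gamma_I$. The forward implication (weak solution $\Rightarrow$ MP solution) is the easier half. Given $((\Sigma,\mathcal{A},\mathbf{A},P); W; Y)$ a weak solution up to explosion, It\^o's formula applied to $f(Y_{\cdot \wedge \gamma_n(Y)})$ for $f \in C^2(I)$ yields, for each $n \in \mathbb{N}$,
\begin{align*}
f(Y_{t\wedge \gamma_n(Y)}) - f(y) - \int_0^{t \wedge \gamma_n(Y)} \mathcal{K}^{\mu,\sigma\sigma^*}_s f(Y_s)\,ds = \int_0^{t\wedge \gamma_n(Y)} \langle \nabla f(Y_s), \sigma(s, Y_s) dW_s\rangle,
\end{align*}
whose right-hand side is a local $(\mathbf{A}, P)$-martingale. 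Pushing $P$ forward by $Y$ to a measure $\widetilde P$ on $(\mathbb{W}, \mathcal{W})$ and using Galmarino's test to identify $\gamma_n(Y)$ with the pullback of $\gamma_n$ by $Y$, I would conclude that $\widetilde P$ solves the MP $(\mu, \sigma\sigma^*, y)$ up to explosion in the sense of Definition \ref{def: MP expl}, and in particular that the law of $Y$ is such a solution.

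For the converse, let $P$ solve the MP on $(\mathbb{W}, \mathcal{W})$ up to explosion. Applying the MP definition to coordinate projections smoothly cut off outside $I_n$ (so that all expressions are well-defined on the compactified state space) and invoking the localized analogue of Proposition \ref{prop: eq MP}, the stopped coordinate process $(X_{t \wedge \gamma_n})_{t \geq 0}$ is under $P$ a continuous $(\mathbf{W}, P)$-semimartingale with canonical decomposition
\begin{align*}
X_{t \wedge \gamma_n} = y + \int_0^{t \wedge \gamma_n} \mu(s, X_s)\,ds + M^{(n)}_t,
\end{align*}
where $M^{(n)}$ is a continuous local martingale with quadratic variation $\int_0^{\cdot \wedge \gamma_n}(\sigma\sigma^*)(s, X_s)\,ds$. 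Uniqueness of the semimartingale decomposition lets me paste the $M^{(n)}$ into a single continuous local martingale $M$ on the stochastic interval $[0,\gamma_I)$. The driving Brownian motion is then produced by the standard enlargement construction (cf.\ Proposition 5.4.6 in \cite{KaraShre}): on the product space $(\mathbb{W} \times C([0,\infty),\mathbb{R}^d), \mathcal{W}\otimes \mathcal{B}, P\otimes \mathscr{W})$ with $\widetilde W$ the coordinate Brownian motion of the second factor, I would set
\begin{align*}
W_t = \int_0^t \sigma^+(s, X_s)\,dM_s + \int_0^t \bigl(\mathrm{Id} - \sigma^+(s,X_s)\sigma(s,X_s)\bigr) d\widetilde W_s,
\end{align*}
with $\sigma^+$ a Borel measurable pseudo-inverse of $\sigma$. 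L\'evy's characterisation then identifies $W$ as a $d$-dimensional Brownian motion, and a direct calculation shows that on each $[0, \gamma_n(X)]$ the pair $(W, X)$ satisfies the SDE, which yields a weak solution up to explosion.

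The uniqueness assertion is an immediate by-product: the two constructions above exhibit a bijection between solutions to the MP on $(\mathbb{W}, \mathcal{W})$ and laws of solution processes of the SDE viewed as probability measures on $\mathbb{W}$, so uniqueness in law up to explosion for the SDE is equivalent to there being at most one solution to the MP. The main obstacle is the backward construction, specifically the treatment of potentially degenerate or non-square $\sigma$, where the enlargement by an auxiliary Brownian motion $\widetilde W$ and the measurable choice of $\sigma^+$ are indispensable; a secondary technical point is the bookkeeping required to glue the objects obtained on each stochastic interval $[0, \gamma_n]$ into a consistent global object, respecting the convention that paths in $\mathbb{W}$ are frozen at $\Delta$ past $\gamma_I$.
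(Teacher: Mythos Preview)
Your approach coincides with the paper's: both reduce the forward implication to It\^o's formula and handle the converse via the Moore--Penrose pseudo\-inverse of $\sigma$ together with an enlargement by an independent Brownian motion. The paper, too, writes
\[
B_t \;=\; \int_0^t \sigma^{-1}(s,X_s)\,\dd X^c_s \;+\; \int_0^t \bigl(\textup{Id} - \sigma^{-1}(s,X_s)\sigma(s,X_s)\bigr)\,\dd W_s
\]
on each stochastic interval $[0,\gamma_n]$, exactly your $W_t$.

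There is, however, one genuine gap in your sketch, and it is precisely the step the paper singles out as the ``additional extension argument'' specific to the explosive case. Your process $W$ is only defined on the stochastic interval $[0,\gamma_I)$: the integrator $M$ lives only there, so the first integral has no meaning beyond $\gamma_I$, and L\'evy's characterisation merely tells you that $W$ is a Brownian motion \emph{stopped at} $\gamma_I$, i.e.\ a continuous local martingale with $\lle W\rre_t = (t\wedge \gamma_I)\,\textup{Id}$. This is not yet a Brownian motion on $[0,\infty)$, which is what Definition~\ref{def: SP} requires of the driving system. The ``bookkeeping'' you flag concerns the solution process $X$, but $X$ is already the coordinate process on $\mathbb{W}$ and needs no gluing; the object that genuinely needs extending is $W$.

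The paper closes this gap by first setting $B_t \triangleq \liminf_{r\nearrow \gamma_I} B_r$ for $t\ge \gamma_I$, then invoking a result of Jacod to see that $B$ so defined is a local martingale with $\lle B\rre_t = (t\wedge\gamma_I)\,\textup{Id}$ on all of $[0,\infty)$, and finally applying Knight's theorem (a multidimensional Dambis--Dubins--Schwarz result, Theorems~V.1.9--V.1.10 in \cite{RY}) on a possibly further enlarged space to produce a genuine $d$-dimensional Brownian motion $\widehat B$ agreeing with $B$ on $[0,\gamma_I)$. Only after this step can one verify that $(X,\widehat B)$ satisfies the SDE on each $[0,\gamma_n]$. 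You should add this extension step; without it the backward implication is incomplete.
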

\begin{proof}
	The proof is similar as in the non-explosive case, see, for instance, the Corollaries 5.4.8 and 5.4.9 in \cite{KaraShre}. However, for the implication \(\Longleftarrow\) an additional extension argument is necessary. We sketch it. In the following the underlying filtered probability space is \((\mathbb{W}, \mathcal{W}^P, \mathbf{W}^P, P)\), where \(P\) solves the MP \((\mu, \sigma \sigma^*, y)\) up to explosion, \(\mathcal{W}^P\) is the \(P\)-completion of \(\mathcal{W}\) and \(\mathbf{W}^P\) is the \(P\)-augmentation of \(\mathbf{W}\).
	For all \(n \in \mathbb{N}\) the process
	\begin{align}\label{eq: Xc decomposition}
	X^c_{t \wedge \gamma_n} \triangleq X_{t \wedge \gamma_n} - y-  \int_0^{t \wedge \gamma_n} \mu(s, X_s) \dd s,\quad t \in [0, \infty),
	\end{align}
	is a continuous local martingale with quadratic variation process \begin{align}\left(\int_0^{t \wedge \gamma_n} \sigma(s, X_s)\sigma^* (s, X_s)\dd s\right)_{t \in [0, \infty)},\end{align}
	see Lemma II.67.10 in \cite{RW1}.
	Let \(\sigma^{-1}\) be the generalized inverse (sometimes called Moore-Penrose inverse) of \(\sigma\) and set \(\theta \triangleq \textup{Id} - \sigma^{-1} \sigma\). 
	Further, let \(W = (W_t)_{t \in [0, \infty)}\) be a \(d\)-dimensional Brownian motion defined on an extension of \((\mathbb{W}, \mathcal{W}^P, \mathbf{W}^P, P)\), which is independent of \(X\). Set 
	\begin{equation}\begin{split}
	B_t \triangleq \begin{cases} \int_0^t \sigma^{-1} (s, X_s) \dd X^c_s + \int_0^t \theta (s, X_s) \dd W_s,& t \leq \gamma_n,\\
	\liminf_{r \nearrow \gamma_I} \left(\int_0^r \sigma^{-1} (s, X_s) \dd X^c_s + \int_0^r \theta (s, X_s) \dd W_s\right),&t \geq \gamma_I.
	\end{cases}
	\end{split}
	\end{equation}
	Using the independence of \(X\) and \(W\) and the facts that \((\sigma^{-1} \sigma)^* = \sigma^{-1} \sigma\) and \(\sigma \sigma^{-1} \sigma = \sigma\), we obtain for all \(n \in \mathbb{N}\) and \(t \in [0, \infty)\)
	\begin{align}
	\lle B \rre_{t \wedge \gamma_n} = t \wedge \gamma_n \ \textup{Id}.
	\end{align}
	Thus, by Corollary 5.10 in \cite{J79}, \(B\) is a local martingale with \(\lle B\rre_t = t \wedge \gamma_I\ \textup{Id}\) for \(t \in [0, \infty)\).
	Moreover, by a theorem of Knight, see the Theorems V.1.9 and V.1.10 in \cite{RY}, possibly on a further extension of \((\mathbb{W}, \mathcal{W}^P, \mathbf{W}^P, P)\), there exists a \(d\)-dimensional Brownian motion \((\widehat{B}_t)_{t \in [0, \infty)}\) such that \(\widehat{B}_t = B_t\) for \(t < \gamma_I\). Using the independence of \(X\) and \(W\) and the fact that \(\sigma \sigma^{-1} \sigma = \sigma\), we obtain for all \(n \in \mathbb{N}\) and \(t \in [0, \infty)\)
	\begin{equation}\begin{split}
	\left\la\hspace{-0.18cm}\left\la X^c - \int_0^\cdot \sigma (s, X_s) \dd \widehat{B}_s \right\ra\hspace{-0.18cm}\right\ra_{t \wedge \gamma_n} = 0.
	\end{split} 
	\end{equation}
	In view of \eqref{eq: Xc decomposition}, we found a weak solution up to explosion to the SDE \eqref{eq: Appendix SDE}. 
\end{proof}
In view of the previous result, the following results hold for both MPs and SDEs up to explosion. We only state them for MPs.

The following proposition can be proven similar to the non-explosive case, see Exercise 6.7.4 in \cite{SV} and Theorem III.2.40 in \cite{JS}.
\begin{proposition}\label{prop: loc uni}
	Assume that \(\mu\) and \(\sigma\) are independent of time and locally bounded, and that the MP \((\mu, \sigma \sigma^*, x)\) has a unique solution up to explosion for all \(x \in I\). Furthermore, let \(\tau\) be a \((\mathcal{W}_t)_{t \in [0, \infty)}\)-stopping time. If \(P\) solves the MP \((\mu, \sigma \sigma^*, y)\) up to explosion and \(Q\) solves the stopped MP \((\mu, \sigma \sigma^*, y; \tau)\), defined as in Definition \ref{def: stopped MP} with \((\Omega, \mathcal{F})\) replaced by \((\mathbb{W}, \mathcal{W})\) and \([0, T]\) replaced by \([0, \infty)\), then \(P = Q\) on \(\sigma (X_{t \wedge \tau}, t \in [0, \infty))\).
\end{proposition}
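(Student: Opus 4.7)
The plan is to combine an optional-stopping observation with a pasting argument that invokes the hypothesized global uniqueness of the martingale problem. Time-homogeneity of $\mu$ and $\sigma$ will play an essential role, since it makes a shifted-and-restarted martingale problem look like the original one.

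First, I would show that the restriction $P|_{\mathcal{G}}$ of $P$ to the sub-$\sigma$-field $\mathcal{G}\triangleq \sigma(X_{t\wedge \tau},\, t\ge 0)$ solves the stopped martingale problem $(\mu,\sigma\sigma^*,y;\tau)$. This is essentially optional stopping: for every $f\in C^2(I)$ and every $n\in\mathbb{N}$, the process $M^f_{\cdot\wedge \gamma_n,y}$ is a local $(\mathbf{W},P)$-martingale by hypothesis, so stopping additionally at $\tau$ preserves the local martingale property. By local boundedness of $\mu$ and $\sigma\sigma^*$ on $\overline{I_n}$, $M^f_{\cdot\wedge \tau\wedge\gamma_n,y}$ is even a true martingale. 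Consequently both $P|_{\mathcal{G}}$ and $Q|_{\mathcal{G}}$ are solutions of the stopped MP starting at $y$, and the task reduces to showing that the stopped MP has a unique solution on $\mathcal{G}$.

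Second, I would prove this reduced uniqueness by constructing, for any solution $R$ of the stopped MP on $\mathcal{G}$, an extension $\widetilde R$ on $(\mathbb{W},\mathcal{W})$ that solves the full MP $(\mu,\sigma\sigma^*,y)$. The construction is by pasting: on $\{\tau<\infty,\ X_\tau\in I\}$ one glues, via a regular conditional probability, the behaviour of $R$ on $[0,\tau]$ with the unique solution $P_{X_\tau}$ of the MP starting from the random endpoint $X_\tau$, shifted in time by $\tau$; on the complement the pasted measure coincides with $R$. The crucial measurability of $x\mapsto P_x$, as a map into the probability measures on $(\mathbb{W},\mathcal{W})$, follows from uniqueness via a standard selection argument (cf.\ Theorem 6.2.2 in \cite{SV}). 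That $\widetilde R$ solves the MP then uses time-homogeneity in an essential way: after $\tau$ one sees a $P_{X_\tau}$-distributed piece, so the generator $\mathcal{K}^{\mu,\sigma\sigma^*}$ still governs the dynamics, and one splices the local martingales $M^f_{\cdot\wedge\tau\wedge\gamma_n,y}$ (from $R$) and $M^f_{\cdot,X_\tau}$ (from $P_{X_\tau}$) across $\tau$ to obtain a local martingale $M^f_{\cdot\wedge\gamma_n,y}$ under $\widetilde R$.

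Now, given two solutions $R_1,R_2$ of the stopped MP, the extensions $\widetilde R_1,\widetilde R_2$ both solve the MP $(\mu,\sigma\sigma^*,y)$, which is unique by hypothesis, so $\widetilde R_1=\widetilde R_2$. Since $\widetilde R_i|_{\mathcal{G}}=R_i$ by construction, we obtain $R_1=R_2$, and applying this to $R_1=P|_{\mathcal{G}}$, $R_2=Q|_{\mathcal{G}}$ yields the proposition. The main obstacle will be executing the pasting rigorously: one must carefully handle explosion (how $\Delta$ enters the concatenation and how $\gamma_I$ relates to $\tau$ on the various sub-events), verify measurability of $x\mapsto P_x$ as a kernel on $(\mathbb{W},\mathcal{W})$, and check that the glued process indeed satisfies the local martingale property up to each $\gamma_n$, which amounts to patching two local martingales across the stopping time $\tau$ and using time-homogeneity to identify the generator on both sides.
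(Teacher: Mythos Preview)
Your proposal is correct and follows precisely the pasting argument that the paper's cited references (Exercise 6.7.4 in \cite{SV} and Theorem III.2.40 in \cite{JS}) use; the paper itself gives no proof but simply defers to these sources. Your identification of the key technical points---measurability of $x\mapsto P_x$ via well-posedness, the role of time-homogeneity in restarting the generator after $\tau$, and the careful handling of explosion when concatenating paths in $\mathbb{W}$---matches exactly what the Stroock--Varadhan machinery requires in the explosive setting.
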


Next, we recall Feller's test for explosion. 
\begin{theorem}\label{theo: Feller test}
	Suppose that \(I = (l, r)\) for \(- \infty \leq l < r \leq + \infty\) and let \(\mu\) and \(\sigma\) be independent of time and locally bounded such that \(\sigma\) is locally bounded away from zero. For \(c \in I\), set 
	\begin{align}
	v_c (x) \triangleq \int_c^x \exp \left( - 2\int_c^z \frac{\mu(u)}{\sigma^2(u)}\dd u \right) \int_c^z \frac{\exp \big( 2\int_c^y \frac{\mu(u)}{\sigma^2(u)} \dd u \big)}{\sigma^2(y)}\dd y \dd z.
	\end{align}
	For \(\omega \in \mathbb{W}\), set 
	\begin{equation}
	\begin{split}
	\tau_l (\omega) &\triangleq \lim_{x \searrow l} \inf(t \in [0,\infty)\colon \omega(t) = x),\\ \tau_r (\omega) &\triangleq \lim_{x \nearrow r}  \inf(t \in [0,\infty)\colon \omega(t) = x).
	\end{split}
	\end{equation}
	Then, for all initial values \(y \in I\) the MP \((\mu, \sigma^2, y)\) has a unique solution \(P\) up to explosion and \(P\) is explosive if and only if either \(\lim_{x \nearrow l} v_c(x) < \infty\) or \(\lim_{x \searrow l} v_c(x) < \infty\).
	Moreover, if \(\sigma\) is continuous the following holds:
	\begin{enumerate}
		\item[\textup{(i)}] \(P(\tau_l< \infty) > 0\) if and only if \(\lim_{x \searrow l} v_c(x) < \infty\).
		\item[\textup{(ii)}] \(P (\tau_r < \infty) > 0\) if and only if \(\lim_{x \nearrow r} v_c(x) < \infty\).
	\end{enumerate}
\end{theorem}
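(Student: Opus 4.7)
\emph{Proof plan.} The argument splits into (a) existence and uniqueness of the solution up to explosion, and (b) the Feller dichotomy. For (a), I would use localization: choose sub-intervals $[l_n,r_n]\subset I$ with $l_n\searrow l$ and $r_n\nearrow r$, and extend $\mu$ and $\sigma$ outside $(l_n,r_n)$ to bounded Borel functions $\mu^{(n)},\sigma^{(n)}$ with $\sigma^{(n)}\geq\epsilon_n>0$ everywhere. For each $n$ the MP $(\mu^{(n)},(\sigma^{(n)})^2,y)$ has a unique non-explosive solution $P_n$ by classical one-dimensional theory---either via the Engelbert--Schmidt representation of $P_n$ as a time-changed Brownian motion composed with the scale function, or directly by Stroock--Varadhan. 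Proposition \ref{prop: loc uni} then ensures $P_n=P_m$ on $\sigma(X_{t\wedge\gamma_n},t\geq 0)$ whenever $m\geq n$, and a Tulcea-type extension (as in Remark \ref{rem: infinite time horizon}) patches them into a unique $P$ solving the original MP up to explosion.

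For (b), the key observation is that direct differentiation of the definition of $v_c$ yields the ODE
\begin{align*}
\tfrac{1}{2}\sigma^2(x)v_c''(x)+\mu(x)v_c'(x) = \tfrac{1}{2},\qquad x\in I.
\end{align*}
Let $\xi_n\triangleq\inf\{t\geq 0\colon X_t\notin[l_n,r_n]\}$. Since $v_c$ is continuous on the compact interval $[l_n,r_n]$, It\^o's formula (via Proposition \ref{prop: eq MP}) shows that $(v_c(X_{t\wedge\xi_n})-\tfrac{1}{2}(t\wedge\xi_n))_{t\geq 0}$ is a bounded $(\mathbf{W},P)$-martingale starting at $v_c(y)$. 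Optional stopping gives the fundamental identity
\begin{align*}
E^P[v_c(X_{t\wedge\xi_n})] = v_c(y)+\tfrac{1}{2}E^P[t\wedge\xi_n].
\end{align*}

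From this identity, the non-explosive direction is immediate: if $v_c(l+)=v_c(r-)=\infty$, set $m_n:=\min(v_c(l_n),v_c(r_n))\to\infty$; then $m_nP(\xi_n\leq t)\leq v_c(y)+t/2$, so $P(\xi_n\leq t)\to 0$ and $\gamma_I=\infty$ $P$-a.s. For the converse, if, say, $v_c(r-)<\infty$, I would combine the same identity with the scale function $p(x)\triangleq\int_c^x\exp\bigl(-2\int_c^z\mu(u)/\sigma^2(u)\,du\bigr)\,dz$ (for which $p(X_{t\wedge\xi_n})$ is a bounded local $P$-martingale) to control both the probability of $X$ exiting $[l_n,r_n]$ through $r_n$ rather than $l_n$, and its expected exit time; then letting $n\to\infty$ shows $P(\gamma_I<\infty)>0$. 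Parts (i) and (ii) are refinements of this same computation: with $\sigma$ continuous, the scale function admits clean boundary limits at $l$ and $r$, so splitting the exit identities at $l_n$ versus $r_n$ separately identifies \emph{which} endpoint is reached with positive probability.

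The main obstacle will be the explosive direction of the dichotomy, since the bound $v_c(r-)<\infty$ only tells us the process \emph{can} exit on the right, and one must rule out the possibility that in the limit $X$ is absorbed at the left before reaching $r$. This is precisely where the scale function $p$ enters: it quantifies the probability of each exit, and the assumed continuity of $\sigma$ (for (i) and (ii)) ensures that these probabilities admit boundary limits as $l_n\searrow l$ and $r_n\nearrow r$.
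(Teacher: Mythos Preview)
Your sketch is essentially correct and is in fact the substance of the references the paper cites; the paper's own ``proof'' of this appendix theorem consists only of pointers (existence via arguments as in Theorem 5.5.15 of Karatzas--Shreve, the dichotomy via their Theorem 5.5.29, and parts (i)--(ii) via Theorem 5.1.5 in Pinsky). Your localization-and-patching for existence is exactly the spirit of the Karatzas--Shreve construction, and your use of the ODE \(\tfrac{1}{2}\sigma^2 v_c'' + \mu v_c' = \tfrac{1}{2}\), It\^o's formula, optional stopping, and the scale function \(p\) for the dichotomy and the boundary refinements is the standard route found in those references. So there is no genuine disagreement in method---you have simply unpacked what the paper chose to outsource.

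One small technical caveat worth tightening: under the stated hypotheses \(\mu,\sigma\) are only locally bounded Borel, so \(v_c\) is in general only \(C^1\) with absolutely continuous derivative, and the ODE holds almost everywhere rather than pointwise. You therefore cannot invoke the \(C^2\)-It\^o formula (or Proposition \ref{prop: eq MP}) verbatim; you need the generalized It\^o formula for functions with absolutely continuous first derivative (as in Karatzas--Shreve, Problem 3.7.3 and its use in Section 5.5). This is routine, but your write-up should flag it. Similarly, in your localization argument for existence, the appeal to Proposition \ref{prop: loc uni} is slightly circular as stated (that proposition presupposes uniqueness for the full MP); what you actually need---and what Karatzas--Shreve do---is to use uniqueness for the bounded-coefficient problems \(P_n\) directly to see consistency on \(\mathcal{F}_{\gamma_n}\), then extend.
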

\begin{proof}
	First, similar arguments as used in the proof of Theorem 5.5.15 in \cite{KaraShre} yield the existence of \(P\). Now, by the Feller's test for explosion, see Theorem 5.5.29 in \cite{KaraShre}, \(P\) is explosive, i.e. \(P(\gamma_I< \infty) > 0\), if and only if \(\lim_{x \searrow l} v_c (x) < \infty\) or \(\lim_{x \nearrow r} v_c(x) < \infty\). 
	For (i) and (ii) see the proof of Theorem 5.1.5 in \cite{pinsky1995positive}.
\end{proof}
\begin{remark}
	By Problem 5.5.28 in \cite{KaraShre}, the conditions \(\lim_{x \searrow l} v_c(x) < \infty\) and \(\lim_{x \nearrow r} v_c(x) < \infty\) are independent of the choice of \(c \in I\).
\end{remark}
Finally, we observe that explosion happens arbitrarily fast with positive probability. For a proof we refer to Theorem 4.8 in \cite{Karatzas2016}.
\begin{proposition}\label{prop: fast explosion}
	Suppose that \(I = (l, r)\) for \(- \infty \leq l < r \leq + \infty\) and let \(\mu\) and \(\sigma\) be independent of time and locally bounded such that \(\sigma\) is locally bounded away from zero. If \(P\) is a solution up to explosion to the MP \((\mu, \sigma^2, y)\), then
	\(P (\gamma_I < \infty) > 0\) implies \(P (\gamma_I < \epsilon) > 0\) for all \(\epsilon > 0\).
\end{proposition}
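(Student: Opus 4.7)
My plan is to argue by contradiction via the Markov property combined with the nondegeneracy of $\sigma$. By Theorem \ref{theo: Feller test}, for every $x \in I$ the MP $(\mu, \sigma^2, x)$ has a unique solution $P_x$ up to explosion, so $(P_x)_{x \in I}$ forms a strong Markov family and the given $P$ coincides with $P_y$. Suppose, toward contradiction, that $P_y(\gamma_I \leq \epsilon_0) = 0$ for some $\epsilon_0 > 0$, and put $h \triangleq \epsilon_0/2$.

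First I would show the existence of some $x_0 \in I$ with $P_{x_0}(\gamma_I \leq h) > 0$. If instead $P_x(\gamma_I > h) = 1$ for every $x \in I$, then applying the Markov property at the deterministic times $h, 2h, \ldots$ yields
\begin{align*}
P_y(\gamma_I > nh) = E_y\big[\1_{\{\gamma_I > (n-1)h\}}\, P_{X_{(n-1)h}}(\gamma_I > h)\big] = P_y(\gamma_I > (n-1)h),
\end{align*}
so by induction $P_y(\gamma_I > nh) = 1$ for every $n \in \N$, giving $P_y(\gamma_I = \infty) = 1$ in contradiction to the hypothesis $P_y(\gamma_I < \infty) > 0$.

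Next I would transfer the fast explosion from $x_0$ back to $y$. Writing $\varphi(x) \triangleq P_x(\gamma_I \leq h)$, I would establish that $\varphi$ is bounded below by some $c > 0$ on an open neighborhood $U \subseteq I$ of $x_0$, and that $P_y(X_h \in U, \gamma_I > h) > 0$. Both statements are consequences of the nondegeneracy of $\sigma$: after removing the drift by a local Girsanov change of measure and passing to natural scale, $X$ becomes a time change of Brownian motion, so the killed transition kernel $P_x(X_h \in \cdot\,, \gamma_I > h)$ charges every nonempty open subset of $I$ and one obtains a strong-Feller-type positivity of $\varphi$ near $x_0$. Granting these, the Markov property at time $h$ gives
\begin{align*}
P_y(\gamma_I \leq \epsilon_0) \geq E_y\big[\1_{\{\gamma_I > h\}}\, \varphi(X_h)\big] \geq c\cdot P_y(X_h \in U, \gamma_I > h) > 0,
\end{align*}
contradicting $P_y(\gamma_I \leq \epsilon_0) = 0$ and closing the argument.

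The main technical obstacle is the regularity/positivity of $\varphi$ on a neighborhood of a positivity point, together with the support statement for the killed kernel. Both are classical for one-dimensional nondegenerate diffusions with locally bounded coefficients and can be reduced, via the Engelbert--Schmidt time-change representation of $X$ in natural scale as a time-changed Brownian motion, to well-known properties of Brownian boundary-hitting probabilities; the remainder of the proof is standard strong Markov bookkeeping.
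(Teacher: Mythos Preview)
The paper does not give its own proof here; it simply refers to Theorem~4.8 of \cite{Karatzas2016}. Your argument is therefore an independent one, and its architecture---contradiction via the Markov property, first producing a point $x_0$ from which explosion is fast and then transferring this back to $y$ using the nondegeneracy of $\sigma$---is correct. One streamlining is worth recording. In the transfer step you invoke both a lower bound for $\varphi$ on a neighbourhood $U$ of $x_0$ and positivity of the killed kernel $P_y(X_h \in U,\ \gamma_I > h)$. The first of these is slightly delicate, since $\varphi$ is a priori only \emph{upper} semicontinuous (it is a decreasing limit of the continuous maps $x \mapsto P_x(\gamma_n \le h)$), so the ``strong-Feller-type positivity'' you appeal to needs justification. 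Both ingredients can be replaced by a single support statement if you apply the strong Markov property at the hitting time $\tau_{x_0}$ rather than at the deterministic time $h$: under your contradiction hypothesis $P_y(\gamma_I > h) = 1$, and $P_y(\tau_{x_0} \le h) > 0$ by the very Engelbert--Schmidt time-change representation you already cite, so
\[
P_y(\gamma_I \le \epsilon_0) \;\ge\; E_y\big[\1_{\{\tau_{x_0} \le h\}}\, P_{x_0}(\gamma_I \le h)\big] \;=\; P_y(\tau_{x_0} \le h)\cdot P_{x_0}(\gamma_I \le h) \;>\; 0,
\]
which closes the contradiction without any regularity of $\varphi$.
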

\begin{remark}\label{rem: explosion}
	In the case where \(I \triangleq \mathbb{R}\), \(\mu\) and \(\sigma\) are independent of time and locally bounded such that \(\sigma\) is locally bounded away from zero, the Feller test also implies non-existence of a solution to the MP \((\mu, \sigma^2, y)\) as defined in Definition \ref{def: MP} with \(d = 1\). If either \(\lim_{x \searrow - \infty} v_c(x) < \infty\) or \(\lim_{x \nearrow + \infty} v_c(x) < \infty\), then, due to Theorem \ref{theo: Feller test}, the MP \((\mu, \sigma^2, y)\) has an explosive solution \(Q\). If \(P\) solves the MP \((\mu, \sigma^2, y)\) on the finite time interval \([0, T]\), then, extended in the obvious manner to \((\mathbb{W}, \mathcal{W})\), \(P\) also solves the stopped MP \((\mu, \sigma^2, y; T)\), defined as in Definition \ref{def: stopped MP} with \((\Omega, \mathcal{F})\) replaced by \((\mathbb{W}, \mathcal{W})\) and \([0, T]\) replaced by \([0, \infty)\). But then, due to the Propositions \ref{prop: loc uni} and \ref{prop: fast explosion} and Theorem \ref{theo: Feller test}, \(P(\gamma_\mathbb{R}  < T) = Q(\gamma_\mathbb{R} < T) > 0\). This contradiction yields that the MP \((\mu, \sigma^2, y)\) has no solution on the finite time interval \([0, T]\).
\end{remark}

\section{Time-Changed Stochastic Integrals}\label{Appendix: Time Change}
Take a filtered probability space \((\Sigma, \mathcal{A}, \mathbf{A}, P)\) with right-continuous filtration \(\mathbf{A} =(\mathcal{A}_t)_{t \in [0, T]}\).
Let \(C = (C_t)_{t \in [0, T]}\) be a continuous strictly increasing \(\mathbf{A}\)-adapted process with \(C_0 = 0\) and \(C_t \geq t\) for all \(t \in [0, T]\).
For each \(t \in [0, T]\) we set 
\begin{align}
\tau_t \triangleq \inf(s \in [0, T] \colon C_s \geq t),
\end{align}
which is an \(\mathbf{A}\)-stopping time such that \(\tau_t \leq t\). %In particular, \(\tau = (\tau_t)_{t \in [0, T]}\) is the left continuous inverse of \(C\). 
Since \(C\) is continuous and strictly increasing, \(\tau = (\tau_t)_{t \in [0, T]}\) is continuous and strictly increasing.
We can set \(\mathbf{A}_\tau \triangleq (\mathcal{A}_{\tau_t})_{t \in [0, T]}\), which is a right-continuous filtration on \((\Sigma, \mathcal{A})\). If \(Y = (Y_t)_{t \in [0, T]}\) is an \(\mathbf{A}\)-adapted process, then \((Y_{\tau_t})_{t \in [0, T]}\) is an \(\mathbf{A}_\tau\)-adapted process. For a proof of the next proposition see Proposition V.1.4 in \cite{RY}.
\begin{proposition}
	Let \(Y = (Y_t)_{t \in [0, T]}\) be an \(\mathbf{A}\)-adapted continuous process of finite variation and \(H = (H_t)_{t \in [0, T]}\) be an \(\mathbf{A}\)-progressively measurable process. Then,
	\begin{align}
	\int_0^{\tau_t} H_s\dd Y_s = \int_0^t H_{\tau_s} \dd Y_{\tau_s}.
	\end{align}
\end{proposition}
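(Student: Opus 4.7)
The plan is to prove the identity $\omega$-wise, treating both sides as ordinary Lebesgue-Stieltjes integrals along the continuous finite-variation paths $s \mapsto Y_s(\omega)$ and $s \mapsto Y_{\tau_s(\omega)}(\omega)$. Fix $\omega$ and suppress it in the notation. Let $\mu_Y$ denote the signed Borel measure on $[0, T]$ determined by $\mu_Y((a, b]) = Y_b - Y_a$, and let $\mu_{Y \circ \tau}$ denote the analogous signed measure determined by $s \mapsto Y_{\tau_s}$; the latter is well-defined because $\tau$ is continuous and strictly increasing, so $Y \circ \tau$ inherits continuity and finite variation from $Y$.

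The core ingredient is the pathwise change-of-variables identity
\[
\int_0^t f(\tau_s) \dd Y_{\tau_s} = \int_0^{\tau_t} f(u) \dd Y_u, \qquad (\star)
\]
which I would establish for every bounded Borel function $f \colon [0, T] \to \mathbb{R}$ and every $t \in [0, T]$ by a monotone class argument. For the base case $f = \indik_{(a, b]}$ with $0 \leq a < b \leq T$, the key observation is that $C$ serves as the continuous strictly increasing inverse of $\tau$ on $\tau([0, T])$, so that $\{s \in [0, t] : \tau_s \in (a, b]\} = (C_a \wedge t, C_b \wedge t]$; this makes the left-hand side of $(\star)$ equal to $Y_{\tau_{C_b \wedge t}} - Y_{\tau_{C_a \wedge t}} = Y_{b \wedge \tau_t} - Y_{a \wedge \tau_t}$, which coincides with the right-hand side $\mu_Y((a, b] \cap (0, \tau_t])$. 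Linearity extends $(\star)$ to simple functions, monotone convergence applied separately to the positive and negative parts extends it to all bounded Borel $f$, and a standard truncation argument handles the unbounded, $\mu_Y$-integrable case.

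Applying $(\star)$ with $f = H_{\cdot}(\omega)$ then yields the claimed equality. The progressive measurability hypothesis on $H$ enters only to guarantee that $(s, \omega) \mapsto H_s(\omega)$ and $(s, \omega) \mapsto H_{\tau_s(\omega)}(\omega)$ are jointly measurable, ensuring that both sides are well-defined $\omega$-wise and define processes adapted to $\mathbf{A}$ and $\mathbf{A}_\tau$ respectively. The main source of fiddliness, rather than a deep obstacle, will be the bookkeeping around the endpoints of the change of variables: the degenerate case $a = 0$ (where $C_0 = 0 = \tau_0$), the possibility that $C_b$ exceeds $t$ or $T$, and the fact that the inverse $C\vert_{[0, \tau_T]}$ of $\tau$ is only defined on the range of $\tau$. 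Once $(\star)$ is checked cleanly on half-open intervals, the remainder of the argument is routine.
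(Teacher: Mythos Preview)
Your proposal is correct. Note, however, that the paper does not supply its own proof of this proposition: it simply refers the reader to Proposition V.1.4 in Revuz--Yor. Your pathwise Lebesgue--Stieltjes argument via a monotone class (indicators $\to$ simple functions $\to$ bounded Borel $\to$ integrable) is precisely the standard route taken there, so there is nothing to compare beyond saying that you have reconstructed the classical proof.
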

For a proof of the following proposition see Proposition V.1.5 in \cite{RY}.
\begin{proposition}\label{prop: TC Mg}
	If \(Y = (Y_t)_{t \in [0, T]}\) is an \(\mathbb{R}^d\)-valued local \((\mathbf{A}, P)\)-martingale, then \(Y_\tau= (Y_{\tau_t})_{t \in [0, T]}\) is a local \((\mathbf{A}_\tau, P)\)-martingale. Moreover, \(P\)-a.s.
	\(\lle Y\rre_{\tau_t} = \lle Y_\tau\rre_t\) for all \(t \in [0, T]\) and for all \(\mathbb{R}^d\)-valued \(\mathbf{A}\)-predictable processes \(H = (H_t)_{t \in [0, T]}\) with \(P\)-a.s. \(\int_0^T \la c_s H_s, H_s\ra \dd A_s < \infty\), where \(\lle Y\rre_t = \int_0^t c_s \dd A_s\), we have \(P\)-a.s. \(\int_0^t \la c_{\tau_s} H_{\tau_s}, H_{\tau_s}\ra \dd A_{\tau_s} < \infty\) and 
	\begin{align}
	\int_0^{\tau_t} H_s \dd Y_s = \int_0^t H_{\tau_s} \dd Y_{\tau_s},\quad t \in [0, T].
	\end{align}
\end{proposition}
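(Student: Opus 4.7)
The plan is to prove the three assertions of Proposition \ref{prop: TC Mg} in the order in which they are stated, with each step building on the previous one. Throughout, I will exploit the fact that the map $t \mapsto \tau_t$ is a continuous strictly increasing bijection $[0,T] \to [\tau_0, \tau_T]$ whose inverse is $t \mapsto C_t$, so pointwise identities like $\tau_{C_u} = u$ and $\{\tau_t \geq u\} = \{C_u \leq t\}$ are available.

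First I would show that $Y_\tau$ is a local $(\mathbf{A}_\tau, P)$-martingale. Since $\tau_t$ is an $\mathbf{A}$-stopping time bounded by $t$, the variable $Y_{\tau_t}$ is $\mathcal{A}_{\tau_t}$-measurable, so $Y_\tau$ is $\mathbf{A}_\tau$-adapted. If $Y$ is a uniformly integrable $\mathbf{A}$-martingale, the optional stopping theorem applied to the bounded $\mathbf{A}$-stopping times $\tau_s \leq \tau_t$ gives $E[Y_{\tau_t}\mid \mathcal{A}_{\tau_s}] = Y_{\tau_s}$, so $Y_\tau$ is an $\mathbf{A}_\tau$-martingale. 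For the general case, pick a reducing sequence $(T_n)$ for $Y$ and set $S_n \triangleq C_{T_n}$; then $\{S_n \leq t\} = \{T_n \leq \tau_t\} \in \mathcal{A}_{\tau_t}$, so $S_n$ is an $\mathbf{A}_\tau$-stopping time, and the continuity of $\tau$ and $C$ yields $\tau_{t\wedge S_n} = \tau_t \wedge T_n$. Hence $(Y_\tau)^{S_n}_{t} = Y^{T_n}_{\tau_t}$, which is an $\mathbf{A}_\tau$-martingale by the uniformly integrable case, proving that $Y_\tau$ is a local martingale.

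Next I would establish the identity $\lle Y_\tau\rre_t = \lle Y\rre_{\tau_t}$. The process $\lle Y\rre_{\tau_\cdot}$ is continuous, adapted to $\mathbf{A}_\tau$, and increasing, and by step one applied to the local martingale $YY^* - \lle Y\rre$ (entry-wise, adapted to the matrix-valued case by polarization) the process $Y_\tau Y_\tau^* - \lle Y\rre_{\tau_\cdot}$ is a local $(\mathbf{A}_\tau, P)$-martingale. The uniqueness of the predictable quadratic variation of the continuous local martingale $Y_\tau$ then forces $\lle Y_\tau\rre = \lle Y\rre_{\tau_\cdot}$.

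The main work is in the stochastic integral identity
\begin{equation}
\int_0^{\tau_t} H_s \,\dd Y_s = \int_0^t H_{\tau_s}\, \dd Y_{\tau_s}.
\end{equation}
The plan is to verify it first for elementary $H$ of the form $H_s = \xi\,\mathbf{1}_{(u,v]}(s)$ with $\xi$ bounded and $\mathcal{A}_u$-measurable. Using $\tau_s \in (u,v]$ iff $s \in (C_u, C_v]$, a direct computation gives both sides equal to $\xi(Y_{\tau_t \wedge v} - Y_{\tau_t \wedge u})$, noting that $\tau_{t\wedge C_v} = \tau_t \wedge v$ and similarly for $u$. By linearity this extends to all simple predictable $H$. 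To pass to general $H$, I would combine step two with the already-cited Proposition V.1.4 (applied to the continuous finite-variation process $\lle Y\rre$): this gives the $L^2$-isometry
\begin{equation}
\left\lle \int_0^\cdot H\, \dd Y\right\rre_{\tau_t} = \int_0^{\tau_t} \la c_s H_s, H_s\ra \,\dd A_s = \int_0^t \la c_{\tau_s} H_{\tau_s}, H_{\tau_s}\ra \,\dd A_{\tau_s},
\end{equation}
which simultaneously yields the integrability statement $\int_0^t \la c_{\tau_s} H_{\tau_s}, H_{\tau_s}\ra\,\dd A_{\tau_s} < \infty$ and lets one approximate $H$ in $L^2(\dd\lle Y\rre)$ by simple processes. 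A standard localisation argument removes the square-integrability assumption.

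The main obstacle will be the bookkeeping around the filtrations: verifying that $\mathbf{A}_\tau$ is right-continuous (which follows from right-continuity of $\mathbf{A}$ and continuity of $\tau$), that the localizing sequences transfer correctly under the time change, and most notably that the approximation of $H$ by simple integrands is compatible with both the original filtration $\mathbf{A}$ (for the left-hand side) and the time-changed filtration $\mathbf{A}_\tau$ (for the right-hand side), which is handled by checking that $H$ is $\mathbf{A}$-predictable iff $H_\tau$ is $\mathbf{A}_\tau$-predictable via the monotone class theorem.
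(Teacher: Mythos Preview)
The paper does not actually give a proof of this proposition; it simply refers the reader to Proposition V.1.5 in \cite{RY}. Your proposal reproduces essentially the standard argument found there: optional stopping plus localisation via $S_n = C_{T_n}$ for the local-martingale property, uniqueness of the predictable quadratic variation for the identity $\lle Y_\tau\rre = \lle Y\rre_{\tau}$, and the elementary--to--general passage via the isometry for the stochastic-integral formula. So your approach is correct and matches the cited source; there is nothing to compare against in the paper itself.

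One minor point of care: when you define $S_n \triangleq C_{T_n}$, the process $C$ is only defined on $[0,T]$ while $T_n$ may a priori take values up to $T$ (or $+\infty$); you should work with $T_n \wedge T$ and note that $C_{T_n \wedge T}$ may exceed $T$, so the relevant stopping time in the $\mathbf{A}_\tau$-filtration is really $S_n \wedge T$. This is purely bookkeeping and does not affect the argument.
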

\section{Lemmata by Le Gall, Bihari and Gronwall}\label{Appendix: Le Gall und co}
For an \(\mathbb{R}\)-valued continuous semimartingale \(Y = (Y_t)_{t \in [0, \infty)}\) denote the local time in the origin by \((L^0(Y)_t)_{t \in [0, \infty)}\). The following lemma is due to \cite{LeGall1983}. 
\begin{lemma}\label{lem: Le Gall}
	If there exists a Borel function \(\rho \colon (0, \infty) \to (0, \infty)\) such that \(\int_0^\epsilon \frac{1}{\rho(x)}\dd x = \infty\) for all \(\epsilon > 0\) and a.s.
	for all \(t \in [0, \infty)\)
	\begin{align}
	\int_0^t \frac{1}{\rho(Y_s)} \1_{\{Y_s > 0\}}\dd\hspace{0.05cm} \lle Y\rre_s < \infty,
	\end{align}
	then a.s. \(L^0(Y)_t = 0\) for all \(t \in [0, \infty)\).
\end{lemma}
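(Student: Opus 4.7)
The plan is to compare the given semimartingale with the Tanaka decomposition of $Y^+$ via a smooth approximation of $x \mapsto x^+$ built from the function $\rho$. First, using $\int_0^\epsilon \rho^{-1}(x)\,dx = \infty$, I would pick a strictly decreasing sequence $a_n \searrow 0$ with $a_0 = 1$ and $\int_{a_{n+1}}^{a_n} \rho^{-1}(x)\,dx = n$. Then I would choose continuous functions $\psi_n : \mathbb{R} \to [0,\infty)$ supported in $(a_{n+1}, a_n)$ with $\psi_n(x) \leq 2/(n\rho(x))$ for all $x$ and $\int_{a_{n+1}}^{a_n} \psi_n(x)\,dx = 1$. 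Setting
\begin{align*}
\phi_n(x) \triangleq \int_0^{x^+} \int_0^z \psi_n(u)\,du\,dz,
\end{align*}
I obtain a nonnegative $C^2$-function on $\mathbb{R}$ with $\phi_n(x) \nearrow x^+$, $\phi'_n(x) \to \mathbf{1}_{\{x > 0\}}$ and $\phi''_n = \psi_n$.

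Next, I would apply It\^o's formula to $\phi_n(Y)$, yielding
\begin{align*}
\phi_n(Y_t) = \phi_n(Y_0) + \int_0^t \phi'_n(Y_s)\,dY_s + \frac{1}{2} \int_0^t \psi_n(Y_s)\,d\lle Y\rre_s.
\end{align*}
The crucial observation is that, since $\psi_n$ is supported in $(0,\infty)$ and dominated by $2/(n\rho)$, the Lebesgue--Stieltjes integral in the finite variation part is bounded above by
\begin{align*}
\frac{1}{n}\int_0^t \frac{1}{\rho(Y_s)} \mathbf{1}_{\{Y_s > 0\}} d\lle Y\rre_s,
\end{align*}
which tends to $0$ almost surely as $n \to \infty$ by the hypothesis. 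For the stochastic integral term, I would pass to the limit along a localizing sequence for $Y$ using the bound $|\phi'_n| \leq 1$ together with dominated convergence for stochastic integrals. Passing to the limit in the remaining terms via monotone convergence, I arrive at the identity $Y_t^+ = Y_0^+ + \int_0^t \mathbf{1}_{\{Y_s > 0\}}\,dY_s$ almost surely for every $t$.

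Finally, comparing this identity with Tanaka's formula
\begin{align*}
Y_t^+ = Y_0^+ + \int_0^t \mathbf{1}_{\{Y_s > 0\}}\,dY_s + \frac{1}{2} L^0(Y)_t
\end{align*}
immediately forces $L^0(Y)_t = 0$ almost surely, and by continuity of $t \mapsto L^0(Y)_t$ this holds simultaneously for all $t \in [0,\infty)$. The main obstacle is constructing $\psi_n$ with all three properties (support in $(a_{n+1},a_n)$, unit mass, and the pointwise bound $\psi_n \leq 2/(n\rho)$) compatibly; the constant $2$ comes from the fact that the total mass of $\rho^{-1}$ on $(a_{n+1},a_n)$ is $n$, so one can choose, for instance, $\psi_n(x) = 2(a_n - x)(x - a_{n+1})/[\rho(x)\,I_n]$ on the support with a suitable normalising constant $I_n$, after verifying the pointwise bound --- this is the one place where one needs to exercise a little care rather than invoking a standard fact.
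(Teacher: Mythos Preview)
The paper does not prove this lemma; it merely attributes it to Le~Gall (1983) and states it for later use. Your argument is precisely the classical Yamada--Watanabe/Le~Gall smoothing argument (cf.\ Revuz--Yor, Lemma~IX.3.3, or Karatzas--Shreve, Prop.~5.2.13) and the overall strategy---build $\phi_n\nearrow x^+$ with $\phi_n''=\psi_n$ supported in $(a_{n+1},a_n)$ and $\psi_n\le 2/(n\rho)$, apply It\^o, kill the second-order term via the hypothesis, and compare with Tanaka---is correct.

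One genuine caution concerns your closing explicit formula $\psi_n(x)=2(a_n-x)(x-a_{n+1})/[\rho(x)I_n]$. Since $\rho$ is only Borel, this $\psi_n$ need not be continuous, so $\phi_n$ need not be $C^2$ and the classical It\^o formula does not apply as written. Moreover, the claimed pointwise bound $\psi_n\le 2/(n\rho)$ reduces to $n(a_n-x)(x-a_{n+1})\le I_n$ for all $x$, which is not automatic for arbitrary Borel $\rho$. Two clean fixes: either (i) drop the continuity requirement on $\psi_n$, take $\psi_n(x)=\tfrac{1}{n\rho(x)}\mathbf 1_{(a_{n+1},a_n)}(x)$ (which has unit mass and trivially satisfies the bound), and invoke the It\^o--Tanaka formula for the convex function $\phi_n$ together with the occupation-times formula in place of classical It\^o; or (ii) use Lusin's theorem to pass to a compact subset of $(a_{n+1},a_n)$ carrying at least half of the $\rho^{-1}$-mass on which $\rho$ is continuous, and build a genuinely continuous $\psi_n$ there. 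Either route closes the gap without altering the rest of your argument.
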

The following lemma is due to \cite{Bihari1956}. 
\begin{lemma}\label{lem: Bihari}
	For \(a < b\) let \(u \colon [a, b] \to [0, \infty)\) and \(\rho \colon [0, \infty) \to [0, \infty)\) be continuous functions such that for all \(t \in [a, b]\)
	\begin{align}
	u(t) \leq \int_a^t \rho (u(s))\dd s,\quad t \in [a, b],
	\end{align}
	and \(\rho\) is strictly increasing with \(\rho(0) = 0\) and \(\int_0^\epsilon \frac{1}{\rho(x)}\dd x = \infty\) for all \(\epsilon > 0\),
	then \(u(t) = 0\) for all \(t \in [a, b]\).
\end{lemma}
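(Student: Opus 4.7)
The plan is a contradiction argument in the spirit of the classical proof of Gronwall's inequality, but using separation of variables to absorb the nonlinearity of $\rho$. First I would introduce the auxiliary function $\phi(t) \triangleq \int_a^t \rho(u(s))\dd s$, which is $C^1$, non-decreasing, vanishes at $a$, and satisfies $u \leq \phi$ by the hypothesis. Since $\rho$ is monotone, this majorization transfers to $\rho$ and yields the pointwise differential inequality $\phi'(t) = \rho(u(t)) \leq \rho(\phi(t))$, which will be the workhorse of the argument.

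Suppose, for contradiction, that $u(t_0) > 0$ for some $t_0 \in (a,b]$; the boundary case $t = a$ is trivial because $u(a) \leq \int_a^a \rho(u(s)) \dd s = 0$ and $u \geq 0$. Then $\phi(t_0) \geq u(t_0) > 0$, and I would isolate a maximal interval on which $\phi$ stays strictly positive by setting $t_1 \triangleq \sup\{s \in [a, t_0] \colon \phi(s) = 0\}$. Continuity of $\phi$ then gives $\phi(t_1) = 0$ and $\phi > 0$ on $(t_1, t_0]$. This localization is essential because it ensures $\rho(\phi) > 0$ on $(t_1, t_0]$, so that $1/\rho(\phi)$ is well-defined and continuous there.

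The heart of the proof is a separation of variables. I would define the primitive $G(x) \triangleq \int_x^{\phi(t_0)} \dd y/\rho(y)$ for $x \in (0, \phi(t_0)]$, which is finite on this interval with $G(\phi(t_0)) = 0$, and satisfies $G(x) \to +\infty$ as $x \downarrow 0$ by the divergence hypothesis on $1/\rho$ near the origin. Dividing the differential inequality by $\rho(\phi)$ on $[t_1 + \epsilon, t_0]$ for small $\epsilon > 0$ and invoking the chain rule would give $G(\phi(t_1 + \epsilon)) \leq t_0 - t_1 - \epsilon \leq b - a$, uniformly in $\epsilon$. Letting $\epsilon \downarrow 0$, continuity of $\phi$ at $t_1$ forces $\phi(t_1 + \epsilon) \to 0$ so the left-hand side blows up to $+\infty$, contradicting the finite bound $b - a$ on the right.

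The only delicate point is the isolation step that produces the interval $(t_1, t_0]$ on which $\phi$ is strictly positive; without it the integrand $1/\rho(\phi)$ is singular at $\rho(0) = 0$ and the separation-of-variables calculation cannot get off the ground. Everything else reduces to the chain rule and the divergence assumption on $1/\rho$ at zero, which supplies the contradiction automatically.
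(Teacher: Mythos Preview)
Your argument is correct and is essentially the standard proof of Bihari's lemma via separation of variables. The paper itself does not prove this statement; it merely records the lemma and attributes it to \cite{Bihari1956}, so there is no proof in the paper to compare against beyond noting that your approach is the classical one.
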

Finally, we also state a Gronwall-type lemma. For a proof we refer to Lemma 4.13 in \cite{LS}.
\begin{lemma}\label{lem: Gronwall}
	Let \(c_0\) and \(c_1\) be non-negative constants, \(u \colon [0, T] \to [0, \infty)\) be a bounded Borel function and \(v \colon [0, T] \to [0, \infty)\) be a Borel function such that \(\int_0^T v(s)\dd s < \infty\) and 
	\begin{align}
	u (t) \leq c_0 + c_1 \int_0^t v(s) u(s) \dd s, \quad t \in [0, T].
	\end{align}
	Then, 
	\begin{align}
	u(t) \leq c_0 \exp \left( c_1 \int_0^tv(s)\dd s\right),\quad t \in [0, T].
	\end{align}
\end{lemma}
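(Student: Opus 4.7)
The plan is the classical Picard-iteration argument for Gronwall's inequality. To begin, I would set $M \triangleq \sup_{s \in [0, T]} u(s)$, which is finite by the boundedness assumption on $u$, and $V(t) \triangleq \int_0^t v(s) \dd s$, which is non-decreasing and bounded above by $\int_0^T v(s)\dd s < \infty$.

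The first main step is to establish, by induction on $n \in \mathbb{N}$, the bound
\begin{align*}
u(t) &\leq c_0 \sum_{k=0}^{n-1} \frac{(c_1 V(t))^k}{k!} \\
&\qquad + c_1^n \int_0^t v(s_1) \int_0^{s_1} v(s_2) \cdots \int_0^{s_{n-1}} v(s_n) u(s_n) \dd s_n \cdots \dd s_1
\end{align*}
for all $t \in [0, T]$. The base case $n = 1$ is exactly the hypothesis of the lemma, and the inductive step is obtained by substituting the hypothesis for $u(s_n)$ under the innermost integral and separating the constant term from the $n$-fold iterated integral.

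Next, a standard Fubini argument on the simplex $\{(s_1, \dots, s_n) \colon 0 \leq s_n \leq \cdots \leq s_1 \leq t\}$ yields
\begin{equation*}
\int_0^t v(s_1) \int_0^{s_1} v(s_2) \cdots \int_0^{s_{n-1}} v(s_n) \dd s_n \cdots \dd s_1 = \frac{V(t)^n}{n!},
\end{equation*}
so the remainder term is dominated by $M (c_1 V(T))^n / n!$. Letting $n \to \infty$ sends this remainder to zero, while the partial sums converge to $c_0 \exp(c_1 V(t))$, which gives the claimed bound.

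There is no real obstacle here; the argument is elementary and the only identity that requires an actual calculation is the simplex formula for the iterated integrals of $v$, which is a routine application of Fubini's theorem. The boundedness of $u$ is used precisely to control the remainder uniformly in $n$.
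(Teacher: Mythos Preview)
Your argument is correct: the Picard-iteration scheme, the induction, the simplex identity for the iterated integral of \(v\), and the remainder estimate \(M(c_1 V(T))^n/n!\to 0\) are all valid, and the boundedness of \(u\) is used exactly where it is needed. The paper does not give its own proof of this lemma but simply refers to Lemma~4.13 in Liptser--Shiryaev, so there is nothing further to compare against.
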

\section*{Acknowledgments}
The author thanks the referee for her/his time and effort devoted to the evaluation of 
the manuscript and for her/his very useful remarks.

%\bibliographystyle{abbrv}
%authordate1}
%\bibliography{References}

\end{document}